\newtheorem{theorem}{Theorem}
\newtheorem{lemma}[theorem]{Lemma}
\newtheorem{corollary}[theorem]{Corollary}
\newtheorem{definition}[theorem]{Definition}
\newtheorem{claim}[theorem]{Claim}
\newtheorem*{remark}{Remark}
\newtheorem{proposition}[theorem]{Proposition}
\newtheorem{condition}{Condition}
\crefname{condition}{Condition}{Conditions}
\crefname{proposition}{Proposition}{Propositions}
\crefname{lemma}{Lemma}{Lemmas}
\crefname{theorem}{Theorem}{Theorems}
\def\Lovasz{Lov\'asz\xspace}
\newcommand{\abs}[1]{\left\vert#1\right\vert}
\newcommand{\set}[1]{\left\{#1\right\}}
 \newcommand{\eps}{\varepsilon}
\renewcommand{\mid}{\;\middle\vert\;} \newcommand{\cmid}{\,:\,}
\newcommand{\defeq}{\triangleq}
\newcommand{\id}[1]{\mathbb{1}\left[#1\right]}
\def\*#1{\mathbf{#1}} \def\+#1{\mathcal{#1}} \def\-#1{\mathrm{#1}} \def\^#1{\mathbb{#1}} \def\$#1{\mathtt{#1}}
\newcommand{\bb}{\mathbb}
\def\!#1{\mathtt{#1}}
\def\@#1{\mathscr{#1}}
\def\EG{\emph{e.g.}\xspace}
\def\IE{\emph{i.e.}\xspace}
\def\bad{\!{bad}}
\def\cp{\!{cp}}
\def\good{\!{good}}
\def\poly{\mathrm{poly}}
\def\symbolwidth{\phantom{{}={}}}
\newcommand{\wh}[1]{\widehat{#1}}
\newcommand{\BuildTime}{\poly\left(\Delta^{\Delta\ell}\right)}
\newcommand{\Couple}{\!{Couple}}
\newcommand{\Ham}{\!{Ham}}
\newcommand{\RunningTimeExponent}{\poly\left(\Delta(G), 1/B(\Phi)\right)}
\newcommand{\zero}{\boldsymbol{0}}
\newcommand{\vecb}{\boldsymbol{b}}
\newcommand{\vecf}{\boldsymbol{f}}
\newcommand{\seqS}{\boldsymbol{s}}
\renewcommand{\Pr}[2][]{ \ifthenelse{\isempty{#1}}
  {\mathbf{Pr}\left[#2\right]} {\mathbf{Pr}_{#1}\left[#2\right]} }
\newcommand{\E}[2][]{ \ifthenelse{\isempty{#1}}
  {\mathbf{E}\left[#2\right]}
  {\mathop{\mathbf{E}}_{#1}\left[#2\right]} }
\newcommand{\Var}[2][]{ \ifthenelse{\isempty{#1}}
  {\mathbf{Var}\left[#2\right]}
  {\mathop{\mathbf{Var}}_{#1}\left[#2\right]} }
\newcommand{\qgl}[1]{{\color{purple}{#1}}}
\title{FPTAS for Holant Problems with Log-Concave Signatures}
    \author{Author(s)}
    \author[1]{Kun He}
    \author[2]{Zhidan Li}
    \author[2]{Guoliang Qiu}
    \author[2]{Chihao Zhang}
    \affil[1]{Renmin University of China}
    \affil[2]{Shanghai Jiao Tong University}
\date{\today}
\begin{document}

\maketitle

\begin{abstract}
     For an integer $b\ge 0$, a $b$-matching in a graph $G=(V,E)$ is a set $S\subseteq E$ such that each vertex $v\in V$ is incident to at most $b$ edges in $S$.  We design a fully polynomial-time approximation scheme (FPTAS) for counting the number of $b$-matchings in graphs with bounded degrees. Our FPTAS also applies to a broader family of counting problems, namely Holant problems with log-concave signatures.

     Our algorithm is based on Moitra's linear programming approach (JACM'19). Using a novel construction called the \emph{extended coupling tree}, we derandomize the coupling designed by Chen and Gu (SODA'24).
\end{abstract}

\section{Introduction}

Holant problems, namely the read-twice counting CSPs, are an  expressive framework for counting problems. An instance of (Boolean domain) Holant problem is a tuple $\Phi=(G=(V,E),\set{f_v}_{v\in V})$ where each $f_v:\set{0,1}^{E_v}\to \bb C$ is a constraint function (usually called a signature) with domain $E_v$, the set of edges incident to $v$. The problem is to compute the \emph{Holant}, or the \emph{partition function} defined as 
\[
    \-{Holant}(\Phi) = \sum_{\sigma\in\set{0,1}^E} \prod_{v\in V}f_v\left(\sigma(E(v))\right),
\]
where $\sigma(E_v)$ is the restriction of $\sigma$ on $E_v$. The framework encodes a broad family of counting problems including counting matchings ($f_v(\tau) = \id{\abs{\tau}\le 1}$)\footnote{For an assignment $\sigma\in\set{0,1}^S$, we use $\abs{\sigma}$ to denote its Hamming weight, namely $\sum_{i\in S}\sigma(i)$.}, perfect matchings ($f_v(\tau) = \id{\abs{\tau}= 1}$), counting edge covers ($f_v(\tau) = \id{\abs{\tau}\ge 1}$) etc. 

Despite the success in classifying the exact counting complexity of Holant problems (\EG,~\cite{CGW16,CGW16edgecoloring,SC20,Bac21,CF23}), understanding their approximability remains a long-term yet incomplete task (\EG,~\cite{LWZ14,HLZ16,CLLY20,GLLZ21Holant,CLV22,HQZ23,CG24bMatching}). Even for \emph{symmetric} constraint functions with Boolean domains, namely those $f_v(\tau)$ whose value only depends on the Hamming weights of $\tau$, the complete picture is far from clear. 

One notable benchmark problem for algorithmic techniques is the problem of counting $b$-matchings, which asks for the number of $\set{0,1}$-assignments to edges such each vertex is adjacent to \emph{at most} $b$ edges with value $1$. It can be phrased in the Holant framework where each $f_v(\tau)=\id{\abs{\tau}\le b}$. Clearly, when $b=1$, it is the counting matching problem, which has been a central problem in approximate counting. However, when $b$ becomes larger, many good properties of matchings break down, posing new challenges in algorithm design. The problem was perhaps first studied in the work of ~\cite{HLZ16}, in which the rapid mixing of a particular Markov chain was established for $b\le 7$ using the method of canonical paths, and the chain can be used to uniformly sample from $b$-matchings. 
Recently, using a simple and neat coupling argument, among many other things, the \emph{spectral independence} property for the uniform distribution of $b$-matching was established in ~\cite{CG24bMatching}, which implies the rapid mixing of Glauber dynamics for sampling $b$-matchings. As a result, one obtains a fully polynomial-time randomized approximation scheme (FPRAS) for counting $b$-matchings for any $b\ge 0$.

In this work, we focus on \emph{deterministic} approximate counting algorithms. There are a few popular techniques for designing deterministic algorithms for Holant-type problems, including the method of correlation decay~\cite{BGKNT07,LWZ14,LLL14,LLZ14} and polynomial interpolation~\cite{GLLZ21Holant,BCW22,CFFGL22}, 
which result in fully polynomial-time approximation schemes (FPTAS) for counting problems. Both the methods of correlation decay and polynomial interpolation are successful for counting matchings and many other Holant problems.

However, for $b$ larger than $1$, the problem of $b$-matching resists both methods: the problem lacks a concise and lossless recursion for computing marginals necessary to apply the correlation decay technique, and it is challenging to determine the location of zeros for the partition function in order to use the polynomial interpolation method (in particular, the $H_\eps$-stable property in ~\cite{GLLZ21Holant} for local polynomials does not hold for a general $d$). 

We design an FPTAS for $b$-matching and more generally Holant problems with log-concave signature by developing the method of linear programming, invented by Moitra in \cite{Moitra19}, which was previously only applied to counting problems on hypergraphs under \Lovasz-local-lemma-type conditions~\cite{Moitra19,GLLZ19,GGGY21,JPV21,WY24}. We will summarize our main results in \Cref{sec:main-results} and explain our technical contributions in \Cref{sec:techniques}, respectively.

\subsection{Main results}\label{sec:main-results}

\paragraph{\ensuremath{\boldsymbol{b}}-matchings}

Given a graph $G = (V, E)$, recall that $E_v = \set{e \in E \cmid \mbox{$e$ is incident to $v$}}$ is the collection of all edges incident to $v$ for each $v \in V$. For a vector $\vecb = \set{b_v}_{v \in V} \in \mathbb{N}_{> 0}^V$, we say that $S \subseteq E$ is a \emph{$\vecb$-matching of $G$} if $\abs{S \cap E_v} \le b_v$ for every $v \in V$. Note that when $b_v = b$ for every $v \in V$ we obtain the typical $b$-matchings.

\begin{theorem}[Informal version of~\Cref{thm:formal-counting-b-matchings}] \label{thm:counting-b-matchings}
    Given any positive integers $\Delta$ and $b$, there exists an FPTAS for counting the number of $\vecb$-matchings for any graph with maximum degree $\Delta$ and any $\vecb = \set{b_v}_{v \in V}$ satisfying $b_v \le b$ for every $v \in V$.
\end{theorem}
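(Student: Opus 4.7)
The plan is to follow Moitra's linear programming paradigm~\cite{Moitra19}, adapted to Holant problems via the coupling of Chen and Gu~\cite{CG24bMatching}. By standard self-reducibility, approximating the number of $\vecb$-matchings reduces, with polynomial overhead in $n$ and $1/\eps$, to deterministically approximating single-edge marginals of the form $\Pr{e \in \*S}$ under the uniform distribution on $\vecb$-matchings of subgraphs of $G$. So the real task is: in time $\poly(n,1/\eps)$, and treating $\Delta, b$ as constants, produce an $\eps$-additive estimate of any such marginal.

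The first step is to revisit the coupling of~\cite{CG24bMatching}: given two $\vecb$-matchings differing at one edge, it resolves the disagreement by propagating local flips along alternating paths, with an exponential tail on the total number of edges touched. Instead of \emph{running} this random coupling, I would \emph{unfold} it into the \emph{extended coupling tree} $\+T_\ell$ of depth $\ell$ rooted at the target edge, whose internal nodes enumerate all ``disagreement states'' reachable in at most $\ell$ propagation steps and whose children list every local move the coupling could possibly make. Because each step touches only $O(\Delta)$ edges and branches in at most $O(\Delta b)$ ways, $\+T_\ell$ has size $\exp(O(\ell))$. I would then set up a linear program whose variables are probabilities of partial configurations on nodes of $\+T_\ell$, with constraints enforcing the hard $\vecb$-matching constraint locally at each node, parent-child marginal consistency, and normalization on every subtree. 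The true distribution restricted to $\+T_\ell$ is a feasible solution, so the LP is always feasible, and the root marginal is read off as a linear functional of any feasible point.

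The heart of the argument, and also the main obstacle I anticipate, is showing that \emph{every} feasible LP solution --- not just the true one --- yields a root-marginal estimate within $\exp(-\Omega(\ell))$ of the truth. The tree must be rich enough that the LP constraints force any feasible point to inherit the contraction of the Chen--Gu coupling, yet structured enough that the error analysis proceeds locally, node by node. This is where log-concavity of the signatures enters: it ensures that every local conditional marginal used as an LP constraint is well-behaved, which, combined with the exponential tail of the coupling, propagates any boundary discrepancy at depth $\ell$ to the root with exponentially small weight. Once this is established, choosing $\ell = O(\log(n/\eps))$ keeps $\+T_\ell$ of polynomial size, the ellipsoid method solves the LP in polynomial time, and the FPTAS follows. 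The generalization to arbitrary Holant problems with log-concave signatures is then essentially free, since neither the coupling nor the LP analysis uses the specific form $f_v(\tau)=\id{\abs{\tau}\le b_v}$ beyond its log-concavity.
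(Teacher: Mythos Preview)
Your high-level plan matches the paper's, and you correctly flag the crux: showing that \emph{every} feasible LP point, not just the true coupling, is accurate. But the proposal skips the step where this is actually hard. The Chen--Gu coupling does not pick an arbitrary incident edge at the discrepancy vertex $v$; it must pick an \emph{amenable} one (with the marginals ordered the right way), and the paper exhibits an explicit log-concave instance (\Cref{sec:counter-example}) where a given incident edge is not amenable. Identifying the amenable edge requires computing the very marginals you are trying to estimate, so you cannot build the ordinary coupling tree. The extended tree therefore branches on \emph{every} edge $e\in E_v^\sigma$, with auxiliary fractional-choice variables $\widehat p^{\sigma}_{\sigma,\tau,\seqS,e}$ summing to $\widehat p^{\sigma}_{\sigma,\tau,\seqS}$. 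But then a feasible LP point may put all its mass on non-amenable edges, for which there is no single-step contraction. Your ``parent-child marginal consistency'' and ``log-concavity makes local marginals well-behaved'' do not rule this out.

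The missing mechanism is the paper's Constraint~\ref{item-fifth-LP} (inequality~\eqref{eqn-hat-error-bound}): for each internal node $(\sigma,\tau,\seqS)$, the total LP mass on the descendants in $\+D(\sigma,\tau,\seqS)$---those reached by assigning \emph{all} of $E_v^\sigma$ to $0$---must be at least $B\cdot\widehat p^{\sigma}_{\sigma,\tau,\seqS}$, where $B=B(\Phi)$ is the marginal lower bound coming from log-concavity (\Cref{lem:marginal-bound}). The true coupling satisfies this (\Cref{lem:coupling-error}) regardless of which edge is amenable, because the all-zero extension always has conditional probability at least $B$; and once imposed as a hard LP constraint, it forces the $(1-B^2)^\ell$ decay of the bad-leaf mass for \emph{every} feasible solution (\Cref{lem:LP-truncated-error}). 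This constraint is not ``local, node by node'' in your sense---it spans up to $\Delta$ levels of the tree---and isolating it is the paper's main technical contribution. Without it, or something playing the same role, your LP has feasible points arbitrarily far from the true marginal, and the last paragraph's claim that the general log-concave case is ``essentially free'' does not go through.
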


Closely to $\vecb$-matchings, another problem is \emph{counting $\vecb$-edge covers}. For an edge subset $S \subseteq E$, we say that $S$ is a \emph{$\vecb$-edge cover of $G$} if $\abs{S \cap E_v} \ge b_v$ for every $v \in V$. Note that for every $\vecb$-edge cover $S$, its complement $E \setminus S$ forms a $\vecb'$-matching where $\vecb' = \set{b_v'}_{v \in V}$ is defined as $b_v' = \deg_G(v) - b_v$. Then one can easily derive the following result for counting $\vecb$-edge covers as an immediate corollary of counting $\vecb'$-matchings.

\begin{corollary}[$\vecb$-edge covers]
    Given any positive integers $\Delta$ and $b \le \Delta$, there exists an FPTAS for counting the number of $\vecb$-edge covers for any graph with maximum degree $\Delta$ and any $\vecb = \set{b_v}_{v \in V}$ satisfying $b_v \ge b$ for every $v \in V$.
\end{corollary}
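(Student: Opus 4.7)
The plan is to derive the corollary from \Cref{thm:counting-b-matchings} via the complement bijection already sketched in the excerpt. Given an input graph $G = (V, E)$ with maximum degree $\Delta$ and a vector $\vecb = \set{b_v}_{v \in V}$ satisfying $b_v \ge b$ for every $v \in V$, I would define a companion vector $\vecb' = \set{b_v'}_{v \in V}$ by $b_v' \defeq \deg_G(v) - b_v$, and work with the involution $S \mapsto E \setminus S$ on the power set $2^E$.

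The first step is to verify the bijection between $\vecb$-edge covers of $G$ and $\vecb'$-matchings of $G$. For any $S \subseteq E$ and any $v \in V$ one has $\abs{S \cap E_v} + \abs{(E \setminus S) \cap E_v} = \abs{E_v} = \deg_G(v)$, so the condition $\abs{S \cap E_v} \ge b_v$ is equivalent to $\abs{(E \setminus S) \cap E_v} \le \deg_G(v) - b_v = b_v'$. Hence $S$ is a $\vecb$-edge cover if and only if $E \setminus S$ is a $\vecb'$-matching, and complementation is a bijection between the two sets, so their cardinalities coincide.

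The second step is to check that the reduction produces an instance in the regime covered by \Cref{thm:counting-b-matchings}. Since $b_v \ge b$ and $\deg_G(v) \le \Delta$, we have $0 \le b_v' \le \Delta - b$ for every $v \in V$, and $\Delta - b$ is a fixed positive integer (depending only on $\Delta$ and $b$, as required). Therefore the instance $(G, \vecb')$ satisfies the hypothesis of \Cref{thm:counting-b-matchings} with the degree bound $\Delta$ and the uniform ceiling $\Delta - b$.

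The final step is to invoke \Cref{thm:counting-b-matchings} on $(G, \vecb')$ to obtain a deterministic $(1 \pm \eps)$-approximation of the number of $\vecb'$-matchings in time $\poly(\abs{V}, 1/\eps)$, and then return the same value as the approximation of the number of $\vecb$-edge covers. I do not anticipate any real obstacle: the whole argument is a syntactic reduction and the only thing to guard against is ensuring that the parameters passed to \Cref{thm:counting-b-matchings} depend only on $\Delta$ and $b$, which the bound $b_v' \le \Delta - b$ already guarantees.
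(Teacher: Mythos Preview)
Your reduction via complementation is exactly the argument the paper intends; the corollary is stated there without proof beyond the one-line observation preceding it, and your write-up fills that in faithfully.

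One small technical point to tidy up: the inequalities $b_v \ge b$ and $\deg_G(v) \le \Delta$ only give $b_v' \le \Delta - b$, not $b_v' \ge 0$, and when $b = \Delta$ the ceiling $\Delta - b$ is zero rather than positive. Both \Cref{thm:counting-b-matchings} and its formal version require $b_v' \ge 1$. The fix is routine---if some $b_v > \deg_G(v)$ the edge-cover count is zero, and any vertex with $b_v' = 0$ forces all its incident edges into the cover, so one can pin those edges and recurse on the residual instance---but as written the invocation of \Cref{thm:counting-b-matchings} is not quite licensed at the boundary.
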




\paragraph{Holant problems with log-concave signatures}

Beyond counting the number of $b$-matchings, we also consider the family of Holant problems with \emph{Boolean domain symmetric log-concave signatures}. 
Recall that an instance of the Holant problem is a tuple $ \Phi = \left(G = (V, E), \vecf = \set{f_v}_{v \in V}\right)$ where for each $v \in V$, $f_v:\set{0,1}^{E_v}\rightarrow \^C$. For each $v\in V$, the constraint function $f_v$ is usually referred to as the \emph{signature} on $v$.



In this work, we only consider signatures taking non-negative values. As a result, an instance $\Phi$ induces the Gibbs distribution $\mu=\mu_\Phi$ over $\Omega = \Omega_\Phi=\set{0,1}^{E}$ such that
\begin{equation}\label{def-holant-mu}
\begin{aligned}
\forall \sigma\in \Omega, \quad \mu(\sigma)=\frac{1}{Z}  \prod_{v \in V} f_v\left(\sigma(E_v)\right),
\end{aligned}
\end{equation}
where $Z = \-{Holant}(\Phi)$ is the partition function of $\Phi$.

Recall that a $d$-ary Boolean domain function $f$ is symmetric if its value only depends on the Hamming weight of the input. We can represent it in the compact way $f=[f(0),f(1),\dots, f(d)]$ where $f(i)\ge 0$ is the value of $f$ on inputs with Hamming weight $i$. We say $f$ is \emph{log-concave} if and only if it satisfies: 
\begin{enumerate}
    \item $f(k)^2 \ge f(k - 1) f(k + 1)$ for every $1 \le k \le d - 1$;
    \item If $f(k_1) > 0$ and $f(k_2) > 0$ for some $0 \le k_1 \le k_2 \le d$, then for every $k_1 \le j \le k_2$, $f(j) > 0$.
\end{enumerate}
We say that $\Phi = \left(G = (V, E), \set{f_v}_{v \in V}\right)$ is log-concave if all its signatures are log-concave. Clearly an instance of $b$-matchings, when stated as a Holant problem, is log-concave.


We consider a family of Holant instances satisfying the following conditions.

\begin{condition} \label{cond:Holant-condition}
Consider the Holant instance $\Phi=(G,\vecf)$ satisfying that $\vecf$ is a family of Boolean domain symmetric log-concave signatures where $f(0)>0$ for each $f \in \vecf$.
\end{condition}



\begin{theorem}[Informal version of~\Cref{thm:formal-counting-Holant}] \label{thm:counting-Holant}
    Given any positive integer $\Delta$ and any positive real number $r$, there exists an FPTAS for the partition function of every Holant instance $\Phi = \left(G = (V, E), \vecf = \set{f_v}_{v \in V}\right)$ satisfying~\Cref{cond:Holant-condition} where the maximum degree of $G$ is $\Delta$ and $f_v(1) \le r f_v(0)$ for each $v \in V$.
\end{theorem}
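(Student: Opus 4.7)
The plan is to adapt the linear-programming framework of Moitra to Holant problems with log-concave signatures, using as the central combinatorial tool a derandomized version of the Chen--Gu coupling. First, I would reduce computing $Z = \Holant(\Phi)$ to computing single-edge marginals $\mu_\Phi(\sigma(e) = c)$ via a telescoping product along a sequence of pinnings; a key sanity check is that \Cref{cond:Holant-condition} is preserved under edge-pinning, since pinning an input of a symmetric log-concave signature with $f(0) > 0$ to either $0$ or $1$ yields a symmetric log-concave signature still satisfying the same condition (log-concavity together with $f(0)>0$ rules out internal zeros and is inherited by the two projected signatures at each endpoint). An $(\epsilon / \abs{E})$-additive estimate on each marginal therefore suffices for a $(1 \pm \epsilon)$-multiplicative estimate of $Z$.

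Next I would invoke the coupling of Chen and Gu. Starting from the two conditional distributions $\mu_\Phi(\cdot \mid \sigma(e) = 0)$ and $\mu_\Phi(\cdot \mid \sigma(e) = 1)$, their procedure reveals edges of $G$ one at a time in a locally Holant-driven fashion, propagating a discrepancy outward until the two copies agree. Under \Cref{cond:Holant-condition} with bounded degree $\Delta$ and the ratio bound $f_v(1) \le r f_v(0)$, each step has at most $O(\Delta)$ branching options and a uniformly positive probability of terminating---this is the critical place where log-concavity is used, as it controls conditional edge-marginals from above. I would then record the entire coupling as a tree whose nodes are partial revealed configurations and whose leaves encode termination; the marginal $\mu_\Phi(\sigma(e) = 1)$ then decomposes as a sum of leaf contributions computable from purely local Holant evaluations.

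The new ingredient, and the step I expect to be the main obstacle, is the construction of the \emph{extended coupling tree}: a truncated and suitably augmented version of the above tree of size $\poly(n/\epsilon)$ on which a compact LP can be written. The truncation is at depth $L = \Theta(\log(\abs{E}/\epsilon))$ and relies on a geometric tail bound for the coupling length (strengthening the first-moment estimate of Chen and Gu so as to bound the total mass of discarded deep paths by $\epsilon / \abs{E}$); the extension consists of auxiliary book-keeping nodes encoding local Holant normalizations, so that agreement of the coupling with the true marginal can be written as linear constraints at each internal node despite branching ratios that are bounded but not uniformly $<1$. Over this extended tree I would set up an LP whose variables are reach-probabilities of nodes, whose constraints enforce these local marginal identities, and whose objective is a weighted sum of leaf contributions estimating $\mu_\Phi(\sigma(e) = 1)$. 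Any feasible LP solution then yields the desired $(\epsilon/\abs{E})$-additive estimate, and composing with the self-reduction gives the FPTAS in time $\poly(n/\epsilon)$ with exponent depending on $\Delta$ and $r$. The principal technical work sits in the tail bound and in verifying LP consistency under the extended tree, both of which require exploiting log-concavity beyond what is needed for the purely probabilistic spectral-independence argument of Chen and Gu.
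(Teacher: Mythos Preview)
Your high-level outline matches the paper's approach: reduce to single-edge marginals via telescoping, derandomize the Chen--Gu coupling with a truncated coupling tree, and write an LP over that tree whose feasible solutions certify the marginal. However, you have misidentified the central technical obstacle, and your description of what the ``extended'' coupling tree accomplishes is off target.

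The real difficulty is not tail bounds or ``local Holant normalizations.'' It is that the Chen--Gu coupling, at each step with discrepancy at vertex $v$, must pick an \emph{amenable} edge $e\in E_v^\sigma$ (one satisfying $\mu_e^\sigma(1)\ge \mu_e^\tau(1)$ or the reverse, depending on which side is heavier). For general log-concave signatures this choice \emph{cannot} be arbitrary---the paper gives an explicit counterexample---and determining which edge is amenable requires computing conditional marginals, which is exactly the problem you are trying to solve. So you cannot write down the coupling tree at all without already knowing the answer. The paper's ``extended coupling tree'' resolves this by branching over \emph{every} edge in $E_v^\sigma$ at each node (not just the amenable one), and correspondingly introduces auxiliary LP variables $\widehat p^\sigma_{\sigma,\tau,\boldsymbol s,e}$ that represent a \emph{fractional} choice of edge, constrained only to sum to the parent's value. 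The true coupling is a feasible point (it puts mass $1$ on the amenable edge), but the LP may spread mass over ``wrong'' edges; the error this introduces is controlled by a separate constraint tied to the all-zeros descendants $\mathcal D(\sigma,\tau,\boldsymbol s)$, using the marginal lower bound $\mu^\sigma_{E_v^\sigma}(\boldsymbol 0)\ge B$. Your proposal's account of the extension (``book-keeping nodes encoding local Holant normalizations'' to handle ``branching ratios that are bounded but not uniformly $<1$'') does not address this edge-selection problem, and without it you cannot construct the tree or the LP in polynomial time.
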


We remark that an FPRAS for Holant problems with log-concave signatures under the same condition was found in ~\cite{CG24bMatching}. 



\subsection{Overview of our techniques}\label{sec:techniques}

Since the (approximate) counting task can be reduced to estimating the {marginal ratios} of the Gibbs measure on a single edge, at a high level, our algorithm follows the idea of Moitra~\cite{Moitra19} to bootstrap the {marginal ratios} using a linear program (LP):
\begin{itemize}
    \item Design a random process to mimic the coupling of two conditional Gibbs distributions, whose boundary conditions only differ on a single-edge. 
    \item Establish a linear program to certify the random process, whose solution can recover the desired marginal ratios.
\end{itemize}

The Moitra's LP-based framework achieved significant success in designing deterministic algorithms for counting problems that involve high-order constraints~\cite{Moitra19,GLLZ19,GGGY21,JPV21,WY24}. In this work, we extend the applicability of the method to Holant problems with Boolean symmetric signatures. 
Similar to previous work, the design of a specific LP formulation and the retrieval of the marginal ratio from the solution of LP are the primary technical challenges in applying this framework. We provide an overview of the key technical aspects of our algorithm below.

Our starting point is the coupling process described in~\cite{CG24bMatching}, which has been employed to establish the spectral independence property of the underlying Gibbs measure. We first sketch the coupling for instances of $b$-matchings. The process maintains a pair of $b$-matching instances $(\Phi,\Phi')$ differing only on the signature on a unique vertex $v$. We assume the distinct signature is $f_v(\tau) = \id{\abs{\tau}\le k}$ and $f_v(\tau) = \id{\abs{\tau}\le k-1}$ on the vertex $v$ in $\Phi$ and $\Phi'$ respectively. The coupling proceeds as follows, trying to eliminate the discrepancy:
\begin{enumerate}
    \item The property of the $b$-matchings guarantees that for \emph{at least} one incident edge of $v$, say $e=\set{v,u}$, the marginal probability of assigning $1$ on $e$ in $\Phi$ is less than that in $\Phi'$. Then one applies the optimal (marginal) coupling on the edge $e$.
    \item If the assignments on $e$ successfully couple, one can repeat the argument to the remaining graph (with the discrepancy still at $v$).
    \item If the assignment on $e$ does not couple, the discrepancy moves to the vertex $u$ and one recursively couples the remaining graph (with the discrepancy at $u$).
\end{enumerate}

It is shown in~\cite{CG24bMatching} that the process terminates in logarithmic steps with high probability which is the key property of the coupling that our algorithm relies on. However, there are several technical challenges to certify the coupling using Moitra's method. 


\subsubsection{The extended coupling tree}
One crucial point in the first step of the above coupling is that the choice of $e$ cannot be arbitrary, at least for general log-concave instances of Holant problems. We provide an example illustrating this in \Cref{sec:counter-example}. This poses the main difficulty for us: 
To write an explicit linear program that certifies the entire coupling process, one has to know the identity of $e$ at each step, as in previous work, which is currently difficult to determine. 
This task of identifying $e$ might be as challenging as the original task of estimating the marginal in general Holant cases.



We overcome this difficulty by introducing a new structure called the \emph{extended coupling tree} based on the original coupling process. Specifically, we introduce a new set of auxiliary variables in the linear program for each edge $e'$ incident to $v$ when dealing with a pair of instances differing at the vertex $v$. We add constraints to ensure that the value of these variables forms a distribution. Clearly, the coupling itself corresponds to a feasible solution, namely setting the correct edge with probability one. These new variables have the following intended meaning: since we do not know which edge $e$ is the correct one (used in the coupling), pick one with a probability proportional to what the solution of the linear program indicates. In other words, instead of certifying an unknown edge $e$, we certify a \emph{fractional} choice of edges. {It is important to note that this approach may lead us to select ``wrong'' edges. However,} with appropriate constraints to control the error, we show that this larger linear program define on the extended coupling tree can bootstrap the desired marginal probability within the required precision as well. 

\subsubsection{Encoding the coupling error}

Another technical ingredient in Moitra's method is to trade off the efficiency and the precision of the marginal ratios estimation. Ideally, we can perfectly recover the marginal ratios by the LP based on certain couplings. However, it is intractable to write down the whole coupling tree as the number of possible states grows exponentially with the problem size. As a result, one must truncate the coupling tree to a tractable size and ensure the error incurred remains tolerable. 
In Moitra's method, the coupling error is hard-coded in the LP via linear constraints. In~\cite{Moitra19,GLLZ19,GGGY21,JPV21}, this is possible due to the local uniformity property under local-lemma-type conditions. In these work, variables are partitioned into marked one and unmarked ones. In the coupling, one selects an unpinned marked variable to apply the optimal coupling where the local uniformity ensures that the assignments on the variables can be successfully coupled with a good chance. This property can be encoded in LP by introducing a linear constraint relating the ratio of the marginals in consecutive steps in the coupling tree. Therefore, errors were encoded \emph{locally}.

In the very recent work of~\cite{WY24}, a new method to encode the coupling error has been introduced. They utilized the constraint-wise coupling and completely dispensed with the local uniformity to obtain tighter bounds. 
 They wrote the error caused along root-to-leaf paths in the coupling tree in a single constraint, which can be viewed as a \emph{global} way to encode the error.

Our method of encoding error can be viewed as a middle ground between previous approaches. We write the error caused by the coupling in a few steps in a single constraint, reflecting the nature of the coupling in~\cite{CG24bMatching} and the structure of our extended coupling tree. Our analysis for constraints of this form also demonstrates the flexibility of controlling the error in Moitra's method, which might find applications in other problems.





\subsection{Organization of the paper}

After introducing basic notations and a few standard properties of Holant instances in \Cref{sec:prelim}, we introduce our main construction, the extended coupling tree, in \Cref{sec:extended-coupling-tree}. Using this structure, we apply Moitra's linear programming approach to estimate the marginals in \Cref{sec:LP}. Finally in \Cref{sec:counting}, we present our approximate counting algorithms and prove the main theorems.

\section{Preliminaries}\label{sec:prelim}

\subsection{Notations}

For a natural number $n \in \mathbb N$, we use $[n]$ to denote the set $\set{1, 2, \ldots, n}$.
We use $\log(\cdot)$ to denote the logarithm with the natural base.
For an event $\+E$, we use $\id{\+E}$ to denote the indicator of it. 
We will use boldface type, e.g., $\boldsymbol{S},\boldsymbol{c}$, for vectors.
Specifically, we use $\boldsymbol{0}$ and $\boldsymbol{1}$ to denote the all-zero vector and the all-one vector, respectively.
For any sequence $\seqS = (s_1, \ldots, s_\ell)$ and any element $t$, let $\seqS \circ t$ denote $(s_1, \ldots, s_\ell,t)$, the concatenation of $\seqS$ and $t$.
With a slight abuse of notation, we will occasionally use the sequence $\seqS= (s_1, \ldots, s_\ell)$ to denote the set $\{s_1, \ldots, s_\ell\}$ when it is clear from the context.
Moreover, we use the notation $\varnothing$ to denote an empty sequence.

We consider the graphs with ``half-edges'' in this paper.
Formally, given a set of nodes $V$, an \emph{normal} edge on $V$ is a pair of different nodes $\{u,v\}$ where $u,v\in V$,
and a half-edge on $v$ is a singleton tuple $\{u\}$ where $u \in V$. Moreover, we sometimes abuse $e$ as a set of vertices, \IE, we will use ``$v \in e$'' to denote the event that $v$ is an endpoint of $e$ and $v \notin e$ otherwise.
A graph with half-edges is a pair $(V,E_1\cup E_2)$ where $V$ is a set of nodes, $E_1$ is a set of normal edges,
and $E_2$ is a set of half-edges. 
When referring to a graph with half-edges, we will always use two subsets to denote its normal edges and half-edges separately.
Thus, when we use the notation $G=(V,E)$, it implies that 
$G$ has no half-edges.
One can verify that the notations $\mu_\Phi$, $Z_\Phi$, e.t.c., can be naturally generalized to instances where the underlying graph has half edges.
Given a graph $G$, we always use $V(G)$ to denote the vertex set of $G$, $E(G)$ to denote the edge set, and $\Delta(G)$ to denote the maximum degree. 
For every $v \in V$, we use $\deg_G(v)$ to denote the degree of $v$ in $G$, \IE, the number of edges incident to $v$.
If $G$ is clear from the context, we may omit $G$ from these notations.
For convenience, we suppose that an arbitrary order is assumed over all the edges in $E_1$.


We call $\sigma: E\rightarrow \set{0,1}$ an assignment of $G$.
For any subset $S\subseteq E$ and any assignment $\sigma$, we use $\sigma(S)$ to denote the assignments of $\sigma$ on $S$.
For simplicity, let $\sigma(e)$ denote $\sigma(\{e\})$ for each $e\in E$. For any $S\subseteq E$, we call $\sigma: S\rightarrow \set{0,1}$ a \emph{partial assignment} of $G$ defined on $S$.
For any subset $S \subseteq E$ and any partial assignment $\sigma$ defined on $S$, 
let $\Lambda(\sigma)$ denote $S$.
For any assignment $\sigma$ and any partial assignment $\tau$,
we abuse the notation $\sigma\in \tau$ to denote the event that $\sigma(\Lambda(\tau)) =\tau$.

Given any partial assignment $\sigma$ and any vertex $v \in V$, we use $\Ham(\sigma, E_v)$ to denote $\abs{\sigma(E_v \cap \Lambda(\sigma))}$, i.e., the Hamming weight of $\sigma$ restricted to $E_v \cap \Lambda(\sigma)$.
For two partial assignments $\sigma$ and $\tau$ where $\Lambda(\sigma)\cap \Lambda(\tau) = \emptyset$, we use $\sigma \land \tau : \Lambda(\sigma) \cup \Lambda(\tau) \to \set{0, 1}$ to denote the concatenation of $\sigma$ and $\tau$, \IE, $(\sigma \land \tau)(e)=\sigma(e)$ for each $e\in \Lambda(\sigma)$ and $(\sigma \land \tau)(e)=\tau(e)$ for each $e\in \Lambda(\tau)$. 
{Additionally, for any $S\subseteq E$ and any $\boldsymbol{c} \in \{0,1\}^S$, we use $S \gets \boldsymbol{c}$ to denote the partial assignment $\sigma$ where $\Lambda(\sigma) = S$ and $\sigma(S) = \boldsymbol{c}$.}

Given any instance $ \Phi = \left(G = (V, E), \vecf = \set{f_v}_{v \in V}\right)$ of the Holant problem,
we always assume $\emph{w.l.o.g.}$ that the signature $f_v = [f_v(0),\cdots,f_v(d)]$ at $v$ satisfies $d = \deg_G(v)$ for each vertex $v\in V(G)$.
Recall the distribution $\mu \triangleq \mu_{\Phi}$ defined in \eqref{def-holant-mu}.
We say an assignment $\sigma$ of $G$ is feasible if $\mu(\sigma)>0$. 
We say a partial assignment $\tau$ is \emph{feasible} if there exists a feasible assignment $\sigma\in \tau$.




For any partial assignment $\tau$ and any vertex $v\in V$, we use $E_v^\tau \defeq E_v \setminus \Lambda(\tau)$ to denote the set of unpinned edges incident to $v$ under $\tau$. 
Given any feasible partial assignment $\tau$ and any subset $S \subseteq E \setminus \Lambda(\tau)$, we denote by $\mu_S^\tau$ the marginal probability distribution of $\mu$ projected to $S$ conditional on $\tau$.
Similarly, let $\mu_S$ denote the marginal probability distribution of $\mu$ projected to $S$.
Furthermore, given any partial assignment $\sigma$ where $S\subseteq \Lambda(\sigma)$ and any $e\in E$ where $e\not\in \Lambda(\tau)$, we will use the following simplified notations:
\begin{itemize}
\item Let $\mu_S^\tau(\sigma),\mu_S(\sigma)$ denote $\mu_S^\tau(\sigma(S)),\mu_S(\sigma(S))$, respectively;
\item Let $\mu^\tau(\sigma),\mu(\sigma)$ denote $\mu_{\Lambda(\sigma)\setminus\Lambda(\tau)}^\tau(\sigma),\mu_{\Lambda(\sigma)}(\sigma)$, respectively;
\item Let $\mu_e^\tau(\sigma),\mu_e(\sigma)$ denote $\mu_{\{e\}}^\tau(\sigma),\mu_{\{e\}}(\sigma)$, respectively.
\end{itemize}

\subsection{Properties of Holant instances}

For a symmetric signature $f = [f(0), f(1), \ldots, f(d)]$ of arity $d$, its \emph{local polynomial} introduced by~\cite{GLLZ21Holant} is defined as follows:
\begin{align} \label{eq:local-polynomial}
    P_f(x) \defeq \sum_{i = 0}^{d} \binom{d}{i} f(i) x^i.
\end{align}
Let $\Phi = \left(G = (V, E = E_1 \cup E_2), \vecf = \set{f_v}_{v \in V}\right)$ be a Holant instance satisfying~\Cref{cond:Holant-condition}. 

We define the pinning of Holant instances as follows.
\begin{definition}[Pinning of Holant instances] \label{def:pinning-Holant}
\emph{
    Given a Holant instance $\Phi = \left(G = (V, E = E_1 \cup E_2), \vecf = \set{f_v}_{v \in V}\right)$ satisfying~\Cref{cond:Holant-condition}, an edge $e \in E$ and a value $c \in \set{0, 1}$, the \emph{pinning of $\Phi$ with $e$ assigned to $c$}, denoted by $\Phi^{e \gets c} = \left(G^{e \gets c}, \vecf^{e \gets c} = \set{f_v^{e \gets c}}_{v \in V}\right)$, is the Holant instance satisfying
    \begin{itemize}
        \item 
        $G^{e \gets c} = \left(V, E \setminus \set{e}\right)$;
        \item For every $v \in V$, if $v \not\in e$, then $f_v^{e \gets c} = f_v$; otherwise, $f_v^{e \gets c} = \left[f_v(0 + c),\cdots, f_v({\deg_G(v)} - 1 + c)\right]$.
    \end{itemize}
}
\end{definition}

For $e \in E$ and $c \in \set{0, 1}$, let $Z^{e \gets c} = Z_{\Phi}^{e \gets c} \defeq Z_{\Phi^{e \gets c}}$ denote the partition function conditional on $e \gets c$. One can verify that
$$
    Z^{e \gets c} = \sum_{\sigma \in \set{0, 1}^E \cmid \sigma(e) = c} \prod_{v \in V} f_v\left(\abs{\sigma ({E_v})}\right)
$$
and $Z = Z^{e \gets 0} + Z^{e \gets 1}$. We remark here that $Z^{e \gets c} > 0$ if and only if for every $v \in e$, $f_v(c) > 0$ and for $\Phi$ satisfying~\Cref{cond:Holant-condition}, $Z^{e \gets 0}$ is positive.

For the instance $\Phi$, we define the following quantities:
\begin{align}\label{eq-def-bmin}
    B = B(\Phi) \defeq \min_{v \in V} \frac{P_{f_v}(0)}{P_{f_v}(r_{\max})}
\end{align}
where
\begin{align}\label{eq-def-rmax}
    r_{\max} = r_{\max}(\Phi) \defeq \max_{v \in V} \frac{f_v(1)}{f_v(0)}
\end{align}
with convention $f_v(1) = 0$ for every isolated vertex $v \in V$.
Note that since $\Phi$ satisfies~\Cref{cond:Holant-condition}, it holds that $r_{\max} \ge 0$. Hence $0 \le P_{f_v}(0) \le P_{f_v}(r_{\max})$ by~\eqref{eq-def-bmin}, leading to $0 < B \le 1$. When $B = 1$, it follows from the definition that $f_{v}(1) = 0$ for all $v \in V$. In this case, the only feasible assignment of $\Phi$ is $\sigma(e) = 0$ for every $e \in E$ and the partition function is $Z = \prod_{v \in V} f_v(0)$, which is trivial. Thus throughout this paper's subsequent part, we assume that $0 < B < 1$.

The following lemma in~\cite{CG24bMatching} shows the monotonicity of $r_{\max}$ and $B$ under the pinnings. For completeness, we postpone the proof of it in~\Cref{sec:proof-Holant-properties}.

\begin{restatable}[Observation 15 in~\cite{CG24bMatching}]{lemma}{Monotonicity} \label{lem:monotonicity-under-pinning}
    Given a Holant instance $\Phi = (G = (V, E = E_1 \cup E_2), \vecf)$ satisfying~\Cref{cond:Holant-condition}, for an edge $e \in E$ and $c \in \set{0, 1}$, if $Z_{\Phi}^{e \gets c} > 0$, then the following inequalities hold for $r_{\max}$ and $B$:
    $$
        r_{\max}(\Phi) \ge r_{\max}(\Phi^{e \gets c}), \quad B(\Phi) \le B(\Phi^{e \gets c}).
    $$
\end{restatable}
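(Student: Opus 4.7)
The plan is to handle $r_{\max}$ and $B$ separately. In each case, for every $v \notin e$ the signature is unchanged under pinning, so only the (one or two) endpoints of $e$ require work. The algebraic engine of the whole argument will be the identity
\[
P_{f_v}(x) \;=\; P_{f_v^{e\gets 0}}(x) \;+\; x\, P_{f_v^{e \gets 1}}(x), \qquad v \in e,
\]
obtained by splitting $\binom{d}{i} = \binom{d-1}{i} + \binom{d-1}{i-1}$ in the definition~\eqref{eq:local-polynomial} of the local polynomial (with $d = \deg_G(v)$).

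For the inequality $r_{\max}(\Phi) \ge r_{\max}(\Phi^{e\gets c})$, I only need to check that for every $v \in e$ the new leading ratio $f_v^{e\gets c}(1)/f_v^{e\gets c}(0)$ does not exceed $f_v(1)/f_v(0)$. When $c = 0$ this is an equality by the definition of pinning, and when $c = 1$ the shifted signature has leading ratio $f_v(2)/f_v(1) \le f_v(1)/f_v(0)$, which is exactly the log-concavity inequality $f_v(1)^2 \ge f_v(0)\,f_v(2)$ (the hypothesis $Z^{e\gets 1}>0$ forces $f_v(1)>0$ for the endpoints of $e$, so the ratio is well-defined).

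For the inequality $B(\Phi) \le B(\Phi^{e \gets c})$, write $r = r_{\max}(\Phi)$ and $r' = r_{\max}(\Phi^{e\gets c})$, so $r' \le r$ by the previous step. I aim for the stronger per-vertex bound
\[
\frac{P_{f_v^{e\gets c}}(0)}{P_{f_v^{e\gets c}}(r')} \;\ge\; \frac{P_{f_v}(0)}{P_{f_v}(r)}.
\]
Since $P_{f_v^{e\gets c}}$ has non-negative coefficients it is non-decreasing on $[0,\infty)$, so it suffices to establish the same inequality with $r'$ replaced by $r$ on the left. For $v \notin e$ this is trivial. For $v \in e$ with $c = 0$, the identity gives $P_{f_v}(r) \ge P_{f_v^{e\gets 0}}(r)$ (the extra term $r\,P_{f_v^{e\gets 1}}(r)$ is non-negative) and $P_{f_v^{e\gets 0}}(0) = f_v(0) = P_{f_v}(0)$, which closes the case. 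The genuinely non-trivial case is $v\in e$ with $c = 1$: here $P_{f_v^{e\gets 1}}(0) = f_v(1)$ and the desired inequality reduces to $P_{f_v^{e\gets 1}}(r)/P_{f_v}(r) \le f_v(1)/f_v(0)$. I would invoke log-concavity: writing $\lambda = f_v(1)/f_v(0)$, the non-increasing property of the ratios $f_v(i+1)/f_v(i)$ on the interval support of $f_v$ gives $f_v(i+1) \le \lambda f_v(i)$ term by term, hence
\[
P_{f_v^{e\gets 1}}(x) \;\le\; \lambda\, P_{f_v^{e\gets 0}}(x) \;\le\; \lambda\, P_{f_v}(x) \qquad \text{on } [0,\infty),
\]
using the identity once more for the second inequality.

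I expect the case $c = 1$ of the $B$-bound to be the main step: once the polynomial identity is in hand the rest is essentially arithmetic, but that is where log-concavity is truly used. The only routine bookkeeping will be verifying that $\Phi^{e\gets c}$ itself still satisfies~\Cref{cond:Holant-condition} (log-concavity is inherited by truncation and by shifting to the first positive index, and $f_v^{e\gets c}(0) = f_v(c) > 0$ because $Z^{e\gets c} > 0$ forces $f_v(c) > 0$ at each endpoint of $e$), so that $r_{\max}(\Phi^{e\gets c})$ and $B(\Phi^{e\gets c})$ are well-defined in the first place; degenerate degree drops (an endpoint becoming isolated) fall under the paper's convention $f_v(1)=0$ and contribute $1$ to $B(\Phi^{e\gets c})$.
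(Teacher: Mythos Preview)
Your proposal is correct and follows essentially the same outline as the paper: check the per-vertex ratio, use monotonicity of the local polynomial to replace $r'$ by $r$, and finish via log-concavity at each endpoint of $e$. The only cosmetic difference is the algebraic device at the endpoint: you split into $c=0$ and $c=1$ and invoke the Pascal-type identity $P_{f_v}(x)=P_{f_v^{e\gets 0}}(x)+x\,P_{f_v^{e\gets 1}}(x)$, whereas the paper handles both values of $c$ in one stroke by combining the term-wise log-concavity bound $f_v(i+c)/f_v(c)\le f_v(i)/f_v(0)$ with the simpler binomial inequality $\binom{d-1}{i}\le\binom{d}{i}$; these are equivalent packagings of the same Pascal identity and the same log-concavity fact.
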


Recall that $\mu = \mu_\Phi$ is the induced Gibbs distribution. Given any partial assignment $\sigma$ and vertex $v\in V$, recall that $E_v^\sigma$ is the set of unpinned edges incident to $v$ under $\sigma$. The following lemma in~\cite{CG24bMatching} establishes lower bounds of the marginal probability of $\mu$.
\begin{restatable}[Lemma 18 in~\cite{CG24bMatching}]{lemma}{MarginalBound} \label{lem:marginal-bound}
    Given any Holant instance $\Phi = \left(G=(V,E=E_1\cup E_2) , \vecf \right)$ satisfying~\Cref{cond:Holant-condition}, it holds that
    \begin{align} \label{eq:marginal-bound}
        \mu_{E_v^\sigma}^{\sigma}(\zero) \ge B(\Phi),
    \end{align}
    for any feasible partial assignment $\sigma$ and vertex $v\in V$ where $E^{\sigma}_v\subseteq E_1$.
\end{restatable}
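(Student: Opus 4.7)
I plan to first reduce to the case $\sigma = \varnothing$ via Lemma~\ref{lem:monotonicity-under-pinning}, and then carry out a peel-plus-induction computation on the partition function.

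For the reduction, consider the pinned instance $\Phi^\sigma$. It inherits log-concavity, and the feasibility of $\sigma$ combined with the ``no-zero-gaps'' clause of log-concavity forces every surviving signature to remain positive at input $\zero$, so $\Phi^\sigma$ again satisfies Condition~\ref{cond:Holant-condition}. Moreover $E_v^\sigma = E_v(G^\sigma) \subseteq E_1(G^\sigma)$, since pinning removes edges rather than creating half-edges. Iterating Lemma~\ref{lem:monotonicity-under-pinning} over the pinnings in $\sigma$ gives $B(\Phi^\sigma) \ge B(\Phi)$, and since $\mu_\Phi^\sigma$ restricted to the unpinned edges agrees with $\mu_{\Phi^\sigma}$, it suffices to prove the $\sigma = \varnothing$ version: for every Holant $\Phi$ satisfying Condition~\ref{cond:Holant-condition} and every $v$ with $E_v \subseteq E_1$, $\mu_{E_v}(\zero) \ge B(\Phi)$.

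Enumerate $E_v = \{e_1, \ldots, e_k\}$ with $e_i = \{v, u_i\}$ (normality of each $e_i$ guarantees $u_i$ exists). Peel off $v$'s contribution by factoring, for each $\tau \in \{0,1\}^{E_v}$,
$$Z_\Phi^{E_v \gets \tau} \;=\; f_v(|\tau|) \cdot W(\tau),$$
where $W(\tau)$ is the partition function of the residual instance obtained by deleting $v$ and all of $E_v$ from $G$ and replacing each $f_{u_i}$ by its pinning $f_{u_i}^{e_i \gets \tau(e_i)}$. The crux is the monotonicity estimate $W(\tau) \le r_{\max}(\Phi)^{|\tau|} W(\zero)$, proved by induction on $|\tau|$: going from $\tau^-$ to $\tau$ by flipping a single coordinate $\tau(e_i)$ from $0$ to $1$ shifts the effective signature at $u_i$ by one step in its input (replacing $j \mapsto f_{u_i}(m + j)$ by $j \mapsto f_{u_i}(m + 1 + j)$, where $m$ is the pinned weight at $u_i$ already recorded by $\tau^-$). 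Log-concavity of $f_{u_i}$ then gives the pointwise bound
$$\frac{f_{u_i}(m + 1 + j)}{f_{u_i}(m + j)} \;\le\; \frac{f_{u_i}(1)}{f_{u_i}(0)} \;\le\; r_{\max}(\Phi)$$
whenever the denominator is positive, and the no-zero-gaps clause plus Condition~\ref{cond:Holant-condition} rules out the degenerate case. This transfers term-by-term to the partition function, yielding $W(\tau) \le r_{\max}(\Phi) W(\tau^-)$ and closing the induction.

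Combining the peel with the monotonicity estimate,
$$\mu_{E_v}(\zero) \;\ge\; \frac{f_v(0)\, W(\zero)}{W(\zero) \sum_{j=0}^{k} \binom{k}{j} f_v(j)\, r_{\max}(\Phi)^{j}} \;=\; \frac{P_{f_v}(0)}{P_{f_v}(r_{\max}(\Phi))} \;\ge\; B(\Phi),$$
where the last step is the definition of $B(\Phi)$ as a minimum over $v$. The main obstacle is the monotonicity induction: the bookkeeping of which edges of $u_i$ are pinned by the current $\tau$ needs to be done carefully, but once the pointwise ratio bound is in hand, both the local log-concavity input and the final assembly are routine.
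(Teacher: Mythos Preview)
The paper does not supply its own proof of this lemma; it is quoted verbatim as Lemma~18 of~\cite{CG24bMatching} and used as a black box throughout. So there is nothing in the present paper to compare your argument against.

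That said, your proposal is correct and self-contained. The reduction to $\sigma=\varnothing$ via iterated application of Lemma~\ref{lem:monotonicity-under-pinning} is sound: feasibility of $\sigma$ together with the no-gaps clause of log-concavity ensures that every intermediate pinned instance still satisfies Condition~\ref{cond:Holant-condition} (so the lemma can be re-applied), and pinning never creates half-edges, so the hypothesis $E_v\subseteq E_1$ survives. For the base case, the peel $Z^{E_v\gets\tau}=f_v(|\tau|)\,W(\tau)$ is valid because $E_v$ exhausts the edges at $v$, and the monotonicity $W(\tau)\le r_{\max}^{|\tau|}W(\zero)$ follows exactly as you describe: flipping one coordinate of $\tau$ from $0$ to $1$ multiplies each summand of $W$ by a ratio $f_{u_i}(m+1+j)/f_{u_i}(m+j)$, which is at most $r_{\max}$ when the denominator is positive and is $0/0$ (harmless) otherwise, since log-concavity with $f_{u_i}(0)>0$ forces $f_{u_i}(m+1+j)=0$ whenever $f_{u_i}(m+j)=0$. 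The final assembly into $P_{f_v}(0)/P_{f_v}(r_{\max})\ge B(\Phi)$ is then immediate from the definitions. This is presumably close to what~\cite{CG24bMatching} does, but in any case your write-up stands on its own.
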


As shown in subsequent sections, to apply~\Cref{lem:marginal-bound}, we also need to decide the feasibility of the partial assignment by the following lemma. The proof of this lemma is provided in~\Cref{sec:proof-Holant-properties}.

\begin{restatable}{lemma}{PartialFeasibility}\label{lem:partial-assignment-feasibility}
    Given any instance $\Phi = \left(G =(V,E=E_1\cup E_2), \vecf \right)$ satisfying~\Cref{cond:Holant-condition} with any partial assignment $\sigma$, there is an algorithm deciding whether $\sigma$ is feasible in time $O(\abs{\Lambda(\sigma)})$.
\end{restatable}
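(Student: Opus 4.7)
The plan is to reduce feasibility of $\sigma$ to a purely local check at every vertex touched by $\sigma$, exploiting the structure of log-concave signatures with $f_v(0) > 0$. First I would observe that property~2 of log-concavity together with~\Cref{cond:Holant-condition} forces the support of each signature to be an initial segment: for every $v \in V$ there exists $0 \le d_v^{\ast} \le \deg_G(v)$ such that $f_v(k) > 0$ if and only if $k \le d_v^{\ast}$.

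Next I would establish the characterization: $\sigma$ is feasible if and only if $f_v(\Ham(\sigma, E_v)) > 0$ for every $v \in V$. For the forward direction, any feasible extension $\sigma^{\ast}$ of $\sigma$ satisfies $\abs{\sigma^{\ast}(E_v)} \le d_v^{\ast}$ and $\abs{\sigma^{\ast}(E_v)} \ge \Ham(\sigma, E_v)$ (since $\sigma^{\ast}$ agrees with $\sigma$ on $E_v \cap \Lambda(\sigma)$), so $\Ham(\sigma, E_v) \le d_v^{\ast}$ and the interval property gives $f_v(\Ham(\sigma, E_v)) > 0$. For the reverse direction, extend $\sigma$ to $\sigma^{\ast}$ by assigning $0$ to every edge in $E \setminus \Lambda(\sigma)$; then $\abs{\sigma^{\ast}(E_v)} = \Ham(\sigma, E_v)$ for every $v$, so all factors in $\mu(\sigma^{\ast})$ are positive by hypothesis and $\sigma^{\ast}$ witnesses feasibility.

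The algorithm is then a single pass: maintain a hash-based dictionary from vertices to integer counters; for each edge $e \in \Lambda(\sigma)$ and each endpoint $v$ of $e$ (one endpoint if $e$ is a half-edge, two otherwise), insert $v$ into the dictionary if absent (initial value $0$) and increment its counter whenever $\sigma(e) = 1$; finally, for each registered $v$, declare infeasibility if $f_v(h_v) = 0$ and otherwise report feasibility. Vertices never incident to $\Lambda(\sigma)$ automatically pass with $h_v = 0$ since $f_v(0) > 0$, so they can be ignored. The total number of edge-endpoint incidences is at most $2\abs{\Lambda(\sigma)}$, and assuming $O(1)$-time signature queries and dictionary operations, the running time is $O(\abs{\Lambda(\sigma)})$.

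The main subtlety I anticipate is avoiding any implicit dependence on $\abs{V}$ or $\abs{E}$: naïvely scanning all vertices to initialize or clear counters would blow the budget. This is handled by the hash-based dictionary, which only touches vertices appearing as endpoints of edges in $\Lambda(\sigma)$. Apart from this bookkeeping, the proof is purely structural and reduces entirely to the initial-segment shape of the support of log-concave signatures with $f_v(0) > 0$.
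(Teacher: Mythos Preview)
Your proposal is correct and follows essentially the same approach as the paper: both reduce feasibility of $\sigma$ to feasibility of its all-zero extension (equivalently, to checking $f_v(\Ham(\sigma,E_v))>0$ for each $v$ touched by $\Lambda(\sigma)$), and both run the check by a single pass over the edges in $\Lambda(\sigma)$ while ignoring untouched vertices since $f_v(0)>0$. Your explicit mention of the initial-segment support and the hash-based dictionary are implementation details that the paper leaves implicit, but the argument is the same.
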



For every edge $e \in E$, define its \emph{marginal ratio} as
\begin{align}\label{def-rphie}
    R(e)=R_{\Phi}(e) \defeq \frac{\Pr[X \sim \mu]{X(e) = 1}}{\Pr[X \sim \mu]{X(e) = 0}} = \frac{\mu_e(1)}{\mu_e(0)},
\end{align}
which is well-defined since $f_v(0)>0$ for each $v\in V$ by~\Cref{cond:Holant-condition}. Note that $R(e) = Z^{e \gets 1} / Z^{e \gets 0}$. The marginal ratio of a half-edge can be bounded as follows.
\begin{restatable}{lemma}{MarginalRatioBound} \label{lem:marginal-ratio-upper-bound}
    Given any Holant instance $\Phi = \left(G =(V,E=E_1\cup E_2), \vecf \right)$ satisfying~\Cref{cond:Holant-condition} , for each half-edge $e \in E_2$, it holds that
    \begin{align*}
        R_{\Phi}(e) \le r_{\max}(\Phi).
    \end{align*}
\end{restatable}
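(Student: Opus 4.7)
Since $e \in E_2$ is a half-edge, it is incident to a unique vertex $v$, and its value only affects the signature $f_v$. The plan is to rewrite both $Z_\Phi^{e \gets 0}$ and $Z_\Phi^{e \gets 1}$ as sums over the same configuration space (namely the assignments on $E\setminus\{e\}$), extract a common per-configuration weight, and read off the ratio as a weighted average of consecutive ratios of the form $f_v(k+1)/f_v(k)$.

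Concretely, for any assignment $\sigma: E \setminus \{e\} \to \{0,1\}$, let $k_\sigma \defeq \Ham(\sigma, E_v)$ and let
\[
    W(\sigma) \defeq \prod_{u \in V \setminus \{v\}} f_u\bigl(\Ham(\sigma, E_u)\bigr).
\]
By \Cref{def:pinning-Holant}, the signature at $v$ in $\Phi^{e \gets c}$ is shifted by $c$, so
\[
    Z_\Phi^{e \gets c} = \sum_{\sigma \in \{0,1\}^{E \setminus \{e\}}} f_v(k_\sigma + c)\, W(\sigma), \qquad c \in \{0,1\}.
\]
Thus $R_\Phi(e) = Z_\Phi^{e \gets 1} / Z_\Phi^{e \gets 0}$ is a weighted average of the ratios $f_v(k_\sigma + 1)/f_v(k_\sigma)$ (where meaningful), with non-negative weights $f_v(k_\sigma) W(\sigma)$.

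Next, I invoke log-concavity. By Condition~\ref{cond:Holant-condition}, $f_v(0) > 0$, and by the no-internal-zeros property, the support of $f_v$ is a prefix $\{0, 1, \ldots, m\}$. For every $0 \le k \le m-1$, property (1) of log-concavity gives $f_v(k+1)/f_v(k) \le f_v(k)/f_v(k-1) \le \cdots \le f_v(1)/f_v(0)$. Configurations $\sigma$ with $k_\sigma = m$ contribute a positive term to $Z_\Phi^{e \gets 0}$ but zero to $Z_\Phi^{e \gets 1}$, so dropping them only increases the ratio:
\[
    R_\Phi(e) \;\le\; \frac{\sum_{\sigma:\, k_\sigma \le m-1} f_v(k_\sigma + 1)\, W(\sigma)}{\sum_{\sigma:\, k_\sigma \le m-1} f_v(k_\sigma)\, W(\sigma)}.
\]
Applying $f_v(k_\sigma + 1) \le \bigl(f_v(1)/f_v(0)\bigr) f_v(k_\sigma)$ term-by-term in the numerator yields $R_\Phi(e) \le f_v(1)/f_v(0) \le r_{\max}(\Phi)$ by definition of $r_{\max}$ in \eqref{eq-def-rmax}.

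The argument is essentially a one-shot calculation; the only subtlety is bookkeeping around configurations with $k_\sigma = m$, where $f_v(k_\sigma + 1) = 0$. I do not anticipate a genuine obstacle — the main conceptual point is simply that on a half-edge the marginal ratio depends on the configuration only through $k_\sigma$, so log-concavity of a single signature suffices and no global/inductive argument over the graph is needed.
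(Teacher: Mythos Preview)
Your proof is correct and follows essentially the same approach as the paper: both decompose $Z_\Phi^{e\gets c}$ as a sum over assignments on $E\setminus\{e\}$, use log-concavity together with $f_v(0)>0$ to obtain $f_v(k+1)\le (f_v(1)/f_v(0))\,f_v(k)\le r_{\max}(\Phi)\,f_v(k)$, and then bound the numerator term-by-term against the denominator. The paper's write-up is marginally cleaner because it observes that $f_v(i+1)\le r_{\max}(\Phi)\,f_v(i)$ holds for \emph{all} $i\ge 0$ (trivially so once $f_v(i)=0$, by the contiguous-support property), which lets it avoid your separate treatment of the boundary case $k_\sigma=m$; but the argument is the same in substance.
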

For completeness, we prove~\Cref{lem:marginal-ratio-upper-bound} in~\Cref{sec:proof-Holant-properties}.




\section{The Extended Coupling Trees}\label{sec:extended-coupling-tree}
This section introduces the key ingredient of our counting algorithm, the extended coupling tree.
As mentioned in \Cref{sec:techniques}, our counting algorithm employs
Moitra's linear programming approach~\cite{Moitra19} to derandomize the coupling procedure for Holant problems~\cite{CG24bMatching}.
Firstly, we describe the coupling procedure in \Cref{sec-coupling-procedure}.
Next, we construct a random process to simulate the coupling procedure in \Cref{sec-rp}.
Following this, we encode the states of the random process within the extended coupling tree in \Cref{sec-tct}. 
At last, we extend the marginals in the random process to the extended coupling tree in \Cref{sec-marginal}.
The extended coupling tree and the marginals form the foundation for establishing the linear program.
Unlike typical methods, our extended coupling tree includes impossible states in the random process, allowing for its efficient construction.


\subsection{The coupling procedure}\label{sec-coupling-procedure}
In this section, we review the coupling procedure for Holant problems from \cite{CG24bMatching}.
The procedure inspires our random process in \Cref{sec-rp} for simulating the coupling and our extended coupling tree in \Cref{sec-tct} for encoding the states of the random process.

In the subsequent discussion, we consider instances satisfying the following condition.
 
\begin{condition}\label{cond-instancepair}
    The following holds for the tuple $\left(\Phi = (G, \vecf),  \sigma_\bot, \tau_\bot,v_{\bot}\right)$:
    \begin{itemize}
        \item $G = (V, E = E_1 \cup \set{e_\bot})$ is a graph with a unique half edge $e_\bot = \set{v_{\bot}}$ where $v_\bot \in V$;
        \item $\Phi=(G,\vecf)$ satisfying~\Cref{cond:Holant-condition};
		\item $\sigma_\bot = (e_\bot \gets 1)$ and $\tau_\bot = (e_\bot \gets 0)$.
    \end{itemize}
\end{condition}

For an instance $(\Phi = (G, \vecf), \sigma_\bot, \tau_\bot, v_\bot)$ satisfying~\Cref{cond-instancepair}, recall that an arbitrary order is assumed over all the edges in $E_1$.
Also recall the distribution $\mu \triangleq \mu_{\Phi}$ defined in \eqref{def-holant-mu}.
    
Let $\sigma, \tau$ be two partial assignments of $G$ where $\Lambda(\sigma) = \Lambda(\tau)$.
We say that a vertex $v \in V$ is \emph{weight-distinct under $(\sigma, \tau)$} if ${\!{Ham}\left(\sigma,{E_v}\right)}\neq{\!{Ham}\left(\tau,{E_v}\right)}$. Furthermore, a vertex $v$ is called \emph{$1$-distinct} under $(\sigma, \tau)$ if $\abs{{\!{Ham}\left(\sigma,{E_v}\right)}-{\!{Ham}\left(\tau,{E_v}\right)}}=1$. We say that $(\sigma, \tau)$ is a pair of \emph{$1$-discrepancy partial assignments} if there is a unique weight-distinct vertex $v \in V$ and $v$ is 1-distinct.
Given a pair of $1$-discrepancy partial assignments $(\sigma,\tau)$ where $v$ is the unique weight-distinct vertex,
we call an edge $e \in E_v^\sigma$ \emph{amenable}
under $(\sigma,\tau)$
if $\mu_e^\sigma(1) \ge \mu_e^\tau(1)$ when $\Ham(\sigma, E_v) < \Ham(\tau, E_v)$ or $\mu_e^\sigma(1) \le \mu_e^\tau(1)$ when $\Ham(\sigma, E_v) > \Ham(\tau, E_v)$.

Given a pair of 1-discrepancy partial assignments $(\sigma,\tau)$, Algorithm \ref{algo:Couple} from \cite{CG24bMatching} defines a coupling of $\mu^{\sigma}$ and $\mu^{\tau}$. 
The following proposition demonstrates the correctness of the coupling process, which has been proved in~\cite{CG24bMatching}.
\begin{algorithm}[htbp]
    \caption{$\!{Couple}(\Phi, \sigma, \tau, v)$}
    \label{algo:Couple}
         \KwIn{A Holant instance $\Phi = \left(G = (V, E = E_1\cup \set{e_\bot}), \vecf = \set{f_v}_{v \in V}\right)$, a pair of 1-discrepancy partial assignments $(\sigma, \tau)$ of $G$ where $\sigma(e_\bot) = 1, \tau(e_\bot) = 0$, and the unique weight-distinct vertex $v \in V$ under $(\sigma, \tau)$.}
    
      \KwOut{A pair of assignments drawn from a coupling between $\mu^{\sigma}$ and $\mu^{\tau}$.}	
       
        $S \leftarrow \emptyset$\;
        \While{
            $E_v^{\sigma} \neq \emptyset$\label{line-while}
        }
        {
            \If{${\!{Ham}\left(\sigma, {E_v}\right)} < {\!{Ham}\left(\tau, {E_v}\right)}$}
            {
                let $e = \set{u,v}$ be the first edge satisfying $e \in E_v^{\sigma}$ and   $\mu^{\sigma}_e(1) \geq \mu^{\tau}_e(1)$\; \label{line:pick-dominating-edge-1}
                \quad \tcp{pick the amenable edge}
            }
            \Else{
                let $e = \set{u,v}$ be the first edge satisfying $e\in E_v^{\sigma}$ and 
                $\mu^{\sigma}_e(1) \leq \mu^{\tau}_e(1)$\; \label{line:pick-dominating-edge-2}
                \quad \tcp{pick the amenable edge}
            }
            sample $(\sigma_e, \tau_e)$ from an optimal coupling of $(\mu^{\sigma}_e, \mu^{\tau}_e)$\label{line-edge-sample}\;
            $\sigma \gets \sigma \land \sigma_e$, $\tau \gets \tau \land \tau_e$, $S \gets S\cup \{e\}$\; 
            \If{$\sigma_e \neq \tau_e$}
            {
                $(\sigma',\tau')\leftarrow \!{Couple}(\Phi,\sigma, \tau, u)$; \quad 
                \tcp{The weight-distinct vertex has been changed}\label{line-recursive-call}
                \Return $\left(\sigma(S)\land \sigma',\tau(S)\land \tau'\right)$
            }    
        }
        Sample $\sigma'\sim \mu^{\sigma}$\;
       \Return $(\sigma(S)\land \sigma', \tau(S)\land \sigma')$\;     
\end{algorithm}

\begin{proposition}[\cite{CG24bMatching}] \label{prop:coupling-correctness}
    The procedure $\!{Couple}(\Phi, \sigma_\bot, \tau_\bot, v_\bot)$ 
    satisfies the following properties:
    \begin{enumerate}
        \item \textbf{(Soundness of the coupling)} It always terminates.
        
        \item \textbf{(Validity of the recursive call)} At each call of $\!{Couple}(\Phi, \sigma, \tau, u)$ in Line \ref{line-recursive-call}, $(\sigma, \tau)$ is always a pair of 1-discrepancy partial assignments where $\sigma(e_\bot) = 1, \tau(e_\bot) = 0$, and $u$ is always the unique weight-distinct vertex under $(\sigma, \tau)$. Moreover, $\sigma$ and $\tau$ are feasible.
        
        \item \textbf{(Existence of amenable edges)} When $E_v^\sigma \neq \emptyset$ in Line \ref{line-while}, there always exists an amenable edge $e \in E_v^\sigma$.
        
        \item \textbf{(Correctness of the coupling)} The outcome of $\!{Couple}(\Phi, \sigma, \tau, v)$ is a coupling of $(\mu^{\sigma}, \mu^{\tau})$.
    \end{enumerate}
\end{proposition}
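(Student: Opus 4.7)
My plan is a simultaneous induction on the number of unpinned edges $\abs{E\setminus\Lambda(\sigma)}$, proving all four items together; this essentially repackages the argument of~\cite{CG24bMatching}. The base case is $E_v^\sigma=\emptyset$: $v$ has no unpinned incident edges, and because any earlier iteration that would have split $\sigma_e\neq\tau_e$ would have fired the recursive call at Line~\ref{line-recursive-call}, all edges in $E_v$ carry identical pinnings in $\sigma$ and $\tau$. Hence $v$ is disconnected from the remaining graph and $\mu^\sigma=\mu^\tau$ on $E\setminus\Lambda(\sigma)$, so returning $\left(\sigma(S)\land\sigma',\tau(S)\land\sigma'\right)$ with a single $\sigma'\sim\mu^\sigma$ is a valid identity coupling, establishing (1) and (4); (2)--(3) are vacuous here.

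For the inductive step I would invoke~(3) to obtain an amenable edge $e=\{u,v\}$, sample $(\sigma_e,\tau_e)$ from an optimal coupling of $(\mu^\sigma_e,\mu^\tau_e)$, and split into two cases. If $\sigma_e=\tau_e$, the Hamming weights at $v$ shift by the same amount, $v$ stays the unique $1$-distinct vertex, and the loop iterates with strictly fewer unpinned edges. If $\sigma_e\neq\tau_e$, amenability combined with the optimality of the edge coupling forces $(\sigma_e,\tau_e)=(1,0)$ when $\Ham(\sigma,E_v)<\Ham(\tau,E_v)$ (symmetric in the opposite case); this wipes out the discrepancy at $v$ and creates a fresh $\pm1$ discrepancy at $u$, and touches no other vertex since $e$ is incident only to $v$ and $u$. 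Because $\sigma_e\sim\mu^\sigma_e$ and $\tau_e\sim\mu^\tau_e$ take values of positive marginal probability, the recursive input $(\sigma,\tau,u)$ at Line~\ref{line-recursive-call} is a feasible $1$-discrepancy triple with strictly fewer unpinned edges, which verifies~(2). Termination and correctness then follow by the inductive hypothesis, by composing the optimal single-edge coupling with the coupling of posteriors returned by the recursive call.

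The main obstacle is~(3), the existence of an amenable edge. I would reduce to $\Ham(\sigma,E_v)<\Ham(\tau,E_v)$ (the other case being symmetric) and note that an amenable edge exists provided
\begin{align*}
\sum_{e\in E_v^\sigma}\left(\mu_e^\sigma(1)-\mu_e^\tau(1)\right)=\E[X\sim\mu^\sigma]{\Ham(X,E_v^\sigma)}-\E[X\sim\mu^\tau]{\Ham(X,E_v^\sigma)}\ge 0,
\end{align*}
since then some summand is non-negative and the first-in-order such edge satisfies the condition at Line~\ref{line:pick-dominating-edge-1}. Conditional on $\sigma$ and $\tau$, the distribution on the unpinned edges differs only through the shifted local signature at $v$, namely $f_v[\,\cdot+\Ham(\sigma,E_v)]$ versus $f_v[\,\cdot+\Ham(\tau,E_v)]$. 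The plan is to exploit log-concavity of $f_v$ (\Cref{cond:Holant-condition}) and the resulting monotonicity of the ratios $f_v(k+1)/f_v(k)$ to establish a stochastic domination between the conditional laws of $\Ham(X,E_v^\sigma)$, which is strictly stronger than the required non-negativity of the expectation gap. Carrying this log-concave, FKG-type comparison through the Holant partition function (which also couples $v$ to the rest of the graph via the signatures at its neighbors) will be the most delicate step.
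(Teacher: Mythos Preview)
The paper does not give its own proof of this proposition; it simply cites~\cite{CG24bMatching}. Your proposal is a faithful and essentially correct reconstruction of that argument, and the plan for item~(3) goes through cleanly: writing $Y=\Ham(X,E_v^\sigma)$, the marginal of $Y$ under $\mu^\sigma$ and under $\mu^\tau$ differs only by the factor $g_\sigma(y)/g_\tau(y)=f_v(a+y)/f_v(a+1+y)$ with $a=\Ham(\sigma,E_v)$, and log-concavity makes this ratio nondecreasing in $y$. This is likelihood-ratio domination, hence stochastic domination of $Y$, and in particular the expectation inequality you want. No FKG on the full configuration is needed; the reduction to the one-dimensional marginal of $Y$ is the whole point.

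One inaccuracy to fix in your base case: the sentence ``all edges in $E_v$ carry identical pinnings in $\sigma$ and $\tau$'' is false in general. Already at the top-level call, $e_\bot\in E_{v_\bot}$ has $\sigma(e_\bot)=1\neq 0=\tau(e_\bot)$, and after a recursive call the discrepant edge that triggered the recursion is incident to the new $v$ as well. What is true, and what you actually need, is that $E_v^\sigma=\emptyset$ together with the $1$-discrepancy invariant gives $\Ham(\sigma,E_w)=\Ham(\tau,E_w)$ for every $w\neq v$; since signatures are symmetric, the conditional Gibbs measures on $E\setminus\Lambda(\sigma)$ coincide. This is exactly the computation in Item~\eqref{item:CT-property-ratio} of \Cref{prop:property-of-truncated-tree}. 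With that correction, your induction on $\abs{E\setminus\Lambda(\sigma)}$ is sound: each while-iteration or recursive call strictly decreases this quantity, which simultaneously yields termination and, via the chain rule for couplings, correctness.
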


\subsection{Random process simulating the truncated coupling procedure}\label{sec-rp}

In this section, given an instance $(\Phi = (G, \vecf), \sigma_\bot, \tau_\bot, v_\bot)$ satisfying~\Cref{cond-instancepair}, we construct a random process to simulate $\!{Couple}(\Phi, \sigma_\bot, \tau_\bot, v_\bot)$ with truncation. 

In the subsequent discussion, we always consider the tuples satisfying the following condition.
\begin{condition}\label{condition-sigma-tau}
The tuples $(\sigma,\tau,\seqS)$, $(\sigma,\tau,\seqS,e)$,  $(\sigma,\tau,\seqS, v, L)$ and $(\sigma,\tau,\seqS, v, L,e)$ satisfy the following:
\begin{itemize}
\item $(\sigma, \tau)$ is a pair of 1-discrepancy partial assignments and $\sigma(e_\bot) = 1, \tau(e_\bot) = 0$;
\item $\seqS$ records the sequence of assigned edges in $\Lambda(\sigma)\setminus \{e_\bot\}$;
\item $v$ is the unique weight-distinct vertex under $(\sigma, \tau)$;
\item $L = \abs{\{e'\in \Lambda(\sigma)\mid (e'\neq e_{\bot})\land (\sigma(e')\neq \tau(e'))\}}$;
\item $e$ is an edge in $E^{\sigma}_v$.
\end{itemize}
\end{condition}

For any tuple generated by certain processes, we will also ensure that \Cref{condition-sigma-tau} is satisfied, including those processes in Definitions  \ref{def:truncated-random-process} and \ref{def:truncated-coupling-tree}.
Given any $(\sigma,\tau)$, let $v(\sigma,\tau)$ denote the unique weight-distinct vertex under $(\sigma, \tau)$ and $L(\sigma,\tau)$ denote $\abs{\{e\in \Lambda(\sigma)\mid (e\neq e_{\bot})\land (\sigma(e)\neq \tau(e))\}}$.

The random process to simulate $\!{Couple}(\Phi, \sigma_\bot, \tau_\bot, v_\bot)$ is as follows.

\begin{definition}[$\ell$-truncated random process] \label{def:truncated-random-process}
    \emph{
    For any instance $\left(\Phi = (G, \vecf),  \sigma_\bot, \tau_\bot,v_{\bot}\right)$ satisfying \Cref{cond-instancepair} and  integer $\ell>0$, the $\ell$-truncated 
    random process $P^{\!{cp}} \triangleq P^{\!{cp}}_\ell(\Phi, \sigma_\bot, \tau_\bot, v_\bot) = \set{(\sigma_t, \tau_t, \seqS_t, v_t, L_t)}_{0\leq t \leq T}$ repeats the following operations:
    \begin{enumerate}
        \item The initial state is $(\sigma_0, \tau_0, \seqS_0, v_0, L_0) = (\sigma_\bot, \tau_\bot, \varnothing, v_\bot, 0)$.
        \item For $t = 0, 1, \cdots$: for the state $(\sigma_t, \tau_t, \seqS_t, v_t, L_t)$:
        \begin{enumerate}
            \item          
            If $L_t \geq \ell$ or $E_{v_t}^{\sigma_t}=\emptyset$, then the process lets $T \leftarrow t$, outcomes $(\sigma_t, \tau_t, \seqS_t, v_t, L_t)$, and stops.
            \item Otherwise, $E_{v_t}^{\sigma_t}\neq \emptyset$. 
            If ${\!{Ham}\left(\sigma_t, {E_{v_t}}\right)} < {\!{Ham}\left(\tau_t, {E_{v_t}}\right)}$,
            let $e = \set{u,v_t}$ be the first edge in $E_{v_t}^{\sigma_t}$ with 
            $\mu^{\sigma_t}_e(1) \geq \mu^{\tau_t}_e(1)$.
            Otherwise, let $e = \set{u,v_t}$ be the first edge in $E_{v_t}^{\sigma_t}$ with $\mu^{\sigma_t}_e(1) \leq \mu^{\tau_t}_e(1)$.
                \begin{enumerate}[(i)]
                    \item Sample $(\sigma_e,\tau_e)$ from an optimal coupling of $(\mu_e^{\sigma_t},\mu_e^{\tau_t})$.
                    \item Let $\sigma_{t + 1} \gets \sigma_t \land \sigma_e,\tau_{t + 1} \gets \tau_t \land \tau_e$, $\seqS_{t + 1} \gets \seqS_t \circ e$.
                    If $\sigma_e=\tau_e$, let $v_{t+1}\gets v_t, L_{t+1}\gets L_t$; otherwise, let $v_{t+1}\gets u$ and $L_{t+1}\gets L_t+1$.
                \end{enumerate} \label{item:construction-expanding-step}
        \end{enumerate}
    \end{enumerate} 
    }
\end{definition}

Intuitively, for any state $(\sigma_t, \tau_t, \seqS_t, v_t, L_t)$ in $P^{\!{cp}}$, $\seqS_t$ records the sequence of chosen edges and $L_t$ records the total number of edges $e$ in $\seqS_t$ where $\sigma_t(e)\neq \tau_t(e)$.
By comparing \Cref{def:truncated-random-process} with Algorithm \ref{algo:Couple}, one can verify that $P^{\!{cp}} \triangleq P^{\!{cp}}_\ell(\Phi, \sigma_\bot, \tau_\bot, v_\bot)$ simulates $\!{Couple}(\Phi, \sigma_\bot, \tau_\bot, v_\bot)$ while ensuring that the depth of recursion does not exceed $\ell$.
Moreover, one can verify the following lemma by \Cref{def:truncated-random-process} and the induction.


\begin{lemma}\label{lemma-trp-correctness}
In the $\ell$-truncated random process $P^{\!{cp}} \triangleq P^{\!{cp}}_\ell(\Phi, \sigma_\bot, \tau_\bot, v_\bot) = \set{(\sigma_t, \tau_t, \seqS_t, v_t, L_t)}_{0\leq t \leq T}$, we have 
$(\sigma_t, \tau_t,\seqS_t, v_t, L_t)$ satisfies \Cref{condition-sigma-tau} for each $0\leq t \leq T$.
\end{lemma}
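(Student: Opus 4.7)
The plan is to prove the claim by induction on $t$, verifying at each step that the first four items of \Cref{condition-sigma-tau} hold for the tuple $(\sigma_t, \tau_t, \seqS_t, v_t, L_t)$ (the fifth item applies only to tuples that include a distinguished edge).

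The base case $t = 0$ is direct. The initial state $(\sigma_\bot, \tau_\bot, \varnothing, v_\bot, 0)$ has $\Lambda(\sigma_\bot) = \Lambda(\tau_\bot) = \set{e_\bot}$ by \Cref{cond-instancepair}, so among all vertices only $v_\bot$ receives a pinned edge. Consequently $\Ham(\sigma_\bot, E_{v_\bot}) = 1$ and $\Ham(\tau_\bot, E_{v_\bot}) = 0$, while every other vertex has zero Hamming weight on both sides. Therefore $v_\bot$ is the unique weight-distinct vertex and it is $1$-distinct, so $(\sigma_0, \tau_0)$ is a pair of $1$-discrepancy partial assignments. The items on $\seqS_0 = \varnothing$ and $L_0 = 0$ hold trivially.

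For the inductive step, assume the four relevant items hold at some step $t < T$. In step $t+1$, the process selects an edge $e = \set{u, v_t} \in E_{v_t}^{\sigma_t}$ according to the amenability rule of \Cref{def:truncated-random-process}(2)(b) and samples $(\sigma_e, \tau_e)$ from an optimal coupling of $(\mu_e^{\sigma_t}, \mu_e^{\tau_t})$. Since $e$ is the only newly pinned edge, the Hamming weights $\Ham(\cdot, E_w)$ can change between step $t$ and step $t+1$ only at $w \in \set{v_t, u}$. If $\sigma_e = \tau_e$, both updated assignments contribute equally at $v_t$ and $u$, so all Hamming differences are preserved and $v_t$ remains the unique $1$-distinct vertex; the assignments $v_{t+1} = v_t$ and $L_{t+1} = L_t$ match the hypothesis, and $\seqS_{t+1} = \seqS_t \circ e$ correctly records the newly pinned edge. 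If instead $\sigma_e \ne \tau_e$, the defining property of the optimal coupling forces $\set{\sigma_e, \tau_e} = \set{0, 1}$ with the value $1$ assigned on the side having the larger marginal. The amenability rule arranges that when $\Ham(\sigma_t, E_{v_t}) < \Ham(\tau_t, E_{v_t})$ we pick $e$ with $\mu_e^{\sigma_t}(1) \ge \mu_e^{\tau_t}(1)$, so in the disagreement event $\sigma_e = 1, \tau_e = 0$; the opposite inequality case is symmetric. In either sub-case the Hamming gap at $v_t$ closes from $\pm 1$ to $0$, while the gap at $u$ opens from $0$ to $\mp 1$ (by the induction hypothesis $v_t$ was the only weight-distinct vertex, hence $\Ham(\sigma_t, E_u) = \Ham(\tau_t, E_u)$). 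Thus $u$ becomes the unique $1$-distinct weight-distinct vertex, matching $v_{t+1} = u$, and $L_{t+1} = L_t + 1$ correctly tracks the new edge of disagreement.

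The main obstacle is the second sub-case above: one needs to verify that the amenability rule together with the disagreement-direction property of the optimal coupling guarantees a clean transfer of the discrepancy from $v_t$ to $u$, without leaving $v_t$ discrepant and without creating a second discrepant vertex elsewhere. Once this is established, the bookkeeping for $\seqS_t$ and $L_t$ follows immediately from the update rules in \Cref{def:truncated-random-process}.
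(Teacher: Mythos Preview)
Your induction argument is correct and matches the paper's approach exactly: the paper simply states that the lemma ``can be verified by \Cref{def:truncated-random-process} and the induction'' without further detail, and you have supplied precisely that verification, including the key observation that the amenability rule together with the direction of disagreement in the optimal coupling of two Bernoullis ensures the $1$-discrepancy transfers cleanly from $v_t$ to $u$. Your closing paragraph is redundant rather than a gap --- the ``main obstacle'' you flag is already fully handled in the preceding paragraph.
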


The following notations related to the random process $P^{\!{cp}} \triangleq P^{\!{cp}}_\ell(\Phi, \sigma_\bot, \tau_\bot, v_\bot) = \set{(\sigma_t, \tau_t, \seqS_t, v_t, L_t)}_{0\leq t \leq T}$ will be used in its analysis. 
\begin{definition}\label{def-notation-trp}
In the $\ell$-truncated random process $P^{\!{cp}}$, define $\mathbf{Pr}_{\!{cp}}$ as follows.
\begin{itemize}
\item For any $(\sigma, \tau, \seqS, v, L)$, let $\Pr[\!{cp}]{(\sigma, \tau, \seqS, v, L)}$ denote the probability that $(\sigma, \tau, \seqS, v, L)$ is a state in the random process $P^{\!{cp}}$.
Formally, for each $(\sigma, \tau, \seqS, v, L)$  where  $\abs{\seqS}=t$,
\begin{align}\label{eq-def-pro-stsvl}
    \Pr[\!{cp}]{(\sigma, \tau, \seqS, v, L)} \triangleq \Pr{\left(T\geq t\right)\land \left((\sigma, \tau, \seqS, v, L) = (\sigma_t, \tau_t, \seqS_t, v_t, L_t)\right)}.
\end{align}

\item For any $(\sigma, \tau, \seqS, v, L)$ and $e\in E^{\sigma}_v$, let $\Pr[\!{cp}]{(\sigma, \tau, \seqS, v, L,e)}$ denote the probability that $(\sigma, \tau, \seqS, v, L)$ is a state in the random process $P^{\!{cp}}$ and $e$ is the chosen edge at that state.
Formally, for each $(\sigma, \tau, \seqS, v, L)$  where  $\abs{\seqS}=t$ and each $e\in E^{\sigma}_v$,
{\begin{align}\label{eq-def-pro-stsvle}
\Pr[\!{cp}]{(\sigma, \tau, \seqS, v, L,e)} \triangleq \Pr{\left(T > t\right)\land \left((\sigma, \tau, \seqS, v, L) = (\sigma_t, \tau_t, \seqS_t, v_t, L_t)\right)\land \left(\seqS_{t+1} =\seqS\circ e\right) }.
\end{align}}
\end{itemize}
\end{definition}

The following property on the random process $P^{\!{cp}}$ will be used in its analysis. For completeness, we provide its proof in~\Cref{sec:omitted-proofs-coupling}.

\begin{restatable}{lemma}{PropertyDefrpc}\label{lemma-property-def-rpc}
In \Cref{def:truncated-random-process}, the following properties hold:
\begin{enumerate}
\item For each $(\sigma, \tau, \seqS, v, L)$, we have 
\begin{align}\label{eq-def-trp-perperty-sigma}
\Pr[\!{cp}]{(\sigma, \tau, \seqS,v,L)}\leq \mu^{\sigma_{\bot}}_{\seqS}(\sigma), \quad \Pr[\!{cp}]{(\sigma, \tau, \seqS,v,L)}\leq\mu^{\tau_{\bot}}_{\seqS}(\tau).
\end{align}
\item For each $(\sigma, \tau, \seqS,v,L)$ and $e\in E^{\sigma}_v$, we have 
\begin{align}\label{eq-def-trp-perperty-sigma-e}
\Pr[\!{cp}]{(\sigma, \tau, \seqS,v,L,e)}\leq \mu^{\sigma_{\bot}}_{\seqS}(\sigma),\quad \Pr[\!{cp}]{(\sigma, \tau, \seqS,v,L, e)}\leq\mu^{\tau_{\bot}}_{\seqS}(\tau).
\end{align}
\end{enumerate}
\end{restatable}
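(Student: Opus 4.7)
Both inequalities are statements about marginals of the sampled process against true conditional marginals of $\mu$, and the natural handle is induction on $t = |\seqS|$, tracking the fact that optimal couplings are couplings. I will prove the first bound of \eqref{eq-def-trp-perperty-sigma} (the one with $\sigma_\bot$); the bound with $\tau_\bot$ is symmetric, and the two bounds in \eqref{eq-def-trp-perperty-sigma-e} follow immediately since the event in \eqref{eq-def-pro-stsvle} is contained in the event in \eqref{eq-def-pro-stsvl}.

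\textbf{Base case ($t=0$).} The only state of length $0$ is $(\sigma_\bot, \tau_\bot, \varnothing, v_\bot, 0)$, which occurs with probability $1$. Since $\seqS = \varnothing$, the quantity $\mu^{\sigma_\bot}_{\seqS}(\sigma_\bot)$ is, by our conventions, the marginal of $\mu^{\sigma_\bot}$ on the empty set evaluated on the empty assignment, which is $1$. So the base case is immediate.

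\textbf{Inductive step.} Fix a state $(\sigma, \tau, \seqS, v, L)$ with $|\seqS| = t+1 \ge 1$. By the definition of $P^{\cp}$, this state can only be reached from a unique predecessor state $(\sigma', \tau', \seqS', v', L')$ where $\seqS' \circ e = \seqS$ for some edge $e$, $\sigma = \sigma' \land (e \gets \sigma(e))$, and $\tau = \tau' \land (e \gets \tau(e))$; moreover $e$ must be the unique amenable edge selected in step (b) at the predecessor, and in particular $T \ge t+1$. Conditional on being at the predecessor state, the pair $(\sigma(e), \tau(e))$ is sampled from an optimal coupling of $(\mu^{\sigma'}_e, \mu^{\tau'}_e)$. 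Since any coupling's marginal on the first component is $\mu^{\sigma'}_e$, the conditional probability of extending to the current state is at most $\mu^{\sigma'}_e(\sigma(e))$. Combining with the inductive hypothesis,
\begin{align*}
\Pr[\cp]{(\sigma, \tau, \seqS, v, L)} \;\le\; \Pr[\cp]{(\sigma', \tau', \seqS', v', L')}\cdot \mu^{\sigma'}_e(\sigma(e)) \;\le\; \mu^{\sigma_\bot}_{\seqS'}(\sigma')\cdot \mu^{\sigma'}_e(\sigma(e)).
\end{align*}
The chain rule for the conditional Gibbs measure $\mu^{\sigma_\bot}$ gives
\begin{align*}
\mu^{\sigma_\bot}_{\seqS}(\sigma) \;=\; \mu^{\sigma_\bot}_{\seqS'}(\sigma')\cdot \mu^{\sigma'}_e(\sigma(e)),
\end{align*}
since $\sigma$ restricted to $\{e_\bot\}\cup \seqS'$ equals $\sigma'$. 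This closes the induction. The bound with $\tau_\bot$ follows by exchanging the roles of $\sigma$ and $\tau$ and using that the second marginal of the optimal coupling is $\mu^{\tau'}_e$.

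\textbf{From \eqref{eq-def-trp-perperty-sigma} to \eqref{eq-def-trp-perperty-sigma-e}.} Comparing \eqref{eq-def-pro-stsvl} and \eqref{eq-def-pro-stsvle}, the event $(T > t)\land (\text{state}=(\sigma,\tau,\seqS,v,L))\land (\seqS_{t+1} = \seqS \circ e)$ is a subevent of $(T \ge t)\land (\text{state}=(\sigma,\tau,\seqS,v,L))$, so $\Pr[\cp]{(\sigma,\tau,\seqS,v,L,e)} \le \Pr[\cp]{(\sigma,\tau,\seqS,v,L)}$, and \eqref{eq-def-trp-perperty-sigma-e} follows.

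\textbf{Where the work lives.} None of the steps is deep: the substance is the optimal-coupling marginal property together with the chain rule for $\mu^{\sigma_\bot}$. The only care needed is bookkeeping the correspondence between the partial assignments $\sigma', \sigma$ reached in the process and the conditioning events used to define the marginals, which is exactly what \Cref{condition-sigma-tau} and \Cref{lemma-trp-correctness} were set up to guarantee.
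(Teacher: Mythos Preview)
Your proposal is correct and follows essentially the same approach as the paper: induction on $|\seqS|$, using that the optimal coupling has the correct marginals together with the chain rule $\mu^{\sigma_\bot}_{\seqS}(\sigma) = \mu^{\sigma_\bot}_{\seqS'}(\sigma')\cdot \mu^{\sigma'}_e(\sigma(e))$, and then deriving \eqref{eq-def-trp-perperty-sigma-e} from \eqref{eq-def-trp-perperty-sigma} by subevent containment. The paper's write-up is slightly more explicit in unpacking the event at step $t$ into the predecessor event plus the edge-choice and sample outcome, but the argument is the same.
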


\subsection{Truncated extended coupling tree}\label{sec-tct}
Given an instance $(\Phi, \sigma_\bot, \tau_\bot, v_\bot)$ satisfying~\Cref{cond-instancepair}, 
we follow the idea of Moitra~\cite{Moitra19} to estimate the marginal probability distribution $\mu_{e_{\bot}}$, which recovers the probabilities in \Cref{def-notation-trp} using a linear program.
Typically, this linear program is built based on a coupling tree that encodes all possible states of the random process $P^{\!{cp}}$.
For the linear program to be constructed efficiently, the coupling tree itself must be constructed efficiently.
However, this is a challenging task because identifying an amenable edge $e \in E_{v_t}^{\sigma_t}$ at each state $(\sigma_t,\tau_t,\seqS_t,v_t,L_t)$ as described in \Cref{def:truncated-random-process} is difficult.
The difficulty arises because it seems to bring us back to the original problem of estimating the marginal probability distribution.
To address this challenge, we extend the coupling tree by adding some extra auxiliary states, which are impossible in $P^{\!{cp}}$, allowing for its efficient construction.

We say a tuple $(\sigma, \tau, \seqS, v, L)$ is \emph{feasible} if the partial assignments $\sigma$ and $\tau$ are both feasible.
Our extended coupling tree for encoding the states of $P^{\!{cp}}$ is as follows.

\begin{definition}[$\ell$-truncated extended coupling tree] \label{def:truncated-coupling-tree}
    \emph{
    For any instance $\left(\Phi = (G, \vecf),  \sigma_\bot, \tau_\bot,v_{\bot}\right)$ satisfying \Cref{cond-instancepair} and  any positive integer $\ell$, the $\ell$-truncated extended coupling tree $\+T \triangleq \+T_{\ell}(\Phi, \sigma_\bot, \tau_\bot, v_\bot)$ is a rooted tree constructed as follows:      \begin{enumerate}
        \item The root of $\+T$ is the node with label $(\sigma_\bot, \tau_\bot, \varnothing, v_\bot, 0)$ of depth $0$.
        \item For $i = 0, 1, \cdots$: for each node $u$  of depth $i$ in the current $\+T$, suppose the label of $u$ is $(\sigma, \tau, \seqS, v, L)$.
        \begin{enumerate}
            \item          
            If $L \geq \ell$ or $E_v^{\sigma}=\emptyset$ or $(\sigma, \tau, \seqS, v, L)$ is infeasible, then $u$ is a leaf node in $\+T$;
            \item Otherwise, for each $e = \set{v,v'} \in E_v^{\sigma}$,
                \begin{enumerate}[(i)]
                    \item If ${\!{Ham}\left(\sigma,{E_v}\right)}<{\!{Ham}\left(\tau,{E_v}\right)}$, add three nodes with labels $(\sigma \land (e \gets 0), \tau \land (e \gets 0), \seqS \circ e, v, L)$, $(\sigma \land (e\gets 1), \tau \land (e \gets 0), \seqS \circ e, v', L+1)$, and $(\sigma \land (e \gets 1), \tau \land (e \gets 1), \seqS \circ e, v, L)$ as children of $u$;
                    \item Otherwise, add three nodes with labels $(\sigma \land (e \gets 0), \tau \land (e \gets 0), \seqS \circ e,v,L)$, $(\sigma \land (e \gets 0), \tau \land (e \gets 1), \seqS \circ e, v', L + 1)$, and $(\sigma \land e \gets 1, \tau \land (e \gets 1), \seqS\circ e,v,L)$ as children of $u$.
                \end{enumerate} 
        \end{enumerate}
    \end{enumerate} 
    {For simplicity, let $V(\+T)$ denote the set of nodes in $\+T$.}
    }
\end{definition}


Comparing~\Cref{def:truncated-coupling-tree} with~\Cref{def:truncated-random-process}, one can verify that the $\ell$-truncated extended coupling tree $\+T$ encodes all possible states in the $\ell$-truncated random process $P^{\cp}$. 
Furthermore, additional auxiliary states are also introduced in $\+T$.
In $P^{\cp}$, only one amenable edge in $E_{v_t}^{\sigma_t}$ selected to generate the next state at each state $(\sigma_t,\tau_t,\seqS_t,v_t,L_t)$.
In contrast, in $\+T$, all edges in $E_{v}^{\sigma}$ are enumerated to generate the children of each node labeled $(\sigma,\tau,\seqS,v,L)$ in ~\Cref{def:truncated-coupling-tree}.

By \Cref{def:truncated-coupling-tree} and the induction, one can verify the following properties:
\begin{itemize}
\item For each node $u$ with label $(\sigma,\tau,\seqS,v,L)$ and depth $i$, we have $\abs{\seqS} = i$.
Thus, any two nodes at different depths have distinct labels.
\item For each node $u\in V(\+T)$, the children of $u$ have distinct labels. Moreover, for any two nodes $u,v\in V(\+T)$ with distinct labels, the child $u'$ of $u$ and the child $v'$ of $v$ have distinct labels.
Thus, by using induction on depth, we can also prove that any two nodes at the same depth have distinct labels.
\end{itemize}
In summary, we have the following lemma.
\begin{lemma}\label{lemma-truncated-coupling-tree}
Any two different nodes in the $\ell$-truncated extended coupling tree have distinct labels.
\end{lemma}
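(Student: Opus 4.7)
The plan is to decompose the statement into two cases based on the depth of the nodes, and in each case exploit the structural invariant given by~\Cref{def:truncated-coupling-tree} that the third coordinate $\seqS$ of any label grows by one entry per step down the tree, while the children of each parent are generated from a fresh edge $e$ together with a choice of values for $(\sigma(e),\tau(e))$.

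First, I would handle nodes at different depths. By induction on the construction, every node at depth $i$ has a label whose $\seqS$-component has length exactly $i$: the root has $\seqS=\varnothing$ of length $0$, and each construction step in~\Cref{def:truncated-coupling-tree} appends exactly one edge to $\seqS$ when producing a child. Hence two nodes at distinct depths must differ in the length of their $\seqS$-components, so their labels are distinct.

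Next, I would induct on the depth $i$ to show that nodes at the same depth have distinct labels. The base $i=0$ is trivial since only the root has depth $0$. For the inductive step, consider two distinct nodes $x,y$ at depth $i+1$ with parents $x',y'$ at depth $i$. If $x'=y'$, then by~\Cref{def:truncated-coupling-tree} the children of a single parent either use different edges $e\in E_v^\sigma$ (giving different last entries in $\seqS\circ e$) or use the same edge but with distinct pairs $(\sigma(e),\tau(e))\in\set{(0,0),(0,1),(1,0),(1,1)}$; either way the children's labels differ. If $x'\neq y'$, the induction hypothesis gives a coordinate of the parents' labels that differs, and I would check that this distinction survives appending one more pinning: differences in $\seqS$ are preserved in $\seqS\circ e$ at the earlier positions; differences in $\sigma$ (or $\tau$) are preserved on the common domain after extension, since any appended value is written on a fresh edge not in the prior domain; and the remaining coordinates $v$ and $L$ are determined by $(\sigma,\tau,\seqS)$ via~\Cref{condition-sigma-tau}, so they cannot compensate for an earlier difference.

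The main obstacle — and the reason the bullets preceding the lemma phrase the step as a strengthened inductive hypothesis rather than plain sibling-distinctness — is the case $x'\neq y'$: one must ensure that appending a new pinning cannot erase any earlier distinguishing coordinate of the parent labels. Once this sharper invariant is articulated, the rest of the argument reduces to a routine case analysis over the child-generation rule in~\Cref{def:truncated-coupling-tree}.
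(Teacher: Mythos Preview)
Your proposal is correct and follows essentially the same approach as the paper: the paper's proof is precisely the two bullets preceding the lemma, namely (i) $\abs{\seqS}$ equals the depth, so nodes at different depths have distinct labels, and (ii) induction on depth using the strengthened hypothesis that children of distinctly-labeled parents are themselves distinctly labeled. Your write-up simply unpacks the second bullet in more detail, including the observation that $v$ and $L$ are determined by $(\sigma,\tau)$ via \Cref{condition-sigma-tau}, which is exactly what makes the strengthened induction go through.
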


By \Cref{lemma-truncated-coupling-tree}, one can use a label $(\sigma, \tau, \seqS, v, L)$ to refer to the unique node with this label in the $\ell$-truncated extended coupling tree $\+T$. 
Thus, we will say `the node $(\sigma, \tau, \seqS, v, L)$', `the feasible node $(\sigma, \tau, \seqS, v, L)$' rather than `the node with label $(\sigma, \tau, \seqS, v, L)$' and `the node with feasible label $(\sigma, \tau, \seqS, v, L)$', respectively.


The following property on the nodes in $V(\+T)$ is immediate by \Cref{def:truncated-coupling-tree} and the induction.
\begin{lemma}\label{lemma-property-tct-vl}
Each node $(\sigma, \tau, \seqS, v, L)\in V(\+T)$ satisfies \Cref{condition-sigma-tau}.
\end{lemma}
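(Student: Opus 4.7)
The plan is to prove the statement by induction on the depth of a node in $\+T$.

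For the base case (depth $0$), the root has label $(\sigma_\bot, \tau_\bot, \varnothing, v_\bot, 0)$. Since $\sigma_\bot = (e_\bot \gets 1)$ and $\tau_\bot = (e_\bot \gets 0)$ with $e_\bot$ being the unique half-edge incident to $v_\bot$, they differ only on $e_\bot$, so $v_\bot$ is the unique weight-distinct vertex under $(\sigma_\bot, \tau_\bot)$ and is 1-distinct. The remaining conditions $\seqS = \varnothing$ and $L = 0$ are immediate since $\Lambda(\sigma_\bot) \setminus \set{e_\bot} = \emptyset$.

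For the inductive step, I would assume that a node $(\sigma, \tau, \seqS, v, L)$ at depth $i$ satisfies \Cref{condition-sigma-tau} and then verify the invariants for each of its children. Any child is produced by selecting an edge $e = \{v, v'\} \in E_v^\sigma$ (which is necessarily a normal edge distinct from $e_\bot$, since $e_\bot \in \Lambda(\sigma)$ already) and applying one of the three updates in \Cref{def:truncated-coupling-tree}. The analysis splits into Case (i) $\Ham(\sigma, E_v) < \Ham(\tau, E_v)$ and Case (ii) $\Ham(\sigma, E_v) > \Ham(\tau, E_v)$, which are exhaustive since $v$ is 1-distinct by the inductive hypothesis. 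In both cases, $\seqS \circ e$ correctly records the set of newly pinned edges in $\Lambda \setminus \set{e_\bot}$, and $\sigma(e_\bot) = 1, \tau(e_\bot) = 0$ is preserved because $e \neq e_\bot$.

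The core of the argument reduces to tracking how $\Ham(\cdot, E_w)$ changes at each vertex $w$ when $e = \{v, v'\}$ is pinned. For the two children in which $\sigma$ and $\tau$ agree on $e$ (both assigned $0$ or both assigned $1$), the quantity $\Ham(\sigma, E_w) - \Ham(\tau, E_w)$ is unchanged for every $w \in V$, so $v$ remains the unique 1-distinct vertex and $L$ stays the same. For the third child in which $\sigma$ and $\tau$ disagree on $e$, the difference on $E_v$ goes from $\pm 1$ to $0$, while on $E_{v'}$ a new difference of exactly $\pm 1$ appears (using the inductive hypothesis that $v'$ was previously balanced, because $v$ was the \emph{unique} weight-distinct vertex). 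All other vertices are unaffected, so $v'$ becomes the new unique 1-distinct vertex, and $L$ increases by exactly $1$ since $e \neq e_\bot$ is the sole new disagreement edge.

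The main obstacle is essentially routine bookkeeping; the only subtlety is handling both endpoints of $e$ simultaneously and invoking the inductive uniqueness of $v$ to ensure that the new disagreement on $E_{v'}$ has magnitude exactly $1$ rather than, say, $2$. No ingredients beyond the inductive hypothesis and the combinatorial definition of 1-discrepancy are needed.
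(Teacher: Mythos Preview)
Your proposal is correct and takes exactly the approach the paper has in mind: the paper's own proof is simply the one-line remark that the lemma ``is immediate by \Cref{def:truncated-coupling-tree} and the induction,'' and your argument spells out precisely that induction on depth, including the key observation that uniqueness of the weight-distinct vertex at the parent forces $\Ham(\sigma,E_{v'})=\Ham(\tau,E_{v'})$ before the update, so the discrepancy at $v'$ has magnitude exactly $1$.
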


The following notations related to the $\ell$-truncated extended coupling tree $\+T_{\ell}$ will be used in its analysis.
\begin{definition}\label{def-notation-v-tct}
Given any $\ell$-truncated extended coupling tree $\+T = \+T_{\ell}$,
define the sets $\+V,\+L,\+L_{\!{good}}$ and the function $\+D(\cdot)$ as follows:
\begin{itemize}
\item Let $\+V$ denote the set of \emph{feasible} nodes in $V(\+T)$;
\item Let $\+L$ denote the set of leaf nodes in $V(\+T)$;
\item Define $\+L_{\!{good}}= \set{(\sigma,\tau, \seqS,v,L) \in \+L\mid L<\ell}$ and $\+L_{\!{bad}}\triangleq \+L_{\!{bad}}(\ell) = \+L\setminus \+L_{\!{good}}$.
\item Given any $(\sigma,\tau,\seqS,v,L)\in \+V\setminus \+L$, define 
\[
\+D(\sigma,\tau,\seqS,v,L)\triangleq \left\{(\sigma',\tau',\seqS',v',L')\in V(\+T) \mid \sigma' = \sigma \land (E^{\sigma}_v \gets \boldsymbol{0}),\tau' = \tau \land (E^{\sigma}_v \gets \boldsymbol{0})\right\}.
\]
\end{itemize}
\end{definition}

Intuitively, $\+L_{\!{good}}$ contains the leaves $(\sigma,\tau,\seqS,v,L)$ where $\sigma$ and $\tau$ are coupled successfully, i.e., $\mu^\sigma = \mu^\tau$.
These leaves are easily handled in our marginal ratio estimator, as discussed in \Cref{sec-mre}.
Conversely, the leaves in $\+L_{\!{bad}}$ are more challenging to handle and can introduce errors into our estimator.
The notation $\+D(\cdot)$ is crucial for bounding such errors.
We remark that if $\abs{E^{\sigma}_v} > 1$, then $\abs{\+D(\sigma,\tau,\seqS,v,L)}$ can be larger than $1$, because each permutation $e_1,e_2,\cdots,e_{\abs{E^{\sigma}_v}}$ of the edges in $E^{\sigma}_v$ results in a possible sequence $\seqS' = \seqS\circ e_1 \circ e_2\cdots\circ e_{\abs{E^{\sigma}_v}}$.

The following proposition provides some useful properties of the $\ell$-truncated extended coupling tree. We defer its proof in~\Cref{sec:omitted-proofs-coupling}.

\begin{restatable}{proposition}{PropertyTruncateTree} \label{prop:property-of-truncated-tree}
The following holds for the $\ell$-truncated extended coupling tree $\+T$:
\begin{enumerate}[(1)]
\item $V(\+T)\setminus \+L = \+V\setminus\+L$. \label{item:CT-property-1}
\item $\+T$ is of degree at most $3\Delta$, of depth at most $\Delta \ell$.
Thus, $\abs{V(\+T)}\leq \left(3\Delta\right)^{\Delta \ell + 1}$.
In addition, for each node $(\sigma,\tau,\seqS,v,L)\in \+T$, we have $\abs{\Lambda(\sigma)} = \abs{\Lambda(\tau)} \leq \Delta \ell + 1$. \label{item:CT-property-size}
\item For each node $(\sigma,\tau,\seqS,v,L)\in \+L_{\!{good}}\cap \+V$, we have $\mu^\sigma = \mu^\tau$. Thus, 
\[\frac{\mu(\tau)}{\mu(\sigma)} =  \frac{f_v\left(\abs{\tau(E_v)}\right)}{f_v\left(\abs{\sigma({E_v})}\right)}.\] \label{item:CT-property-ratio}
\end{enumerate}
\end{restatable}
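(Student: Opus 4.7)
}

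The plan is to address the three claims essentially independently, each by a short structural argument based directly on \Cref{def:truncated-coupling-tree}.

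\textbf{Part (1).} I would argue the two inclusions separately. By construction, a node labeled $(\sigma,\tau,\seqS,v,L)$ is declared a leaf in step~(a) whenever the label is infeasible, so every internal node of $\+T$ has a feasible label, giving $V(\+T)\setminus \+L\subseteq \+V\setminus \+L$. Conversely $\+V\subseteq V(\+T)$ trivially gives $\+V\setminus \+L\subseteq V(\+T)\setminus \+L$. This part is immediate and should not cause any trouble.

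\textbf{Part (2).} For the branching, note that each internal node enumerates children over $e\in E_v^\sigma$ and produces $3$ labels per such $e$; since $|E_v^\sigma|\le \deg_G(v)\le \Delta$, the degree bound of $3\Delta$ follows. For $|\Lambda(\sigma)|$, I would induct on the depth: the root satisfies $|\Lambda(\sigma_\bot)|=1$, and each step appends exactly one new edge to both $\sigma$ and $\tau$, so at depth $d$ the label has $|\Lambda(\sigma)|=|\Lambda(\tau)|=d+1$. The main subtlety is the depth bound $\Delta\ell$. My plan is the following invariant argument along a root-to-leaf path: group consecutive labels on the path into maximal \emph{segments} in which $L$ is constant. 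Within a segment the weight-distinct vertex $v$ does not change (by Item~\ref{item:construction-expanding-step}(i) in \Cref{def:truncated-random-process}, $v$ changes only when $L$ increases), and every step along the segment appends a fresh edge of $E_v$ to $\Lambda(\sigma)$, so each segment has length at most $|E_v|\le \Delta$. Since any internal node satisfies $L<\ell$, the values of $L$ encountered along the path lie in $\{0,1,\dots,\ell-1\}$ before the path ends, so there are at most $\ell$ segments, yielding depth at most $\Delta\ell$. The cardinality bound $|V(\+T)|\le (3\Delta)^{\Delta\ell+1}$ is then a standard geometric-sum estimate. I expect this depth argument to be the main (mild) obstacle, since one has to be careful about how the step in which $L$ increases is charged to the segment it belongs to.

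\textbf{Part (3).} Let $(\sigma,\tau,\seqS,v,L)\in \+L_{\!{good}}\cap \+V$. Since the node is a leaf, feasible, and has $L<\ell$, the only remaining stopping criterion from step~(a) of \Cref{def:truncated-coupling-tree} that can trigger is $E_v^\sigma=\emptyset$. By \Cref{lemma-property-tct-vl}, $v$ is the unique weight-distinct vertex under $(\sigma,\tau)$ and all other vertices $u$ satisfy $\Ham(\sigma,E_u)=\Ham(\tau,E_u)$. For any assignment $\rho$ on $E\setminus\Lambda(\sigma)=E\setminus\Lambda(\tau)$, I would expand
\[
\mu(\sigma\land\rho)=\frac{1}{Z}\,f_v(|\sigma(E_v)|)\prod_{u\ne v}f_u\bigl((\sigma\land\rho)(E_u)\bigr),
\]
using that $E_v^\sigma=\emptyset$ makes the $v$-factor independent of $\rho$, and using symmetry of each $f_u$ together with $\Ham(\sigma,E_u)=\Ham(\tau,E_u)$ to conclude that the product over $u\ne v$ depends only on the Hamming weight contributions from $\rho$. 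An identical expansion holds for $\mu(\tau\land\rho)$ with $f_v(|\sigma(E_v)|)$ replaced by $f_v(|\tau(E_v)|)$. Dividing gives $\mu^\sigma(\rho)=\mu^\tau(\rho)$, so $\mu^\sigma=\mu^\tau$; summing the numerator over $\rho$ and dividing the two totals recovers the claimed ratio
\[
\frac{\mu(\tau)}{\mu(\sigma)}=\frac{f_v(|\tau(E_v)|)}{f_v(|\sigma(E_v)|)}.
\]
The only thing to be cautious about here is that the partition-function normalization $Z$ cancels cleanly and that feasibility of the node guarantees the $f_v$-denominator is nonzero, both of which follow from $(\sigma,\tau)\in\+V$ together with \Cref{cond:Holant-condition}.
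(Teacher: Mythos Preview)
Your proposal is correct and follows essentially the same route as the paper. Part~(1) is identical; Part~(3) is the same argument (the paper phrases it via an explicit bijection $g$, you phrase it via the pointwise ratio $\mu(\sigma\wedge\rho)/\mu(\tau\wedge\rho)$, which is equivalent). For Part~(2), the paper uses the same segment-by-constant-$L$ idea; it additionally observes that at the start of every segment one edge of $E_v$ is already pinned (either $e_\bot$ or the edge that just moved the discrepancy), so each segment uses at most $\Delta-1$ edges rather than $\Delta$ --- but since the paper still only claims depth $\le \Delta\ell$, your looser per-segment bound suffices. The one imprecision to clean up is your claim of ``at most $\ell$ segments'': a bad leaf has $L=\ell$, giving $\ell+1$ segments, but that last segment is a single node contributing no edges; equivalently, charge each path-edge to its \emph{starting} node's $L$-value, which is always in $\{0,\dots,\ell-1\}$, and you get $\le \Delta\ell$ directly. (Also, the reference for ``$v$ changes only when $L$ increases'' should be to \Cref{def:truncated-coupling-tree}, not \Cref{def:truncated-random-process}.)
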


\subsection{Marginals for the extended coupling tree}\label{sec-marginal}
In this section, we extend the marginals in $P^{\!{cp}}$ to the extended coupling tree and prove several useful properties. These marginals and properties form the foundation for establishing the linear program.

For simplification, we will use $(\sigma,\tau,\seqS)$ to denote $(\sigma,\tau,\seqS,v(\sigma,\tau),L(\sigma,\tau))$.
Specifically, we will use $\Pr[\!{cp}]{(\sigma, \tau, \seqS)},\Pr[\!{cp}]{(\sigma, \tau, \seqS,e)}$ to denote $\Pr[\!{cp}]{(\sigma, \tau, \seqS,v(\sigma,\tau),L(\sigma,\tau))},\Pr[\!{cp}]{(\sigma, \tau, \seqS, v(\sigma,\tau),L(\sigma,\tau),e)}$, respectively.

\begin{definition}\label{def-key-quantity}
Define the following quantities related to \Cref{def:truncated-random-process}:
\begin{itemize}
\item For each node $(\sigma, \tau, \seqS) \in \+V$, define
    \begin{align}\label{eqn-marginal-all}
        p^{\sigma}_{\sigma, \tau, \seqS} \defeq \Pr[\!{cp}]{(\sigma,\tau, \seqS)}/ \mu^{\sigma_{\bot}}_{\seqS}(\sigma), \quad p^{\tau}_{\sigma, \tau, \seqS} \defeq 
        \Pr[\!{cp}]{(\sigma, \tau, \seqS)}/\mu^{\tau_{\bot}}_{\seqS}(\tau).
    \end{align}
\item  For each node $(\sigma, \tau, \seqS) \in \+V \setminus \+L$ and $e\in E_{{v(\sigma,\tau)}}^{\sigma}$, define
\begin{align}\label{eqn-marginal-inner}
    p^{\sigma}_{\sigma, \tau, \seqS, e} \triangleq 
    \Pr[\!{cp}]{(\sigma, \tau, \seqS,e)}/\mu^{\sigma_{\bot}}_{\seqS}(\sigma),
     \quad p^{\tau}_{\sigma, \tau, \seqS, e} \triangleq 
     \Pr[\!{cp}]{(\sigma, \tau, \seqS,e) }/\mu^{\tau_{\bot}}_{\seqS}(\tau).
\end{align}
\item For each node $(\sigma, \tau, \seqS)\in V(\+T)\setminus \+V$, define $ p^{\sigma}_{\sigma, \tau, \seqS}= p^{\tau}_{\sigma, \tau, \seqS} = 0$. 
\end{itemize}
\end{definition}

\begin{remark}
\emph{
For each $(\sigma, \tau, \seqS) \in \+V\setminus \+L$ and $e\in E_{{v(\sigma,\tau)}}^{\sigma}$, by the definition of $\+V$, we have $\sigma$ is feasible. Thus, $\mu(\sigma)>0$.
In addition, recall that $\sigma(e_{\bot}) = 1$, $\sigma_{\bot} = (e_{\bot}\leftarrow 1)$.
We have  
$\mu^{\sigma_{\bot}}_{\seqS}(\sigma) \geq \mu(\sigma)>0$.
Thus, $p^{\sigma}_{\sigma, \tau, \seqS}$ and $p^{\sigma}_{\sigma, \tau, \seqS,e}$
are well defined.
Similarly, we also have $p^{\tau}_{\sigma, \tau, \seqS}$ and $p^{\tau}_{\sigma, \tau, \seqS, e}$ are well defined.
}
\end{remark}

Intuitively, the ratio $p^{\sigma}_{\sigma, \tau, \seqS}$ is a normalized probability,
where the randomness of $\sigma$ is eliminated, leaving only the randomness of $\tau$.
The same applies to $p^{\sigma}_{\sigma, \tau, \seqS}, p^{\tau}_{\sigma, \tau, \seqS},p^{\sigma}_{\sigma, \tau, \seqS, e}$ and $p^{\tau}_{\sigma, \tau, \seqS, e}$.
In the following, when we use the notations $p^{\sigma}_{\sigma, \tau, \seqS}$ and $p^{\tau}_{\sigma, \tau, \seqS}$,
we always assume  $(\sigma,\tau,\seqS)\in V(\+T)$.
Similarly, when we use the notations $p^{\sigma}_{\sigma, \tau, \seqS, e}$ and $p^{\tau}_{\sigma, \tau, \seqS, e}$, we always assume  $(\sigma,\tau,\seqS)\in \+V \setminus \+L$ and $e\in E_{v(\sigma,\tau)}^{\sigma}$.

One can verify that the following proposition holds for the above quantities. The proof of the proposition is deferred to~\Cref{sec:omitted-proofs-coupling}.

    \begin{restatable}{proposition}{LinearConstraints}\label{prop:coupling-linear-constraint}
    The following holds for the ratios in \Cref{def-key-quantity}:
        \begin{enumerate}[(1)]
            \item All $p^{\sigma}_{\sigma, \tau, \seqS}, p^{\tau}_{\sigma, \tau, \seqS},p^{\sigma}_{\sigma, \tau, \seqS,e}, p^{\tau}_{\sigma, \tau, \seqS,e}$ are in $[0, 1]$. 
            In particular, $p^{\sigma_\bot}_{\sigma_\bot, \tau_\bot, \varnothing} = p^{\tau_\bot}_{\sigma_\bot, \tau_\bot, \varnothing} = 1$.\label{prop-cp-first}
            \item For each $(\sigma, \tau, \seqS)\in \+V\setminus\+L$, let $v = v(\sigma,\tau)$. Then 
            \begin{align}\label{eqn-inter-sum1}
               p^{\sigma}_{\sigma,\tau,\seqS} = \sum_{e \in E_v^{\sigma}} p^{\sigma}_{\sigma, \tau, \seqS, e}, \quad  p^{\tau}_{\sigma,\tau,\seqS}=\sum_{e \in  E_v^{\sigma}} p^{\tau}_{\sigma,\tau, \seqS, e}.
            \end{align}
            \item For each $(\sigma, \tau, \seqS)\in \+V\setminus\+L$ and $e\in E_{v}^{\sigma}$ where $v=v(\sigma,\tau)$, if {${\!{Ham}\left(\sigma, {E_{v}}\right)} < {\!{Ham}\left(\tau,{E_{v}}\right)}$}, we have 
            \begin{align}\label{eqn-inner-child-sum1}
                p^{\sigma}_{\sigma,\tau, \seqS,e} = p^{\sigma \land (e\gets 0)}_{\sigma \land (e\gets 0),\tau\land (e\gets 0), \seqS \circ e}, \quad  p^{\sigma}_{\sigma, \tau, \seqS, e}=p^{\sigma\land (e\gets 1)}_{\sigma\land (e\gets 1),\tau\land (e\gets 0), \seqS\circ e} + p^{\sigma\land (e\gets 1)}_{\sigma\land (e\gets 1),\tau\land (e\gets 1),\seqS \circ e},
            \end{align}
            \begin{align}\label{eqn-inner-child-sum2}
                p^{\tau}_{\sigma, \tau, \seqS, e} = p^{\tau \land (e \gets 0)}_{\sigma \land (e\gets 0),\tau\land (e\gets 0), \seqS \circ e} + p^{\tau\land (e\gets 0)}_{\sigma\land (e\gets 1),\tau\land (e\gets 0), \seqS \circ e}, \quad  p^{\tau}_{\sigma, \tau, \seqS, e}=p^{\tau \land (e\gets 1)}_{\sigma \land (e\gets 1), \tau\land (e \gets 1), \seqS\circ e}.
            \end{align}
            Otherwise, we have
            \begin{align}\label{eqn-inner-child-sum3}
                p^{\sigma}_{\sigma,\tau, \seqS,e}=p^{\sigma \land (e\gets 0)}_{\sigma\land (e\gets 0),\tau\land (e\gets 0), \seqS\circ e}+p^{\sigma\land (e\gets 0)}_{\sigma\land (e\gets 0),\tau\land (e\gets 1), \seqS \circ e}, \quad  p^{\sigma}_{\sigma,\tau, \seqS,e}=p^{\sigma\land (e\gets 1)}_{\sigma\land (e\gets 1),\tau\land (e\gets 1), \seqS\circ e},
            \end{align}
            \begin{align}\label{eqn-inner-child-sum4}
                p^{\tau}_{\sigma,\tau, \seqS,e}=p^{\tau\land (e\gets 0)}_{\sigma\land (e\gets 0),\tau\land (e\gets 0), \seqS\circ e}, \quad  p^{\tau}_{\sigma,\tau,S,e}=p^{\tau\land (e\gets 1)}_{\sigma\land (e\gets 0),\tau\land (e\gets 1), \seqS \circ e} + p^{\tau \land (e\gets 1)}_{\sigma\land (e\gets 1),\tau\land (e\gets 1), \seqS\circ e}.
            \end{align}
            \item For each $(\sigma, \tau, \seqS) \in \+V$, we have
            \begin{align}\label{eqn-ratio}
            	{p^{\sigma}_{\sigma,\tau, \seqS}} = p^{\tau}_{\sigma,\tau, \seqS} \cdot \frac{\mu_{e_{\bot}}(1)}{\mu_{e_{\bot}}(0)}\cdot \frac{ \mu(\tau)}{ \mu(\sigma)}.
            \end{align}
        \end{enumerate}
    \end{restatable}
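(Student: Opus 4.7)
The strategy is to unfold the definitions in \Cref{def-key-quantity} and use the fact that, in the $\ell$-truncated random process of \Cref{def:truncated-random-process}, the edge chosen at each state is a deterministic function of that state, while the only randomness comes from the optimal couplings of single-edge marginals. The bounds in item~(1) are immediate from \Cref{lemma-property-def-rpc}, whose inequalities say exactly that each numerator is at most the corresponding denominator. The root values equal $1$ because the initial state is reached with probability one and $\mu^{\sigma_\bot}_\varnothing(\sigma_\bot) = \mu^{\tau_\bot}_\varnothing(\tau_\bot) = 1$ trivially. For item~(2), at a feasible non-leaf state $(\sigma,\tau,\seqS)$ the process must continue and deterministically picks a single amenable edge $e^\star$; hence $\Pr[\cp]{(\sigma,\tau,\seqS,e^\star)} = \Pr[\cp]{(\sigma,\tau,\seqS)}$ while $\Pr[\cp]{(\sigma,\tau,\seqS,e)} = 0$ for all other $e \in E_v^\sigma$, and dividing by $\mu^{\sigma_\bot}_\seqS(\sigma)$ (respectively $\mu^{\tau_\bot}_\seqS(\tau)$) yields~\eqref{eqn-inter-sum1}.

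Item~(3) is the computational core. I would write the optimal coupling of $(\mu^\sigma_e,\mu^\tau_e)$ explicitly: when $\Ham(\sigma,E_v) < \Ham(\tau,E_v)$ and $e$ is the amenable edge selected at $(\sigma,\tau,\seqS)$, the inequality $\mu^\sigma_e(1) \ge \mu^\tau_e(1)$ places mass $\mu^\sigma_e(0)$ on $(0,0)$, mass $\mu^\tau_e(1)$ on $(1,1)$, mass $\mu^\sigma_e(1) - \mu^\tau_e(1)$ on $(1,0)$, and zero mass on $(0,1)$. Multiplying each by $\Pr[\cp]{(\sigma,\tau,\seqS,e)}$ gives $\Pr[\cp]{\cdot}$ at the corresponding child; I would then normalize via the chain rule
\[
\mu^{\sigma_\bot}_{\seqS \circ e}\bigl(\sigma \land (e \gets c)\bigr) = \mu^{\sigma_\bot}_\seqS(\sigma)\cdot \mu^\sigma_e(c)
\]
on the $\sigma$-side and its $\tau$-analogue on the $\tau$-side. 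Identities~\eqref{eqn-inner-child-sum1}--\eqref{eqn-inner-child-sum2} then follow after elementary simplification; in particular the $\tau$-side uses the telescoping identity $\mu^\sigma_e(0) + \mu^\sigma_e(1) - \mu^\tau_e(1) = \mu^\tau_e(0)$ to collapse the two numerator terms into a single factor of $\mu^\tau_e(0)$ matching the normalizer. The complementary case $\Ham(\sigma,E_v) \ge \Ham(\tau,E_v)$ is fully symmetric and gives~\eqref{eqn-inner-child-sum3}--\eqref{eqn-inner-child-sum4}. For edges $e$ that are not selected both sides vanish, and infeasible children are handled automatically by the convention $p = 0$ from \Cref{def-key-quantity}.

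Finally, item~(4) follows from the Bayes-type identities $\mu^{\sigma_\bot}_\seqS(\sigma) = \mu(\sigma)/\mu_{e_\bot}(1)$ and $\mu^{\tau_\bot}_\seqS(\tau) = \mu(\tau)/\mu_{e_\bot}(0)$, which hold because $\sigma(e_\bot) = 1$, $\tau(e_\bot) = 0$, and $\Lambda(\sigma) = \Lambda(\tau) = \seqS \cup \{e_\bot\}$. Since $p^\sigma_{\sigma,\tau,\seqS}$ and $p^\tau_{\sigma,\tau,\seqS}$ share the numerator $\Pr[\cp]{(\sigma,\tau,\seqS)}$, taking their quotient cancels it and substituting the two identities above rearranges into~\eqref{eqn-ratio}; when the state is unreachable both sides vanish. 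The main obstacle I anticipate is the bookkeeping in item~(3): correctly pairing each optimal-coupling outcome with the right child node, keeping the $\sigma$- and $\tau$-normalizations cleanly separated, and invoking the telescoping identity at the right step so that the two child ratios on the right of~\eqref{eqn-inner-child-sum2} (and its analogues) collapse to the single factor on the left.
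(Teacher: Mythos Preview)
Your proposal is correct and follows essentially the same route as the paper: \Cref{lemma-property-def-rpc} for item~(1), the deterministic edge choice at a non-leaf state for item~(2), writing out the optimal-coupling cell masses and dividing through by the marginal chain rule $\mu^{\sigma_\bot}_{\seqS\circ e}(\sigma\land(e\gets c)) = \mu^{\sigma_\bot}_{\seqS}(\sigma)\cdot\mu^\sigma_e(c)$ for item~(3), and the Bayes identities $\mu^{\sigma_\bot}_\seqS(\sigma)=\mu(\sigma)/\mu_{e_\bot}(1)$, $\mu^{\tau_\bot}_\seqS(\tau)=\mu(\tau)/\mu_{e_\bot}(0)$ for item~(4). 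You are in fact slightly more explicit than the paper in two places---the telescoping step $\mu^\sigma_e(0)+\mu^\sigma_e(1)-\mu^\tau_e(1)=\mu^\tau_e(0)$ on the $\tau$-side of~\eqref{eqn-inner-child-sum2}, and the observation that both sides vanish when $e$ is not the selected edge---but the argument is the same.
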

    
    Recall the definitions of $B$ and $\+D(\cdot)$ in \eqref{eq-def-bmin} and \Cref{def-notation-v-tct}. 
    The next lemma provides a key property for bounding the error of our marginal ratio estimator in \Cref{sec-mre}, introduced by the bad leaves in $\+L_{\!{bad}}$.

    \begin{restatable}{lemma}{CouplingError} \label{lem:coupling-error}
       For each $(\sigma, \tau, \seqS) \in \+V\setminus \+L$,
       let $\+D \triangleq \+D(\sigma, \tau, \seqS)$.
       Then $\+D\neq \emptyset$. In addition, 
       \begin{align}\label{eqn-error-bound}
           \sum_{(\sigma', \tau', \seqS') \in \+D} p^{\sigma'}_{\sigma', \tau', \seqS'}\geq  B \cdot p^{\sigma}_{\sigma, \tau, \seqS}, \quad \sum_{(\sigma',\tau', \seqS') \in \+D} p^{\tau'}_{\sigma', \tau', \seqS} \geq  B \cdot p^{\tau}_{\sigma,\tau, \seqS}.
       \end{align}
    \end{restatable}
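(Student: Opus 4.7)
The plan is to unfold both sides through the random process $P^{\cp}$, collapse the sum over $\+D$ to a single trajectory, and finish with~\Cref{lem:marginal-bound}.

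First I would establish $\+D \neq \emptyset$. Since $(\sigma,\tau,\seqS)\in \+V \setminus \+L$, we have $E_v^\sigma \neq \emptyset$, $L < \ell$, and by~\Cref{lem:marginal-bound}, $\mu^\sigma_{E_v^\sigma}(\zero) \geq B > 0$, so $\sigma^* \defeq \sigma \land (E_v^\sigma \gets \zero)$ and every partial extension obtained by pinning the edges of $E_v^\sigma$ to $0$ one at a time are feasible; the same holds for $\tau$. Along any such chain $L$ stays constant and $\abs{E_v^\sigma}$ strictly decreases, so every intermediate label is a non-leaf in $\+T$. Hence any ordering of $E_v^\sigma$ produces a valid path into $\+D$.

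Second, I would exploit that every element of $\+D$ has the same assignments $\sigma^*$ and $\tau^* \defeq \tau \land (E_v^\sigma \gets \zero)$, differing only in the ordering recorded by $\seqS'$. So $\mu^{\sigma_\bot}_{\seqS'}(\sigma^*)$ is order-independent and, by the chain rule, equals $\mu^{\sigma_\bot}_{\seqS}(\sigma)\cdot \mu^\sigma_{E_v^\sigma}(\zero)$. On the other hand, the rule in Lines~\ref{line:pick-dominating-edge-1}/\ref{line:pick-dominating-edge-2} deterministically selects the amenable edge at each state of $P^{\cp}$, so if all optimal couplings return $(0,0)$ the chain follows a unique ordering $e^{(1)},\dots,e^{(m)}$ of $E_v^\sigma$ ($m\defeq\abs{E_v^\sigma}$). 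Writing $\sigma^{(i)} \defeq \sigma \land (\set{e^{(1)},\dots,e^{(i)}} \gets \zero)$ and analogously $\tau^{(i)}$, only this single ordering contributes mass:
\[
\sum_{(\sigma',\tau',\seqS')\in\+D} \Pr[\cp]{(\sigma',\tau',\seqS')} \;=\; \Pr[\cp]{(\sigma,\tau,\seqS)} \cdot \prod_{i=1}^m \min\bigl(\mu^{\sigma^{(i-1)}}_{e^{(i)}}(0),\; \mu^{\tau^{(i-1)}}_{e^{(i)}}(0)\bigr),
\]
since the optimal marginal coupling on $e$ places mass $\min(\mu_e^\sigma(0),\mu_e^\tau(0))$ on $(0,0)$. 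Combining with the denominator calculation, the LHS of the first inequality equals $p^\sigma_{\sigma,\tau,\seqS}$ times the ratio
\[
R \;\defeq\; \frac{\prod_{i=1}^m \min\bigl(\mu^{\sigma^{(i-1)}}_{e^{(i)}}(0),\; \mu^{\tau^{(i-1)}}_{e^{(i)}}(0)\bigr)}{\mu^\sigma_{E_v^\sigma}(\zero)},
\]
so it suffices to show $R \geq B$. Each $(0,0)$-step leaves both $\Ham(\sigma,E_v)$ and $\Ham(\tau,E_v)$ unchanged, so the sign of their difference is preserved along the chain, fixing the amenable regime. If $\Ham(\sigma,E_v) < \Ham(\tau,E_v)$ then amenability forces $\mu^\sigma_e(0) \leq \mu^\tau_e(0)$ at every step, each $\min$ equals $\mu^{\sigma^{(i-1)}}_{e^{(i)}}(0)$, and the product telescopes by the chain rule to $\mu^\sigma_{E_v^\sigma}(\zero)$, giving $R=1\geq B$. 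In the opposite case the product telescopes instead to $\mu^\tau_{E_v^\sigma}(\zero)\geq B$ by~\Cref{lem:marginal-bound}, and dividing by $\mu^\sigma_{E_v^\sigma}(\zero)\leq 1$ still yields $R\geq B$. The $\tau$-inequality follows from the same calculation with the denominator $\mu^\tau_{E_v^\sigma}(\zero)$: in the first case the ratio is $\mu^\sigma_{E_v^\sigma}(\zero)/\mu^\tau_{E_v^\sigma}(\zero)\geq B$, and in the second case it is $1\geq B$.

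The main thing to watch is the invariance of the amenable regime under $(0,0)$-transitions, since it is exactly what lets the product of $\min$'s telescope cleanly into a single marginal $\mu^\sigma_{E_v^\sigma}(\zero)$ or $\mu^\tau_{E_v^\sigma}(\zero)$; paired with the observation that $P^{\cp}$'s deterministic edge choice zeroes out all but one of the $|E_v^\sigma|!$ elements of $\+D$, the desired bound is a one-line consequence of~\Cref{lem:marginal-bound}.
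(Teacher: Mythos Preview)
Your proposal is correct and follows essentially the same strategy as the paper: identify the unique ordering of $E_v^\sigma$ that $P^{\cp}$ actually traverses (all other terms in the sum over $\+D$ vanish), compute the $(0,0)$-coupling mass along that chain using amenability, telescope via the chain rule, and apply \Cref{lem:marginal-bound}. The only cosmetic difference is bookkeeping: the paper upper-bounds the denominator crudely by $\mu^{\sigma_\bot}_{\seqS^r}(\sigma^r)\le \mu^{\sigma_\bot}_{\seqS}(\sigma)$ and handles one sign of $\Ham(\sigma,E_v)-\Ham(\tau,E_v)$ by a WLOG, whereas you keep the exact denominator $\mu^{\sigma_\bot}_{\seqS}(\sigma)\cdot\mu^\sigma_{E_v^\sigma}(\zero)$ and treat both signs explicitly---this even shows the ratio is exactly $1$ in the ``aligned'' case, but the net argument is the same.
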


    The lemma states a lower bound for the probability of a successful coupling at the node $(\sigma,\tau,\seqS)$, which is  derived from the marginal bound in~\Cref{lem:marginal-bound}. A formal proof of \Cref{lem:coupling-error} is provided in \Cref{sec:omitted-proofs-coupling}.

\section{The Marginal Ratio Estimator via Linear Program}\label{sec:LP}
In this section, we design a deterministic algorithm to efficiently approximate the marginal ratio for any Holant instance satisfying~\Cref{cond:Holant-condition}. We start by establishing a linear program based on the truncated extended coupling tree, and then design a marginal estimator using this linear program. This section is dedicated to proving the following theorem.

\begin{theorem} \label{thm:Holant-marginal-ratio-estimator}
    There exists a deterministic algorithm $\+A$ such that given as input any $\varepsilon\in(0,1/4)$ and any instance $\Phi = (G=(V,E), \vecf)$ satisfying~\Cref{cond:Holant-condition} with any edge $e\in E$, it outputs a number $\wh{R}$ such that
    $$
        (1 - \varepsilon) R_{\Phi}(e) \le \wh{R} \le (1 + \varepsilon) R_{\Phi}(e),
    $$
    within time $O\left(\abs{V} \cdot \varepsilon^{-\RunningTimeExponent}\right)$.
\end{theorem}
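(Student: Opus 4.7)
My plan is to approximate $R_\Phi(e) = \mu_e(1)/\mu_e(0)$ by setting up a linear program whose variables mimic the normalized coupling ratios $p^\sigma_\cdot, p^\tau_\cdot$ of \Cref{def-key-quantity} over the truncated extended coupling tree, and whose feasible region is forced close to the truthful values by iterating the coupling-error bound of \Cref{lem:coupling-error}. First I reduce to an instance satisfying \Cref{cond-instancepair} by turning $e$ into a half-edge $e_\bot$ at one endpoint while adjusting the other endpoint's signature so the marginal ratio is preserved. I set the truncation depth $\ell = \Theta(B^{-1}\log(1/\varepsilon))$ and build $\+T_\ell(\Phi, \sigma_\bot, \tau_\bot, v_\bot)$ from \Cref{def:truncated-coupling-tree}, checking partial feasibility at each node via \Cref{lem:partial-assignment-feasibility}; by \Cref{prop:property-of-truncated-tree}(2) this takes $\poly(\Delta^{\Delta\ell})$ time.

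The LP has one variable per feasible node (for $x^\sigma$ and $x^\tau$) and per feasible internal-edge choice, plus a single scalar $R \ge 0$. All structural constraints of \Cref{prop:coupling-linear-constraint}(1)--(3) are linear and are imposed verbatim, together with $x^{\sigma_\bot}_{\sigma_\bot,\tau_\bot,\varnothing} = 1$. The only bilinearity comes from \Cref{prop:coupling-linear-constraint}(4) applied at good leaves, which using \Cref{prop:property-of-truncated-tree}(3) reads $x^\sigma_u = R \cdot x^\tau_u \cdot c_u$ with $c_u = f_v(|\tau(E_v)|)/f_v(|\sigma(E_v)|)$ known. I linearize via the substitution $\tilde{x}^\tau_\cdot := R \cdot x^\tau_\cdot$: every homogeneous linear constraint on $x^\tau$ carries over to $\tilde{x}^\tau$, the $\tau$-side root condition becomes $\tilde{x}^\tau_{\sigma_\bot,\tau_\bot,\varnothing} = R$, and the good-leaf relation becomes the linear $x^\sigma_u = c_u \tilde{x}^\tau_u$. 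No equation is enforced at bad leaves. Solving one LP to maximize $R$ and another to minimize $R$ yields an interval $[R_{\min}, R_{\max}]$, from which I output $\wh{R} := \sqrt{R_{\min} R_{\max}}$.

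The true values $x^\sigma_\cdot = p^\sigma_\cdot$, $\tilde{x}^\tau_\cdot = R_\Phi(e) \cdot p^\tau_\cdot$ are LP-feasible by \Cref{prop:coupling-linear-constraint}, so $R_{\min} \le R_\Phi(e) \le R_{\max}$. The tightness step, which I expect to be the main obstacle, starts by telescoping the sum constraints from root to leaves to obtain
\begin{align*}
1 = \sum_{u \in \+L} x^\sigma_u = \sum_{u \in \+L_{\good}} c_u \tilde{x}^\tau_u + \sum_{u \in \+L_{\bad}} x^\sigma_u, \qquad R = \sum_{u \in \+L} \tilde{x}^\tau_u,
\end{align*}
so that all LP slack in $R$ is confined to the bad-leaf coordinates. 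Iterating \Cref{lem:coupling-error} along the $L$-incrementing transitions bounds $\sum_{u \in \+L_{\bad}} p^\sigma_u$ in the true solution by $(1-B)^\ell$, and an analogous bound holds for any LP-feasible $\tilde{x}^\tau$ after invoking \Cref{lem:marginal-ratio-upper-bound} to cap $R$ by $r_{\max}(\Phi)$. A careful comparison of the two leaf-sum identities then converts this $(1-B)^\ell$ slack into $R_{\max}/R_{\min} \le 1 + O(\varepsilon)$; the choice $\ell = \Theta(B^{-1}\log(1/\varepsilon))$ is tuned precisely so that $(1-B)^\ell$ is $O(\varepsilon)$ times the lower bound on the true marginal ratio.

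Finally, the LP has $\poly(\Delta^{\Delta\ell})$ variables and constraints and is therefore solvable in time $\poly(\Delta^{\Delta\ell}) = \varepsilon^{-\poly(\Delta, 1/B)}$, with an additional $|V|$ factor for the half-edge preprocessing and for enumerating signatures in each $c_u$. The delicate part of the argument is controlling the two-sided multiplicative error on $R$: \Cref{lem:coupling-error} only handles one level of $L$-increment at a time, and one must propagate this decay through the extended (partially fictitious) branches of $\+T$ by exploiting that the LP relaxation there is still constrained to agree with the true distribution through \Cref{prop:coupling-linear-constraint}.
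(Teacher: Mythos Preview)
Your proposal has a genuine gap: the LP as you describe it omits precisely the constraint that makes the bad-leaf mass controllable for \emph{arbitrary} feasible solutions, not just for the true values $p^\sigma,p^\tau$. \Cref{lem:coupling-error} is a statement about the specific quantities $p^\sigma_{\sigma,\tau,\seqS}$ arising from the actual coupling process; it does not follow from the structural relations in \Cref{prop:coupling-linear-constraint}(1)--(3) alone. Without imposing the inequality of \Cref{lem:coupling-error} as an explicit LP constraint (this is exactly Constraint~\ref{item-fifth-LP} / \eqref{eqn-hat-error-bound} in the paper's LP), a feasible solution can route essentially all mass through discrepancy children at every level, so $\sum_{u\in\+L_{\bad}} x^\sigma_u$ (weighted appropriately) need not decay at all, and your interval $[R_{\min},R_{\max}]$ will not shrink. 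Your final paragraph identifies this as the delicate step but proposes to handle it ``through \Cref{prop:coupling-linear-constraint}''; that is exactly what does not work.

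Two further issues. First, your telescoped identity $1=\sum_{u\in\+L} x^\sigma_u$ is incorrect: constraints (3) give \emph{two} equalities per $(\sigma,\tau,\seqS,e)$, one for each value of $\sigma_e$, so there is no single flow to sum. The correct identity (the paper's \Cref{lem:ratio-identity-partial} and \Cref{lem:ratio-identity}) is that for every full assignment $x\in\sigma_\bot$ one has $\sum_{u\in\+L:\,x\in\sigma} x^\sigma_u=1$, which after integrating against $\mu$ gives $\mu_{e_\bot}(1)=\sum_{u\in\+L}\mu(\sigma)\,x^\sigma_u$; the weights $\mu(\sigma)$ are essential. Second, your reduction of a normal edge $e=\{u,v\}$ to a single half-edge by ``adjusting the other endpoint's signature so the marginal ratio is preserved'' cannot be done locally: the required adjustment depends on the marginal you are trying to compute. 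The paper instead splits $e$ into two half-edges $e_u,e_v$ and uses the telescoping identity $R_\Phi(e)=R_{\Phi_0^{e_u\gets 1}}(e_v)\cdot R_{\Phi_0^{e_v\gets 0}}(e_u)$, reducing to two instances satisfying \Cref{cond-instancepair}. Once you add Constraint~\ref{item-fifth-LP} and use the correct weighted identities, your idea of making $R$ an LP variable and optimizing it directly (rather than binary-searching over $[r^-,r^+]$ as the paper does) is a legitimate alternative route to the same bound.
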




\subsection{Setting up the linear program}
We first introduce a linear program induced by the coupling process $P^{\!{cp}}$ described in \Cref{def:truncated-random-process}.
Our linear program is built on the extended coupling tree of $P^{\!{cp}}$ rather than the original coupling tree.


\begin{definition}[Linear program induced by the coupling] \label{def:induced-LP}
    For any instance $\left(\Phi = (G, \vecf),  \sigma_\bot, \tau_\bot,v_{\bot}\right)$ satisfying \Cref{cond-instancepair} and  any positive integer $\ell$, 
    let $\+T = \+T_{\ell}(\Phi, \sigma_\bot, \tau_\bot, v_\bot)$ be the $\ell$-truncated extended coupling tree in \Cref{def:truncated-coupling-tree} and $0 \leq r^- \leq r^+$ be two parameters.
    The LP induced by the $\ell$-truncated random process in \Cref{def:truncated-random-process} is as follows.
    The variables of the LP are:
    \begin{itemize}
    \item For each $(\sigma, \tau, \seqS) \in V(\+T)$, there are two variables $\widehat{p}_{\sigma, \tau, \seqS}^{\sigma}$, $\widehat{p}_{\sigma,\tau, \seqS}^{\tau}$.
    \item For each $(\sigma,\tau,\seqS)\in \+V\setminus \+L$ and each $e\in E_{v(\sigma,\tau)}^{\sigma}$, there are two variables $\widehat{p}_{\sigma,\tau,\seqS,e}^{\sigma}$, and $\widehat{p}_{\sigma,\tau,\seqS,e}^{\tau}$.
    \end{itemize}
    The constraints of the LP are:
    \begin{enumerate}
        \item All $\widehat{p}^{\sigma}_{\sigma, \tau, \seqS}, \widehat{p}^{\tau}_{\sigma, \tau, \seqS},\widehat{p}^{\sigma}_{\sigma, \tau, \seqS,e}, \widehat{p}^{\tau}_{\sigma, \tau, \seqS,e}$ are in $[0, 1]$. In particular, $\widehat{p}^{\sigma_\bot}_{\sigma_\bot,\tau_\bot, \varnothing}=\widehat{p}^{\tau_\bot}_{\sigma_\bot,\tau_\bot,\varnothing} = 1$.\label{item-first-LP}
        \item For each $(\sigma, \tau, \seqS)\in \+V\setminus\+L$, 
        let $v=v(\sigma,\tau)$. Then
            \begin{align}\label{eqn-hat-inter-sum1}
               \widehat{p}^{\sigma}_{\sigma, \tau, \seqS} = \sum_{e \in  E_v^{\sigma}} \widehat{p}^{\sigma}_{\sigma, \tau, \seqS, e}, \quad  \widehat{p}^{\tau}_{\sigma,\tau, \seqS}=\sum_{e\in  E_v^{\sigma}} \widehat{p}^{\tau}_{\sigma,\tau, \seqS,e}.
            \end{align}\label{item-second-LP}
        \item For each $(\sigma, \tau, \seqS)\in \+V\setminus\+L$ and $e\in E_{v}^{\sigma}$ where $v = v(\sigma,\tau)$,
            if ${\!{Ham}\left(\sigma,{E_{v}}\right)}<{\!{Ham}\left(\tau,{E_{v}}\right)}$, then
            \begin{align}\label{eqn-hat-inner-child-sum1}
                \widehat{p}^{\sigma}_{\sigma,\tau,\seqS,e}=\widehat{p}^{\sigma\land (e\gets 0)}_{\sigma\land (e\gets 0),\tau\land (e\gets 0), \seqS \circ e}, \quad  \widehat{p}^{\sigma}_{\sigma,\tau, \seqS,e}=\widehat{p}^{\sigma\land (e\gets 1)}_{\sigma\land (e\gets 1),\tau\land (e\gets 0), \seqS \circ e} + \widehat{p}^{\sigma \land (e\gets 1)}_{\sigma\land (e\gets 1),\tau \land (e\gets 1), \seqS \circ e},
            \end{align}
            \begin{align}\label{eqn-hat-inner-child-sum2}
                \widehat{p}^{\tau}_{\sigma,\tau, \seqS,e}=\widehat{p}^{\tau\land (e\gets 0)}_{\sigma\land (e\gets 0),\tau\land (e\gets 0), \seqS\circ e}+\widehat{p}^{\tau\land (e\gets 0)}_{\sigma\land (e\gets 1),\tau\land (e\gets 0), \seqS \circ e}, \quad  \widehat{p}^{\tau}_{\sigma,\tau,\seqS,e}=\widehat{p}^{\tau\land (e\gets 1)}_{\sigma\land (e\gets 1),\tau\land (e\gets 1),\seqS\circ e}.
            \end{align}
            Otherwise, 
            \begin{align}\label{eqn-hat-inner-child-sum3}
                 \widehat{p}^{\sigma}_{\sigma,\tau,\seqS,e}= \widehat{p}^{\sigma \land (e\gets 0)}_{\sigma\land (e\gets 0),\tau\land (e\gets 0),\seqS\circ e}+ \widehat{p}^{\sigma\land (e\gets 0)}_{\sigma\land (e\gets 0),\tau\land (e\gets 1), \seqS\circ e}, \quad  \widehat{p}^{\sigma}_{\sigma,\tau,\seqS,e}=\widehat{p}^{\sigma\land (e\gets 1)}_{\sigma\land (e\gets 1),\tau\land (e\gets 1), \seqS\circ e},
            \end{align}
            \begin{align}\label{eqn-hat-inner-child-sum4}
                 \widehat{p}^{\tau}_{\sigma,\tau,\seqS,e}= \widehat{p}^{\tau\land (e\gets 0)}_{\sigma\land (e\gets 0),\tau\land (e\gets 0),\seqS\circ e}, \quad   \widehat{p}^{\tau}_{\sigma,\tau,\seqS,e}= \widehat{p}^{\tau\land (e\gets 1)}_{\sigma\land (e\gets 0),\tau\land (e\gets 1),\seqS\circ e} +  \widehat{p}^{\tau \land (e\gets 1)}_{\sigma\land (e\gets 1),\tau\land (e\gets 1),\seqS\circ e}.
            \end{align} \label{item-third-LP}
        \item For any $(\sigma,\tau,\seqS)\in \+L_{\!{good}}\cap \+V$,
            \begin{align}\label{eqn-hat-ratio}
                 \frac{ \mu(\tau)}{ \mu(\sigma)}\cdot r^-\cdot {\widehat{p}^{\tau}_{\sigma,\tau, \seqS}}\leq {\widehat{p}^{\sigma}_{\sigma,\tau, \seqS}}\leq  \frac{ \mu(\tau)}{ \mu(\sigma)}\cdot  r^+ \cdot{\widehat{p}^{\tau}_{\sigma,\tau, \seqS}}.
            \end{align}\label{item-forth-LP}
        \item For any $(\sigma,\tau, \seqS)\in \+V\setminus \+L$ and $\+D \triangleq \+D(\sigma,\tau, \seqS)$,
           \begin{align}\label{eqn-hat-error-bound}
               \sum_{(\sigma',\tau',\seqS')\in \+D}\widehat{p}^{\sigma'}_{\sigma',\tau', \seqS'} \geq  B \cdot \widehat{p}^{\sigma}_{\sigma,\tau, \seqS}, \quad \sum_{(\sigma',\tau',\seqS')\in \+D}\widehat{p}^{\tau'}_{\sigma',\tau',\seqS'}\geq  B \cdot \widehat{p}^{\tau}_{\sigma,\tau,\seqS}.
           \end{align}\label{item-fifth-LP}
        \item  For each node $(\sigma, \tau, \seqS)$ in $V(\+T)\setminus \+V$, $ \widehat{p}^{\sigma}_{\sigma,\tau, \seqS}= \widehat{p}^{\tau}_{\sigma,\tau, \seqS}=0$. \label{item-sixth-LP}
    \end{enumerate}
\end{definition}

Unlike typical LPs in the literature~\cite{Moitra19,GLLZ19,GGGY21,JPV21,WY24}, we introduce a new set of auxiliary constraints to mimic the fractional choice of edges in~\eqref{eqn-hat-inter-sum1}.
Moreover, the above LP bounds the error caused by the coupling in a few steps within a single constraint \eqref{eqn-hat-error-bound}.
These design reflects the nature of the coupling in~\cite{CG24bMatching} and the structure of the extended coupling tree.
As we shall see, the constraints presented in~\eqref{eqn-hat-error-bound} play a crucial role in bounding the deviation of our bootstrap paradigm from the coupling procedure $P^{\!{cp}}$, particularly in scenarios where the ``wrong'' edges may be selected.

\subsection{The analysis of the linear program}
Recall the definition of the marginal ratio $R_{\Phi}(e_\bot)$ in \eqref{def-rphie}. In this section, 
we demonstrate that the feasibility of the LP in \Cref{def:induced-LP} can be efficiently checked. Furthermore, if the LP is feasible, one can obtain a good approximation of $R_{\Phi}(e_\bot)$.

At first, we show the efficiency of constructing and verifying the feasibility of the LP.
\begin{restatable}[Cost for constructing LP]{lemma}{BuildingCost} 
\label{lem:building-cost-of-LP}
    For any instance $(\Phi, \sigma_\bot, \tau_\bot, v_\bot)$ satisfying~\Cref{cond-instancepair}, any non-negative integer $\ell \ge 0$ and any two real numbers $0 \le r^- \le r^+$,
    the LP in \Cref{def:induced-LP} can be constructed in time $\poly\left(\Delta^{\Delta \ell}\right)$. Thus one can check the feasibility of the LP in time $\poly\left(\Delta^{\Delta \ell}\right)$.
\end{restatable}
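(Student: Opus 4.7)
The plan is to bound both the number of variables/constraints and the per-item construction cost by $\poly\left(\Delta^{\Delta\ell}\right)$, and then invoke a polynomial-time LP solver. The key structural bounds come from~\Cref{prop:property-of-truncated-tree}: the extended coupling tree $\+T$ has degree at most $3\Delta$ and depth at most $\Delta\ell$, so $|V(\+T)|\leq (3\Delta)^{\Delta\ell+1}=\poly\left(\Delta^{\Delta\ell}\right)$, and each label $(\sigma,\tau,\seqS,v,L)$ satisfies $|\Lambda(\sigma)|=|\Lambda(\tau)|\leq \Delta\ell+1$.

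First I would explicitly construct $\+T$ by a BFS starting from the root $(\sigma_\bot,\tau_\bot,\varnothing,v_\bot,0)$. At each node, I determine whether it is a leaf by checking $L\geq \ell$, $E_v^\sigma=\emptyset$, or infeasibility of $(\sigma,\tau)$; the last of these uses~\Cref{lem:partial-assignment-feasibility} and costs $O(\Delta\ell)$. If the node is internal, I enumerate the at most $\Delta$ edges in $E_v^\sigma$, compute the Hamming weights $\Ham(\sigma,E_v)$ and $\Ham(\tau,E_v)$ locally at $v$, and emit the three children specified in~\Cref{def:truncated-coupling-tree}. Child labels are canonicalized (with $\seqS$ serving as a unique identifier by~\Cref{lemma-truncated-coupling-tree}) so that variables and constraints can be indexed consistently. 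Each node thus takes $\poly(\Delta\ell)$ time, giving total tree-construction cost $\poly\left(\Delta^{\Delta\ell}\right)$.

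Next I would instantiate the variables of~\Cref{def:induced-LP}: two variables per node in $V(\+T)$, plus two variables per (internal-node, incident-edge) pair, which yields at most $O(\Delta\cdot |V(\+T)|)=\poly\left(\Delta^{\Delta\ell}\right)$ variables. Constraints (1), (2), (3), and (6) are read off directly from the local structure at each node in time $O(\Delta)$ per node. For constraint (4), applied to every $(\sigma,\tau,\seqS)\in \+L_\good\cap \+V$, I use~\Cref{prop:property-of-truncated-tree}(\ref{item:CT-property-ratio}) to compute the required ratio as $\mu(\tau)/\mu(\sigma)=f_v(|\tau(E_v)|)/f_v(|\sigma(E_v)|)$, which is purely local and costs $O(\Delta)$. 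For constraint (5), I enumerate $\+D(\sigma,\tau,\seqS)$: these are exactly the nodes reached from $(\sigma,\tau,\seqS)$ by pinning all of $E_v^\sigma$ to $\zero$ along some permutation of edges; there are at most $|E_v^\sigma|!\leq \Delta!$ such nodes, and each corresponds to a label one can compute and look up in the already-built tree in $\poly(\Delta\ell)$ time. The resulting coefficients for $B=B(\Phi)$ are computed once up front from~\eqref{eq-def-bmin} in time $O(|V|\cdot \Delta)$.

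Summing across all nodes, the total number of constraints and the total construction time are both $\poly\left(\Delta^{\Delta\ell}\right)$. Since linear programming admits a polynomial-time algorithm in the bit size of its description (e.g.\ the ellipsoid or interior-point method), feasibility can then be decided within the same bound $\poly\left(\Delta^{\Delta\ell}\right)$. The main subtlety I expect is a careful accounting of the $\+D$-constraints in (5): naively one might double-count or miss permutations, but because all such labels live at depth $|\seqS|+|E_v^\sigma|$ in $\+T$ and are uniquely identified by their $\seqS'$ component (\Cref{lemma-truncated-coupling-tree}), enumeration reduces to iterating over permutations of at most $\Delta$ edges and locating each resulting label in the tree, which remains within the stated budget.
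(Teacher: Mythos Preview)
Your proposal is correct and follows essentially the same approach as the paper: build $\+T$ top-down using the size bounds from \Cref{prop:property-of-truncated-tree} and the $O(\Delta\ell)$ feasibility test from \Cref{lem:partial-assignment-feasibility}, then read off the variables and Constraints~\ref{item-first-LP}--\ref{item-third-LP},~\ref{item-sixth-LP} locally, handle Constraint~\ref{item-forth-LP} via the closed-form ratio in \Cref{prop:property-of-truncated-tree}(\ref{item:CT-property-ratio}), and appeal to a polynomial-time LP solver for feasibility. The one tactical difference is in Constraint~\ref{item-fifth-LP}: you enumerate $\+D(\sigma,\tau,\seqS)$ by listing the at most $|E_v^\sigma|!\le \Delta!\le \Delta^{\Delta}$ permutations of $E_v^\sigma$ and looking up each resulting label, whereas the paper simply scans all of $V(\+T)$ and tests membership in $\+D(\sigma,\tau,\seqS)$ for each node in $O(\Delta\ell)$ time; both stay within $\poly(\Delta^{\Delta\ell})$, and your variant is arguably tighter but requires the extra observation (which you implicitly use) that every permutation of $E_v^\sigma$ does yield a node of $\+T$ because the intermediate ``all-zero'' children remain feasible and keep $L$ and $v$ unchanged.
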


The following lemma ensures the feasibility of the LP by the coupling process.
\begin{restatable}[Feasibility of LP]{lemma}{LPFeasibility}\label{lem:feasibility-of-LP}
    If $r^- \le R_\Phi(e_\bot) \le r^+$, a feasible solution to the LP in~\Cref{def:induced-LP} exists.
\end{restatable}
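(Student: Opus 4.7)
The plan is to exhibit a specific feasible point, namely the one inherited directly from the actual coupling process. Concretely, I would set
\[
\widehat{p}^{\sigma}_{\sigma,\tau,\seqS} := p^{\sigma}_{\sigma,\tau,\seqS}, \qquad \widehat{p}^{\tau}_{\sigma,\tau,\seqS} := p^{\tau}_{\sigma,\tau,\seqS}
\]
for every $(\sigma,\tau,\seqS) \in V(\+T)$, and
\[
\widehat{p}^{\sigma}_{\sigma,\tau,\seqS,e} := p^{\sigma}_{\sigma,\tau,\seqS,e}, \qquad \widehat{p}^{\tau}_{\sigma,\tau,\seqS,e} := p^{\tau}_{\sigma,\tau,\seqS,e}
\]
for every $(\sigma,\tau,\seqS) \in \+V \setminus \+L$ and $e \in E^{\sigma}_{v(\sigma,\tau)}$, with $p^{\cdot}_{\cdot}$ being the quantities of \Cref{def-key-quantity}. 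It then remains to check that this candidate satisfies each of the six constraint groups of \Cref{def:induced-LP}.

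Most of these checks are immediate. Constraint group (\ref{item-first-LP}) (membership in $[0,1]$ together with the boundary condition at the root) is exactly \Cref{prop:coupling-linear-constraint}(\ref{prop-cp-first}). Groups (\ref{item-second-LP}) and (\ref{item-third-LP}) are literal restatements of parts (2) and (3) of \Cref{prop:coupling-linear-constraint}, so the identities transfer verbatim once the $\widehat{p}$'s are replaced by the $p$'s. The error-bound group (\ref{item-fifth-LP}) is exactly the content of \Cref{lem:coupling-error}, and the vanishing requirement (\ref{item-sixth-LP}) on nodes in $V(\+T)\setminus\+V$ is built into \Cref{def-key-quantity} itself.

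The only group where the hypothesis $r^-\le R_{\Phi}(e_\bot) \le r^+$ actually enters is the ratio condition (\ref{item-forth-LP}) at good leaves. Here I would invoke \Cref{prop:coupling-linear-constraint}(4), which says that for every $(\sigma,\tau,\seqS)\in \+V$ (and in particular for every good leaf in $\+L_{\good}\cap\+V$),
\[
p^{\sigma}_{\sigma,\tau,\seqS} \;=\; p^{\tau}_{\sigma,\tau,\seqS} \cdot R_{\Phi}(e_\bot) \cdot \frac{\mu(\tau)}{\mu(\sigma)}.
\]
Since $p^{\tau}_{\sigma,\tau,\seqS}\ge 0$ and $\mu(\tau)/\mu(\sigma)\ge 0$, sandwiching $R_{\Phi}(e_\bot)$ between $r^-$ and $r^+$ yields exactly \eqref{eqn-hat-ratio}.

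I do not anticipate a real obstacle: every LP constraint has a pre-packaged counterpart proven in the previous section, and the match between the LP's constraint list and those statements is one-to-one by design. The only point requiring any thought is the good-leaf inequality, precisely because it is the single location where the externally supplied bounds $r^\pm$ on the marginal ratio are consumed; everything else is a direct transcription.
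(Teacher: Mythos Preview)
Your proposal is correct and matches the paper's own proof essentially line for line: the paper also sets $\widehat{p}^{\cdot}_{\cdot} := p^{\cdot}_{\cdot}$ and then invokes \Cref{prop:coupling-linear-constraint} for constraints (\ref{item-first-LP})--(\ref{item-forth-LP}), \Cref{lem:coupling-error} for (\ref{item-fifth-LP}), and \Cref{def-key-quantity} for (\ref{item-sixth-LP}). Your treatment of constraint (\ref{item-forth-LP}) is in fact slightly more explicit than the paper's, correctly identifying it as the sole place where the hypothesis $r^-\le R_\Phi(e_\bot)\le r^+$ is used.
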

For completeness, we provide the full proofs of the above two lemmas in~\Cref{sec:omitted-proofs-LP}.

To approximate the marginal ratio $R_{\Phi}(e_\bot)$, the key property is that a feasible solution to the LP provides both upper and lower bounds for it.
\begin{theorem} \label{thm:ratio-bound-by-LP}
    Assume that all the constraints of the LP in~\Cref{def:induced-LP} hold. Then it holds that
    \begin{align*}
        r^{-} \left(1 - \left(1 - B^2\right)^{\ell}\right)\le R_{\Phi}(e_\bot) \le {r^+}\left(1 - \left(1 - B^2\right)^{\ell}\right)^{-1}.
    \end{align*}
\end{theorem}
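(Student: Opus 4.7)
The plan is to convert the LP variables $\widehat{p}$ into unnormalized ``flow'' values $\widetilde{p}^\sigma_{\sigma,\tau,\seqS} \defeq \widehat{p}^\sigma_{\sigma,\tau,\seqS} \cdot \mu^{\sigma_\bot}_\seqS(\sigma)$ and $\widetilde{p}^\tau_{\sigma,\tau,\seqS} \defeq \widehat{p}^\tau_{\sigma,\tau,\seqS} \cdot \mu^{\tau_\bot}_\seqS(\tau)$ (with analogous quantities on the edge-labelled variables), and then combine three ingredients: flow conservation from the root, a leaf-level certificate for $R_\Phi(e_\bot)$, and the error constraints that cap the mass escaping into bad leaves. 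For the first ingredient, using the LP equalities \eqref{eqn-hat-inter-sum1}--\eqref{eqn-hat-inner-child-sum4} together with the chain rule $\mu^{\sigma_\bot}_{\seqS \circ e}(\sigma \land (e \gets c)) = \mu^{\sigma_\bot}_\seqS(\sigma) \cdot \mu^\sigma_e(c)$, I verify that at every internal node the three children through any fixed edge $e$ partition its $\widetilde{p}^\sigma$-value; summing over $e \in E_{v(\sigma,\tau)}^\sigma$ and telescoping to the root yields $\sum_{(\sigma',\tau',\seqS') \in \+L} \widetilde{p}^{\sigma'}_{\sigma',\tau',\seqS'} = 1$, and symmetrically for $\tau$.

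At a good feasible leaf $(\sigma,\tau,\seqS) \in \+L_\good \cap \+V$, the inclusions $\sigma \supseteq \sigma_\bot$ and $\tau \supseteq \tau_\bot$ give $\mu^{\sigma_\bot}_\seqS(\sigma) = \mu(\sigma)/\mu_{e_\bot}(1)$ and $\mu^{\tau_\bot}_\seqS(\tau) = \mu(\tau)/\mu_{e_\bot}(0)$. Substituting these into the LP's good-leaf constraint \eqref{eqn-hat-ratio} produces the simple linear certificate
\[
    r^{-}\, \widetilde{p}^\tau_{\sigma,\tau,\seqS} \;\le\; R_\Phi(e_\bot) \cdot \widetilde{p}^\sigma_{\sigma,\tau,\seqS} \;\le\; r^{+}\, \widetilde{p}^\tau_{\sigma,\tau,\seqS}.
\]

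To bound the bad mass I establish by induction on the subtree rooted at $n = (\sigma,\tau,\seqS) \in \+V$ the stronger claim
\[
    \sum_{(\sigma',\tau',\seqS') \in \+L_\bad \cap \+V,\; \text{descendant of } n} \widetilde{p}^{\sigma'}_{\sigma',\tau',\seqS'} \;\le\; (1-B^2)^{\ell - L(n)} \cdot \widetilde{p}^\sigma_n,
\]
and the analogous bound for $\tau$. The base case (leaves) is immediate, since any bad leaf has $L = \ell$ and the factor equals $1$. For internal $n$, I combine the LP error constraint \eqref{eqn-hat-error-bound} with~\Cref{lem:marginal-bound} (which gives $\mu^\sigma_{E_{v(\sigma,\tau)}^\sigma}(\boldsymbol{0}) \ge B$) to deduce $\sum_{(\sigma',\tau',\seqS') \in \+D(n)} \widetilde{p}^{\sigma'}_{\sigma',\tau',\seqS'} \ge B^2\, \widetilde{p}^\sigma_n$. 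Since $\+D(n)$ consists entirely of good leaves at level $L(n)$, flow conservation implies that the total mass reaching ``post-failure'' descendants---those whose $L$-label has strictly increased via the first failed coupling and which hence lie at level $L(n) + 1$---is at most $(1-B^2)\widetilde{p}^\sigma_n$; applying the induction hypothesis at each such post-failure node then closes the inductive step.

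Combining the pieces, summing the leaf-level certificate over $\+L_\good \cap \+V$ gives $R_\Phi(e_\bot) \sum_{\+L_\good \cap \+V} \widetilde{p}^\sigma \le r^{+} \sum_{\+L_\good \cap \+V} \widetilde{p}^\tau \le r^{+}$, while flow conservation together with the bad-mass bound applied at the root yields $\sum_{\+L_\good \cap \+V} \widetilde{p}^\sigma \ge 1 - (1-B^2)^\ell$; this delivers the upper bound on $R_\Phi(e_\bot)$, and the symmetric argument on the $\tau$-side delivers the lower bound. The main obstacle I anticipate is the inductive step for the bad-mass bound: the LP imposes no direct constraint on how the mass $\widetilde{p}^\sigma_e \cdot \mu^\sigma_e(1)$ at an edge is split between the failed-coupling child and the $(1,1)$-successful child, so careful flow accounting through all three child types is required to confirm that the total post-failure mass is bounded by $(1-B^2)\widetilde{p}^\sigma_n$ rather than something larger.
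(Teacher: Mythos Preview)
Your proposal is correct and follows essentially the same route as the paper: the paper's \Cref{lem:ratio-identity} is exactly your flow-conservation identity $\sum_{\+L}\widetilde p^\sigma=1$, the paper's \Cref{lem:LP-truncated-error} is your bad-mass bound $\sum_{\+L_{\bad}}\widetilde p^\sigma\le(1-B^2)^\ell$, and the final combination via the good-leaf constraint \eqref{eqn-hat-ratio} is identical. The only cosmetic difference is that the paper proves the bad-mass bound by induction on $\ell$ (comparing $\+T_\ell$ to $\+T_{\ell+1}$) whereas you induct on $\ell-L(n)$ inside the fixed tree; your anticipated obstacle is not a real one, since the lower bound $\sum_{\+D(n)}\widetilde p^{\sigma'}\ge B^2\,\widetilde p^\sigma_n$ together with flow conservation already forces $\sum_{\+F(n)}\widetilde p^\sigma\le(1-B^2)\widetilde p^\sigma_n$ without any need to control the $(1,0)$/$(1,1)$ split.
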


To prove~\Cref{thm:ratio-bound-by-LP}, we show some properties of the feasible solution to the LP.

\begin{lemma} \label{lem:ratio-identity}
     Assume that all the constraints of the LP in~\Cref{def:induced-LP} hold. Then it holds that
    \begin{align*}
        \mu_{e_{\bot}}(1) = \sum_{(\sigma, \tau, \seqS) \in \+L}\mu(\sigma)\cdot\widehat{p}_{\sigma, \tau, \seqS}^{\sigma},
    \end{align*}
    \begin{align*}
        \mu_{e_{\bot}}(0)  = \sum_{(\sigma, \tau, \seqS) \in \+L}\mu(\tau)\cdot\widehat{p}_{\sigma, \tau, \seqS}^{\tau}.
    \end{align*}
\end{lemma}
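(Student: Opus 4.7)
The plan is to prove both identities by establishing, for every internal node of the extended coupling tree $\+T$, a local conservation law
\[
\mu(\sigma)\,\widehat{p}^{\sigma}_{\sigma,\tau,\seqS} \;=\; \sum_{(\sigma',\tau',\seqS')\ \text{child of}\ (\sigma,\tau,\seqS)} \mu(\sigma')\,\widehat{p}^{\sigma'}_{\sigma',\tau',\seqS'},
\]
and then telescoping it from the root down to the leaves. Since the root carries the label $(\sigma_\bot,\tau_\bot,\varnothing)$ with $\sigma_\bot = (e_\bot \gets 1)$ and $\widehat{p}^{\sigma_\bot}_{\sigma_\bot,\tau_\bot,\varnothing} = 1$ by constraint~(\ref{item-first-LP}) of the LP, the root value of $\mu(\sigma)\widehat{p}^{\sigma}$ equals $\mu_{e_\bot}(1)$, so the telescope yields the first identity immediately. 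The second identity follows by the symmetric argument with $\tau$ replacing $\sigma$, using $\tau_\bot = (e_\bot \gets 0)$ and $\mu(\tau_\bot) = \mu_{e_\bot}(0)$.

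For the local step at a node $(\sigma,\tau,\seqS) \in \+V \setminus \+L$ with $v = v(\sigma,\tau)$, I plan to process each edge $e \in E_v^{\sigma}$ separately; take the case $\Ham(\sigma,E_v) < \Ham(\tau,E_v)$, the other case being symmetric. The two equations in~\eqref{eqn-hat-inner-child-sum1} express $\widehat{p}^{\sigma}_{\sigma,\tau,\seqS,e}$ in two different ways, once as the value at the single child whose $\sigma$-side sets $e$ to $0$, and once as the sum over the two children whose $\sigma$-side sets $e$ to $1$. Multiplying the first equation by $\mu(\sigma \land (e \gets 0))$, multiplying the second by $\mu(\sigma \land (e \gets 1))$, and adding, the marginalization identity $\mu(\sigma \land (e \gets 0)) + \mu(\sigma \land (e \gets 1)) = \mu(\sigma)$ collapses the left-hand side to $\mu(\sigma)\widehat{p}^{\sigma}_{\sigma,\tau,\seqS,e}$, while the right-hand side becomes the sum of $\mu(\sigma_i')\widehat{p}^{\sigma_i'}_{c_i}$ over the three children $c_1,c_2,c_3$ spawned by the edge $e$. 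Summing this over $e \in E_v^{\sigma}$ and invoking~\eqref{eqn-hat-inter-sum1} gives the local conservation law. The $\tau$-side is identical, using~\eqref{eqn-hat-inner-child-sum2} in place of~\eqref{eqn-hat-inner-child-sum1}.

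Iterating the local conservation level by level from the root, the identity at depth $d$ rewrites $\mu(\sigma_\bot)\widehat{p}^{\sigma_\bot}_{\sigma_\bot,\tau_\bot,\varnothing}$ as the sum of $\mu(\sigma')\widehat{p}^{\sigma'}_{\sigma',\tau',\seqS'}$ over all leaves at depth $\le d$ together with all internal nodes at depth exactly $d$. Since~\Cref{prop:property-of-truncated-tree}(\ref{item:CT-property-size}) bounds the depth of $\+T$ by $\Delta \ell$, the procedure terminates after finitely many steps and only the leaf sum remains. Property~(\ref{item:CT-property-1}) of the same proposition guarantees that every non-leaf node lies in $\+V$, so the local step is available at every internal node, and constraint~(\ref{item-sixth-LP}) forces $\widehat{p}^{\sigma} = \widehat{p}^{\tau} = 0$ at each infeasible leaf, meaning such leaves contribute zero and can be harmlessly included in $\sum_{(\sigma,\tau,\seqS)\in \+L}$. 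The main obstacle I anticipate is purely bookkeeping: one must enumerate, for each internal node, all $3|E_v^{\sigma}|$ children exactly once with the correct $\mu$-weight, and handle both Hamming-weight cases of~\Cref{def:induced-LP}(\ref{item-third-LP}) in parallel. Beyond that the argument is a telescope of linear identities against the marginalization identity for $\mu$.
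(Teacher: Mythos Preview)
Your proposal is correct and takes a genuinely different route from the paper. The paper proves the lemma by first establishing the auxiliary \Cref{lem:ratio-identity-partial}: for every full assignment $x\in\sigma_\bot$, the leaf sum $\sum_{(\sigma,\tau,\seqS)\in\+L:\,x\in\sigma}\widehat p^{\sigma}_{\sigma,\tau,\seqS}$ equals $1$; the desired identity then follows by writing $\mu(\sigma)=\sum_{x\in\sigma}\mu(x)$ and interchanging the order of summation. Your approach instead proves the node-wise mass-conservation identity $\mu(\sigma)\,\widehat p^{\sigma}_{\sigma,\tau,\seqS}=\sum_{\text{children}}\mu(\sigma')\,\widehat p^{\sigma'}_{\sigma',\tau',\seqS'}$ directly from the two lines of~\eqref{eqn-hat-inner-child-sum1} (resp.~\eqref{eqn-hat-inner-child-sum3}) weighted by $\mu(\sigma\land(e\gets 0))$ and $\mu(\sigma\land(e\gets 1))$, together with~\eqref{eqn-hat-inter-sum1}, and then telescopes. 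This is essentially the paper's pointwise identity~\eqref{eq-extension-sum} integrated against $\mu$, so the two arguments are equivalent in spirit; yours is more direct and avoids the detour through an $x$-indexed statement, while the paper's version isolates \Cref{lem:ratio-identity-partial} as a standalone fact with the pleasant random-walk interpretation described there. Your handling of the boundary issues (finite depth via \Cref{prop:property-of-truncated-tree}(\ref{item:CT-property-size}), applicability of the local step via (\ref{item:CT-property-1}), and infeasible leaves via constraint~(\ref{item-sixth-LP})) is all in order.
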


The following lemma is the key ingredient to prove~\Cref{lem:ratio-identity}. 

\begin{restatable}{lemma}{RatioSum} \label{lem:ratio-identity-partial}
    Assume that all the constraints of the LP in~\Cref{def:induced-LP} hold. Then it holds that
    \begin{align}\label{eqn-sum-leftside}
        \forall x\in \sigma_\bot, \quad \sum_{(\sigma,\tau, \seqS)\in \+L: \ x\in \sigma} \widehat{p}^{\sigma}_{\sigma,\tau,\seqS}=1,
    \end{align}
    \begin{align}\label{eqn-sum-rightside}
        \forall y\in \tau_\bot, \quad \sum_{(\sigma,\tau, \seqS)\in \+L: \ y\in \tau} \widehat{p}^{\tau}_{\sigma,\tau, \seqS}=1.
    \end{align}
\end{restatable}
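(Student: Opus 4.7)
The plan is to prove a \emph{local conservation law} at every internal node of $\+T$ and then propagate it from the root down to the leaves by induction on the subtree structure. By symmetry between $\sigma$ and $\tau$, it suffices to treat the $\sigma$-identity \eqref{eqn-sum-leftside}; the proof of \eqref{eqn-sum-rightside} is essentially identical, using the $\tau$-halves of constraints \ref{item-second-LP} and \ref{item-third-LP}.

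Fix an arbitrary $x$ with $x \in \sigma_\bot$ (i.e.\ $x(e_\bot) = 1$). The local conservation law I will establish is: for every internal node $u = (\sigma, \tau, \seqS) \in \+V \setminus \+L$ with $x \in \sigma$,
\[
  \widehat{p}^{\sigma}_{\sigma, \tau, \seqS}
  \;=\; \sum_{\substack{(\sigma', \tau', \seqS')\text{ child of }u\\ x \in \sigma'}} \widehat{p}^{\sigma'}_{\sigma', \tau', \seqS'}.
\]
To see this, first apply \eqref{eqn-hat-inter-sum1} to write $\widehat{p}^{\sigma}_{\sigma, \tau, \seqS} = \sum_{e \in E_v^{\sigma}} \widehat{p}^{\sigma}_{\sigma, \tau, \seqS, e}$. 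For each edge $e$, the three children generated by Definition~\ref{def:truncated-coupling-tree} use $\sigma' = \sigma \land (e \gets 0)$ or $\sigma' = \sigma \land (e \gets 1)$, and exactly the $\sigma'$ with $\sigma'(e) = x(e)$ are the $x$-compatible ones. Now a case analysis on $\mathrm{Ham}(\sigma, E_v)$ versus $\mathrm{Ham}(\tau, E_v)$ shows that in either case, identity \eqref{eqn-hat-inner-child-sum1} or \eqref{eqn-hat-inner-child-sum3} expresses $\widehat{p}^{\sigma}_{\sigma, \tau, \seqS, e}$ exactly as the sum of $\widehat{p}^{\sigma'}$ over those $x$-compatible children of $u$ that branch through $e$. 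Summing over $e$ yields the local conservation law.

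Next I would run a bottom-up induction: define, for each $u = (\sigma, \tau, \seqS) \in V(\+T)$ with $x \in \sigma$,
\[
  S(u) \;\defeq\; \sum_{\substack{(\sigma', \tau', \seqS') \in \+L \cap \mathrm{subtree}(u) \\ x \in \sigma'}} \widehat{p}^{\sigma'}_{\sigma', \tau', \seqS'},
\]
and claim $S(u) = \widehat{p}^{\sigma}_{u}$. The base case $u \in \+L$ is trivial. For the inductive step at an internal $u$, the children partition the leaves of the subtree, so $S(u)$ equals the sum of $S(v_i)$ over the children $v_i$ with $x \in \sigma_{v_i}$; by induction each such $S(v_i)$ equals $\widehat{p}^{\sigma_{v_i}}_{v_i}$, and the local conservation law then yields $S(u) = \widehat{p}^{\sigma}_{u}$. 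One subtlety: if $u$ is internal but some child $v_i$ is infeasible (hence a leaf in $\+T \setminus \+V$), constraint \ref{item-sixth-LP} forces $\widehat{p}^{\sigma_{v_i}}_{v_i} = 0$, so such children contribute $0$ to both sides, keeping the accounting consistent. Applying the claim at the root gives $F(x) = S(\mathrm{root}) = \widehat{p}^{\sigma_\bot}_{\sigma_\bot, \tau_\bot, \varnothing} = 1$ by constraint~\ref{item-first-LP}, as required.

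The only genuinely delicate step is the case analysis for the local conservation law, since the three children at each node depend on which of $\mathrm{Ham}(\sigma, E_v)$ and $\mathrm{Ham}(\tau, E_v)$ is larger, and one must match the grouping of children in Definition~\ref{def:truncated-coupling-tree} with the grouping of terms on the right-hand sides of \eqref{eqn-hat-inner-child-sum1}--\eqref{eqn-hat-inner-child-sum4}. Once that bookkeeping is correct, no quantitative estimate or use of $B$, $r^\pm$, or the error constraint \eqref{eqn-hat-error-bound} is needed; the identity follows purely from the flow-like constraints \ref{item-second-LP}, \ref{item-third-LP}, and \ref{item-sixth-LP}.
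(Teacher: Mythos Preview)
Your proposal is correct and follows essentially the same approach as the paper. Both arguments hinge on the same local conservation law at internal nodes, proved via the identical case analysis on $\Ham(\sigma,E_v)$ versus $\Ham(\tau,E_v)$ using constraints~\ref{item-second-LP} and~\ref{item-third-LP}; the only cosmetic difference is that the paper runs a top-down induction on the depth cutoff (showing the sum over the leaves of the depth-$i$ truncation $\+T^i$ is $1$ for every $i$), whereas you run a bottom-up induction on subtrees, which is an equivalent repackaging of the same flow-conservation argument.
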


A formal proof of this lemma is in \Cref{sec:omitted-proofs-LP}. Here we give an intuitive explanation of \eqref{eqn-sum-leftside}. For any $x\in\sigma_\bot$, the collection of values $\wh{p}_{\sigma, \tau, \seqS}^\sigma$ and $\wh{p}_{\sigma, \tau, \seqS, e}^\sigma$ induces a random walk starting from $(\sigma_\bot,\tau_\bot,\varnothing)$ on the $\ell$-truncated extended coupling tree:
\begin{itemize}
    \item When the current node is $(\sigma,\tau,\seqS)$, move to $(\sigma,\tau,\seqS,e)$ with probability proportional to $\wh{p}_{\sigma, \tau, \seqS, e}^\sigma$ according to the rule in \eqref{eqn-hat-inter-sum1}.
    \item When the current node is $(\sigma,\tau,\seqS,e)$, move to a node $(\sigma',\tau',\seqS')$ such that $x\in \sigma'$ according to the rule in \eqref{eqn-hat-inner-child-sum1} or \eqref{eqn-hat-inner-child-sum3}, depending on whether ${\!{Ham}\left(\sigma,{E_{v}}\right)}<{\!{Ham}\left(\tau,{E_{v}}\right)}$ or not.
\end{itemize}
The \eqref{eqn-sum-leftside} simply states that the random walk terminates at one of nodes in $\+L$ with probability $1$.

\begin{proof}[Proof of~\Cref{lem:ratio-identity}]
    We have
    \begin{align*}
        \sum_{(\sigma, \tau, \seqS) \in \mathcal L} \mu(\sigma)\cdot \widehat{p}_{\sigma, \tau, \seqS}^{\sigma} 
        = \sum_{(\sigma, \tau, \seqS) \in \mathcal L} \widehat{p}_{\sigma, \tau, \seqS}^{\sigma} \sum_{x \in \sigma} \mu(x) 
        = \sum_{(\sigma, \tau, \seqS) \in \mathcal L} \widehat{p}_{\sigma, \tau, \seqS}^{\sigma} \sum_{x \in \sigma_\bot} \left(\mu(x) \cdot  \id{x \in \sigma}\right).
    \end{align*}
    Swapping the summations, we obtain
    \begin{align*}
    \sum_{(\sigma, \tau, \seqS) \in \mathcal L} \mu(\sigma)\cdot \widehat{p}_{\sigma, \tau, \seqS}^{\sigma} 
        &= \sum_{x \in \sigma_\bot} \mu(x) \sum_{(\sigma, \tau, \seqS) \in \mathcal L} \left(\widehat{p}_{\sigma, \tau, \seqS}^{\sigma} \cdot \id{x \in \sigma}\right) \\
        &= \sum_{x \in \sigma_\bot} \mu(x) \sum_{(\sigma, \tau, \seqS) \in \+L : \ x \in \sigma} \widehat{p}_{\sigma, \tau, \seqS}^{\sigma} \\
        &= \sum_{x \in \sigma_\bot} \mu(x) \tag{by~\eqref{eqn-sum-leftside}} \\
        &= \mu_{e_\bot}(1).
    \end{align*}
    Similarly, by \eqref{eqn-sum-rightside}, we also have 
    \begin{align*}
        \sum_{(\sigma, \tau, \seqS) \in \+L}\mu(\tau)\cdot\widehat{p}_{\sigma, \tau, \seqS}^{\tau} = \mu_{e_{\bot}}(0).
    \end{align*}
    The lemma is proved.
\end{proof}

By~\Cref{lem:ratio-identity}, to bound the marginal ratio $R_{\Phi}(e_\bot) = \mu_{e_\bot}(1) / \mu_{e_\bot}(0)$, it suffices to bound the ratio $\left(\mu(\sigma) \cdot \wh{p}_{\sigma, \tau, \seqS}^{\sigma}\right) / \left(\mu(\tau) \cdot \wh{p}_{\sigma, \tau, \seqS}^{\tau}\right)$ for each leaf $(\sigma, \tau, \seqS) \in \+L$ as in \eqref{eqn-hat-ratio}. 
For each good leaf $(\sigma, \tau, \seqS) \in \+L_{\good}$, it is straightforward to verify whether \eqref{eqn-hat-ratio} holds, because $\mu(\sigma)/\mu(\tau)$ can be calculated efficiently by Item~\eqref{item:CT-property-ratio} in~\Cref{prop:property-of-truncated-tree}.
However, for a bad leaf $(\sigma, \tau, \seqS) \in \+L_{\bad}$, it is challenging to determine whether \eqref{eqn-hat-ratio} holds, as we do not know how to estimate $\mu(\sigma)/\mu(\tau)$.  The following lemma allows us to disregard all bad leaves at the cost of a small error. 

\begin{restatable}{lemma}{LPTruncatedError} \label{lem:LP-truncated-error}
     Assume that all the constraints of the LP in~\Cref{def:induced-LP} hold. Then it holds that
    \begin{align}\label{eqn-error1}
        \sum_{(\sigma,\tau, \seqS)\in \+L_{\!{bad}}} \widehat{p}_{\sigma,\tau, \seqS}^{\sigma}\cdot \mu^{\sigma_{\bot}}_{\seqS}(\sigma)\leq (1 - B^2)^{\ell},
    \end{align}
    \begin{align}\label{eqn-error2}
        \sum_{(\sigma,\tau, \seqS)\in \+L_{\!{bad}}}\widehat{p}_{\sigma,\tau, \seqS}^{\tau}\cdot \mu^{\tau_{\bot}}_{\seqS}(\tau)\leq (1- B^2)^{\ell}.
    \end{align}
\end{restatable}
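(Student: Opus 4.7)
The plan is to view the LP variables as marginal-weighted fractional flows on the extended coupling tree and show that the total flow reaching bad leaves decays like $(1-B^2)$ per unit increment of $L$. Define $F^\sigma(u)\defeq \wh{p}^\sigma_{\sigma,\tau,\seqS}\cdot\mu^{\sigma_\bot}_{\seqS}(\sigma)$ for each node $u=(\sigma,\tau,\seqS)\in V(\+T)$, and analogously $F^\tau(u)=\wh{p}^\tau_{\sigma,\tau,\seqS}\cdot \mu^{\tau_\bot}_{\seqS}(\tau)$; the goal becomes bounding $\sum_{u\in\+L_{\bad}}F^\sigma(u)$ and $\sum_{u\in\+L_{\bad}}F^\tau(u)$ by $(1-B^2)^\ell$.

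The first ingredient is flow conservation: $\sum_{u' \text{ child of } u}F^\sigma(u')=F^\sigma(u)$ at every internal $u$. The three children attached to a fixed edge $e\in E_v^\sigma$ carry flow $\wh{p}^\sigma_{\sigma,\tau,\seqS,e}\cdot\mu^{\sigma_\bot}_{\seqS}(\sigma)\cdot\mu^{\sigma_\bot\cup\sigma}_e(c)$ for $c\in\{0,1\}$, obtained by combining the chain rule $\mu^{\sigma_\bot}_{\seqS\circ e}(\sigma\land (e\gets c))=\mu^{\sigma_\bot}_{\seqS}(\sigma)\cdot\mu^{\sigma_\bot\cup\sigma}_e(c)$ with the relations in \eqref{eqn-hat-inner-child-sum1}--\eqref{eqn-hat-inner-child-sum4}. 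Summing over $c$ and then over $e$ via \eqref{eqn-hat-inter-sum1} collapses to $F^\sigma(u)$; the same calculation handles $F^\tau$.

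The second ingredient bounds the flow reaching $\+D(u)$. For every $w\in \+D(u)$ the projection $\sigma(w)=\sigma\land(E_v^\sigma\gets\boldsymbol 0)$ is the same, so its conditional marginal equals the common value $\mu^{\sigma_\bot}_{\seqS}(\sigma)\cdot \mu^{\sigma_\bot\cup\sigma}_{E_v^\sigma}(\boldsymbol 0)$, which by \Cref{lem:marginal-bound} is at least $B\cdot\mu^{\sigma_\bot}_{\seqS}(\sigma)$. Pulling this constant factor out of \eqref{eqn-hat-error-bound} yields $\sum_{w\in\+D(u)}F^\sigma(w)\geq B^2\cdot F^\sigma(u)$, and every $w\in\+D(u)$ is a leaf (since $E_v^{\sigma(w)}=\emptyset$) with $L(w)=L(u)$, so $\+D(u)$ lies inside the set of leaves at $L$-level $L(u)$ in the subtree of $u$.

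The remaining step is a layered induction. Let $N_k$ be the set of nodes at $L$-level $k$ whose parent has $L$-level $k-1$ (with $N_0$ the root), and put $G_k\defeq\sum_{u\in N_k}F^\sigma(u)$, so $G_0=1$. At any internal $u\in N_k$, flow conservation decomposes $F^\sigma(u)$ into flow to leaves at $L$-level $k$ in the subtree of $u$ and flow passing through $N_{k+1}\cap\text{subtree}(u)$; by the previous step the former is at least $B^2F^\sigma(u)$, so the latter is at most $(1-B^2)F^\sigma(u)$. Since every $w\in N_{k+1}$ has a unique ancestor in $N_k$, summing gives $G_{k+1}\leq(1-B^2)G_k$, hence $G_k\leq(1-B^2)^k$. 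Every bad leaf has $L\geq\ell$ and therefore descends from some $w\in N_\ell$, so \eqref{eqn-error1} follows from $\sum_{u\in\+L_{\bad}}F^\sigma(u)\leq G_\ell\leq(1-B^2)^\ell$; the bound \eqref{eqn-error2} is obtained by repeating the argument with $F^\tau$ in place of $F^\sigma$, using the $\tau$-versions of the LP constraints and \Cref{lem:marginal-bound} applied to $\tau$. The main delicate point is step two: one has to notice that $\mu^{\sigma_\bot}_{\seqS(w)}(\sigma(w))$ is a \emph{common} factor across $\+D(u)$, so that the $B$ from \eqref{eqn-hat-error-bound} and the $B$ from \Cref{lem:marginal-bound} can be multiplied into the clean $B^2$ decay that drives the induction.
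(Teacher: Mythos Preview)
Your proof is correct and rests on the same two pillars as the paper's argument: node-level conservation of the weighted quantity $F^\sigma(u)=\wh p^\sigma_{\sigma,\tau,\seqS}\cdot\mu^{\sigma_\bot}_{\seqS}(\sigma)$, and the $B^2$ lower bound on the flow absorbed by $\+D(u)$ obtained by combining Constraint~\ref{item-fifth-LP} with \Cref{lem:marginal-bound}. The organization, however, differs. The paper inducts on the truncation parameter~$\ell$, comparing $\+T_\ell$ with $\+T_{\ell+1}$ and invoking \Cref{lem:ratio-identity} to assert that the total leaf mass is the same in both trees; the decay then comes from showing that the nodes in $\+L_{\bad}(\ell)\setminus\+L(\ell{+}1)$ shed at least a $B^2$ fraction of their mass into $\+L_{\good}(\ell{+}1)$. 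You instead work entirely inside the fixed tree $\+T_\ell$, stratify by the $L$-level, and run the induction on $k$ via the layer sums $G_k$. This buys you a self-contained argument that does not need \Cref{lem:ratio-identity} (your first ingredient is exactly the local flow conservation underlying that lemma) and avoids juggling two trees at once; the paper's route, on the other hand, makes the role of the truncation parameter more explicit. Both are clean, and the key computation---pulling the common marginal factor $\mu^\sigma_{E_v^\sigma}(\boldsymbol 0)\ge B$ out of the sum over $\+D(u)$ and combining it with the $B$ from \eqref{eqn-hat-error-bound}---is identical.
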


The proof of \Cref{lem:LP-truncated-error} is via induction on $\ell$ and is provided in \Cref{sec:omitted-proofs-LP}.
Here we provide an intuitive explanation of~\eqref{eqn-error1}. 
The collection of values $\wh{p}_{\sigma, \tau, \seqS}^\sigma$ and $\wh{p}_{\sigma, \tau, \seqS, e}^\sigma$ induces a random walk starting from $(\sigma_\bot,\tau_\bot,\varnothing)$ to a leaf node on the $\ell$-truncated extended coupling tree:
\begin{itemize}
    \item When the current node is $(\sigma, \tau, \seqS)$, move to $(\sigma, \tau, \seqS, e)$ with probability proportional to $\wh{p}_{\sigma, \tau, \seqS, e}^{\sigma}$ according to the rule in \eqref{eqn-hat-inter-sum1}.
    \item When the current node is $(\sigma, \tau, \seqS, e)$, firstly sample the assignment $\sigma_e$ on $e$ according to the marginal distribution $\mu_{e}^\sigma$, and then move to $(\sigma \land \sigma_e, \tau \land \tau_e, \seqS \circ e)$ with probability proportional to $\wh{p}_{\sigma \land \sigma_e, \tau \land \tau_e, \seqS \circ e}^{\sigma \land \sigma_e}$ according to the rule in \eqref{eqn-hat-inner-child-sum1} or \eqref{eqn-hat-inner-child-sum3}, depending on whether ${\!{Ham}\left(\sigma,{E_{v}}\right)}<{\!{Ham}\left(\tau,{E_{v}}\right)}$ or not.
\end{itemize}
For any bad leaf $(\sigma, \tau, \seqS)\in \+L_{\!{bad}}$, we have $L(\sigma,\tau) = \ell$.
Thus, before reaching any bad leaf from $(\sigma_\bot,\tau_\bot,\varnothing)$, the random work will traverse at least $\ell$ internal node $(\sigma', \tau', \seqS')\in \+V \setminus \+L$ along the path. 
At each internal node $(\sigma', \tau', \seqS')$, after \emph{several} rounds of two steps moving, the probability of arriving at a node in $\+D(\sigma', \tau', \seqS') \subseteq \+L_{\good}$ can be lower bounded by $B^2$.
Here, one $B$ is due to Constraint~\ref{item-fifth-LP}, and the other $B$ is due to the marginal lower bound in~\Cref{lem:marginal-bound}.
Thus, the probability of reaching a bad leaf is no more than $(1 - B^2)^{\ell}$.
The left side of~\eqref{eqn-error1} represents the probability that the random walk arrives at a bad leaf.
Hence, we obtain the upper bound in~\eqref{eqn-error1}.

Now we prove~\Cref{thm:ratio-bound-by-LP} with the above lemmas.
\begin{proof}[Proof of~\Cref{thm:ratio-bound-by-LP}]
    By~\Cref{lem:ratio-identity}, it holds that
    \begin{align*}
        \mu_{e_\bot}(1) &= \sum_{(\sigma, \tau, \seqS) \in \+L} \wh{p}_{\sigma, \tau, \seqS}^{\sigma} \cdot \mu(\sigma) \\
        &= \sum_{(\sigma, \tau, \seqS) \in \+L_{\good}} \wh{p}_{\sigma, \tau, \seqS}^{\sigma} \cdot  \mu(\sigma) + \sum_{(\sigma, \tau, \seqS) \in \+L_{\bad}} \wh{p}_{\sigma, \tau, \seqS}^{\sigma} \cdot  \mu(\sigma) \\
        &= \sum_{(\sigma, \tau, \seqS) \in \+L_{\good}} \wh{p}_{\sigma, \tau, \seqS}^{\sigma} \cdot \mu(\sigma) + \mu_{e_\bot}(1) \sum_{(\sigma, \tau, \seqS) \in \+L_{\bad}} \wh{p}_{\sigma, \tau, \seqS}^{\sigma} \cdot \mu_{\seqS}^{\sigma_{\bot}}(\sigma)\\
        &\le \sum_{(\sigma, \tau, \seqS) \in \+L_{\good}} \wh{p}_{\sigma, \tau, \seqS}^{\sigma} \cdot \mu(\sigma) + \left(1 - B^2\right)^{\ell} \mu_{e_\bot}(1)
    \end{align*}
    where the last inequality holds from~\Cref{lem:LP-truncated-error}. Thus we have
    $$
        \sum_{(\sigma, \tau, \seqS) \in \+L_{\good}} \wh{p}_{\sigma, \tau, \seqS}^{\sigma} \cdot \mu(\sigma) \le \mu_{e_\bot}(1) \le \frac{1}{1 - \left(1 - B^2\right)^{\ell}} \sum_{(\sigma, \tau, \seqS) \in \+L_{\good}} \wh{p}_{\sigma, \tau, \seqS}^{\sigma} \cdot \mu(\sigma).
    $$
    Similarly, we also have
    $$
        \sum_{(\sigma, \tau, \seqS) \in \+L_{\good}} \wh{p}_{\sigma, \tau, \seqS}^{\tau} \cdot\mu(\tau)\le \mu_{e_\bot}(0) \le \frac{1}{1 - \left(1 - B^2\right)^{\ell}} \sum_{(\sigma, \tau, \seqS) \in \+L_{\good}} \wh{p}_{\sigma, \tau, \seqS}^{\tau} \cdot \mu(\tau).
    $$
    Hence, by \eqref{eqn-hat-ratio} we have
    \begin{align*}
        \frac{\mu_{e_\bot}(1)}{\mu_{e_\bot}(0)} \le \frac{1}{1 - \left(1 - B^2\right)^{\ell}} \frac{\sum_{(\sigma, \tau, \seqS) \in \+L_{\good}} \wh{p}_{\sigma, \tau, \seqS}^{\sigma} \cdot \mu(\sigma)}{\sum_{(\sigma, \tau, \seqS) \in \+L_{\good}} \wh{p}_{\sigma, \tau, \seqS}^{\tau} \cdot \mu(\tau)} 
        \le \frac{r^+}{1 - \left(1 - B^2\right)^{\ell}},
    \end{align*}
    and
    \begin{align*}
       \frac{\mu_{e_\bot}(1)}{\mu_{e_\bot}(0)} \ge \left(1 - \left(1 - B^2\right)^{\ell}\right) \frac{\sum_{(\sigma, \tau, \seqS) \in \+L_{\good}} \wh{p}_{\sigma, \tau, \seqS}^{\sigma} \cdot \mu(\sigma)}{\sum_{(\sigma, \tau, \seqS) \in \+L_{\good}} \wh{p}_{\sigma, \tau, \seqS}^{\tau} \cdot \mu(\tau)} 
        \ge \left(1 - \left(1 - B^2\right)^{\ell}\right) r^-.
    \end{align*}
\end{proof}

\subsection{The marginal ratio estimator}\label{sec-mre}

Given any instance satisfying~\Cref{cond:Holant-condition}, we are now ready to complete the proof of~\cref{thm:Holant-marginal-ratio-estimator} by employing the linear program in \Cref{def:induced-LP}. The key ingredient to \Cref{thm:Holant-marginal-ratio-estimator} is the following estimator for the marginal ratio of the half-edge based on the linear program.

\begin{lemma} \label{lem:LP-ratio-estimator}
    There exists a deterministic algorithm $\+A_{\bot}$ such that given as input any $\varepsilon\in(0,1/4)$ and any instance $(\Phi = (G, \vecf), \sigma_\bot, \tau_\bot, v_\bot)$ satisfying~\Cref{cond-instancepair}, it outputs a number $\wh{R}$ such that
    $$
        (1 - \varepsilon) R_{\Phi}(e_\bot) \le \wh{R} \le (1 + \varepsilon) R_{\Phi}(e_\bot),
    $$
   within time $\abs{V(G)}\cdot\varepsilon^{-\poly(\Delta(G),1/B(\Phi))}$.
\end{lemma}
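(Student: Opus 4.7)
The plan is to wrap the LP machinery of Definition \ref{def:induced-LP} in a multiplicative grid search: Lemma \ref{lem:feasibility-of-LP} guarantees that at least one grid value yields a feasible LP, while Theorem \ref{thm:ratio-bound-by-LP} converts feasibility into a close two-sided estimate of $R_\Phi(e_\bot)$. Lemma \ref{lem:building-cost-of-LP} controls the per-query cost.

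The first step is to set the truncation depth $\ell = \lceil \log(4/\varepsilon)/B^2 \rceil$, so that the slack $\alpha \triangleq (1-B^2)^\ell \le \varepsilon/4$ appearing in Theorem \ref{thm:ratio-bound-by-LP} is tolerable. Next, detect the degenerate case $R_\Phi(e_\bot) = 0$: this is equivalent to $\sigma_\bot = (e_\bot \gets 1)$ being infeasible, which Lemma \ref{lem:partial-assignment-feasibility} decides in $O(|V|)$ time; in that case output $\wh{R} = 0$. Otherwise $R_\Phi(e_\bot) > 0$, and Lemma \ref{lem:marginal-ratio-upper-bound} gives the a priori upper bound $R_\Phi(e_\bot) \le r_{\max}(\Phi)$. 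A direct calculation from the definition $R_\Phi(e_\bot) = Z^{e_\bot \gets 1}/Z^{e_\bot \gets 0}$ using that $f_v(0) > 0$ for every $v$ and that signature values are bounded in the input, produces a strictly positive lower bound $R_{\min}$ with $\log(r_{\max}/R_{\min}) = O(|V|)$.

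The core of the algorithm is then a grid search. Set $\delta = \varepsilon/8$ and form the geometric grid $r_i = r_{\max}(1+\delta)^{-i}$ for $i = 0, 1, \ldots, K$, where $K$ is the smallest integer with $r_K \le R_{\min}$; by the bound on $\log(r_{\max}/R_{\min})$ we have $K = O(|V|/\varepsilon)$. For each $i \ge 1$, build the LP of Definition \ref{def:induced-LP} with $(r^-, r^+) = (r_i, r_{i-1})$ at depth $\ell$ and check its feasibility using Lemma \ref{lem:building-cost-of-LP}. By Lemma \ref{lem:feasibility-of-LP}, for any index $i$ satisfying $r_i \le R_\Phi(e_\bot) \le r_{i-1}$ the LP is feasible, so such an $i$ exists. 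Scan $i$ upward and, at the first feasible index, return $\wh{R} \triangleq \sqrt{r_i\,r_{i-1}}$.

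Correctness then follows from Theorem \ref{thm:ratio-bound-by-LP}: for the returned index $i$ we obtain
\[
    r_i(1-\alpha) \;\le\; R_\Phi(e_\bot) \;\le\; \frac{r_{i-1}}{1-\alpha} = \frac{(1+\delta)\,r_i}{1-\alpha},
\]
and the choice $\alpha,\delta \le \varepsilon/4$ yields $\wh{R}/R_\Phi(e_\bot) \in [1-\varepsilon, 1+\varepsilon]$ by elementary algebra. Each LP feasibility query costs $\poly(\Delta^{\Delta \ell}) = \varepsilon^{-\poly(\Delta, 1/B)}$ by Lemma \ref{lem:building-cost-of-LP}, and there are $K = O(|V|/\varepsilon)$ queries, giving total time $|V| \cdot \varepsilon^{-\poly(\Delta, 1/B)}$ as required. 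The main obstacle is establishing the a priori bound $\log(r_{\max}/R_{\min}) = O(|V|)$ cleanly so that the grid remains linear in $|V|$; the remaining work is a routine wrapper around the previously established feasibility and soundness statements.
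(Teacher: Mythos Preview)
Your approach via a geometric grid search is different from the paper's: the paper runs a \emph{binary search} on the additive interval $[0,r_{\max}]$, splitting at the midpoint and testing the two half-LPs; it terminates as soon as \emph{both} halves are simultaneously feasible (or the interval is multiplicatively short), and so never needs a precomputed lower bound $R_{\min}$ on the ratio.

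There is a genuine gap in your plan, exactly at the step you flag as the main obstacle. The bound $\log(r_{\max}/R_{\min})=O(|V|)$ is false in general, and it cannot be salvaged to $|V|\cdot\poly(\Delta,1/B)$ either. Take $V=\{v_\bot,u\}$ with the single normal edge $\{v_\bot,u\}$ and the half-edge $e_\bot=\{v_\bot\}$, and set $f_{v_\bot}=[1,\epsilon,\epsilon^2]$ (log-concave) and $f_u=[1,2]$. This instance satisfies \Cref{cond-instancepair}; one computes $r_{\max}=2$, $\Delta=2$, and $B=\min\bigl((1+2\epsilon)^{-2},\,1/5\bigr)=1/5$ for small $\epsilon$, while
\[
R_\Phi(e_\bot)=\frac{\epsilon+2\epsilon^2}{1+2\epsilon}=\Theta(\epsilon).
\]
Hence $|V|,\Delta,1/B$ are fixed constants but $\log\bigl(r_{\max}/R_\Phi(e_\bot)\bigr)\to\infty$ as $\epsilon\to 0$. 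Since the target running time $|V|\cdot\varepsilon^{-\poly(\Delta,1/B)}$ carries no dependence on the bit-length of the signature values, your grid of size $K=\Theta\bigl(\log(r_{\max}/R_{\min})/\varepsilon\bigr)$ cannot be bounded as claimed, and the argument does not close. The paper's binary search sidesteps this precisely because its stopping rule is driven by LP feasibility of both halves (which, via \Cref{thm:ratio-bound-by-LP}, directly pins down $R_\Phi(e_\bot)$) rather than by reaching a precomputed floor.
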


\begin{proof}

    Recall the definition of $B(\Phi)$ in \eqref{eq-def-bmin}.
    For simplicity, let $\Delta \triangleq \Delta(G)$, $B \triangleq B(\Phi)$ and $r=r_{\max}(\Phi)$.
    Set
     \begin{align}\label{eq-def-ell}
    \ell = \left\lceil \frac{\log{\varepsilon} - \log 2}{\log\left(1 - B^2\right)} \right\rceil
    \end{align}
    The following binary search calculates $\wh{R}$. Initially, set $r^1_1 \gets 0$ and $r^1_2 \gets r$ and $i \gets 1$. Then, repeat the following steps: 
    \begin{enumerate}[(1)]
        \item Let $\mathsf{LP1}$ denote the LP in~\Cref{def:induced-LP} for the instance $(\Phi, \sigma_\bot, \tau_\bot, v_\bot)$ with parameters $r^- = r^i_1,r^+ = (r^i_1 + r^i_2)/2$ and $\ell$.
        Let $\mathsf{LP2}$ denote the LP in~\Cref{def:induced-LP} for the instance $(\Phi, \sigma_\bot, \tau_\bot, v_\bot)$ with parameters $r^- = (r^i_1+r^i_2)/2,r^+ = r^i_2$ and $\ell$.
        Decide the feasibility of $\mathsf{LP1}$ and $\mathsf{LP2}$.
        \item If both $\mathsf{LP1}$ and $\mathsf{LP2}$ are feasible or $r^i_1 \geq r^i_2\left(1 - \left(1 - B^2\right)^{\ell}\right)$, let $\wh{R} = (r^i_1+r^i_2)/2$ and
        terminate the binary search. Otherwise, proceed to the next step. \label{enu-bs-2}
        \item If $\mathsf{LP1}$ is feasible, let $r^{i+1}_2 \leftarrow (r^i_1 + r^i_2)/2$; otherwise, let $r^{i+1}_1 \leftarrow (r^i_1 + r^i_2)/2$. Let $i \leftarrow i+1$.
        \label{enu-bs-3}
    \end{enumerate}
    We claim $r^i_1 \leq R_\Phi(e_\bot) \leq r^i_2$ for each $i>0$ in the binary search.
    We prove the claim by induction.
    For the base case, by~\Cref{lem:marginal-ratio-upper-bound}, 
    we have $0\leq R_{\Phi}(e_\bot)\leq r_{\max}(\Phi) = r$.
    Thus, initially we have $r^1_1 \leq R_\Phi(e_\bot) \leq r^1_2$.
    For the induction step, by ~\Cref{lem:feasibility-of-LP}, if $r^- \le R_\Phi(e_\bot) \le r^+$, then a feasible solution to the LP exists.
    Combined with the induction hypothesis, we have 
    at least one of $\mathsf{LP1}$ and $\mathsf{LP2}$ is feasible in round $i$.
    In \Cref{enu-bs-3}, at most one of $\mathsf{LP1}$ and $\mathsf{LP2}$ is feasible. 
    Combined with that at least one of them is feasible,     
    we have exactly one of $\mathsf{LP1}$ and $\mathsf{LP2}$ is feasible in \Cref{enu-bs-3}.
    Assume that $\mathsf{LP1}$ is feasible in \Cref{enu-bs-3}.
    Then $\mathsf{LP2}$ is infeasible.
    Combined with ~\Cref{lem:feasibility-of-LP}, we have 
    either $R_\Phi(e_\bot) > r^i_2$ or $R_\Phi(e_\bot) < (r^i_1 + r^i_2)/2$. Combined with the induction hypothesis,
    we have $r^i_1 \leq R_\Phi(e_\bot) \leq (r^i_1 + r^i_2)/2$.
    Thus, we have $r^{i+1}_1 \leq R_\Phi(e_\bot) \leq r^{i+1}_2$. 
    Similarly, if $\mathsf{LP1}$ is infeasible,
    we also have $r^{i+1}_1 \leq R_\Phi(e_\bot) \leq r^{i+1}_2$.
    
    When the binary research stops, we have either both $\mathsf{LP1}$ and $\mathsf{LP2}$ are feasible or $r^i_1 \geq r^i_2\left(1 - \left(1 - B^2\right)^{\ell}\right)$.
    If both $\mathsf{LP1}$ and $\mathsf{LP2}$ are feasible,
    by \Cref{thm:ratio-bound-by-LP} we have
    \begin{gather*}
        r^i_1 \left(1 - \left(1 - B^2\right)^{\ell}\right)\le R_{\Phi}(e_\bot) \le \frac{r^i_1+r^i_2}{2}\left(1 - \left(1 - B^2\right)^{\ell}\right)^{-1},\\
        \frac{r^i_1+r^i_2}{2} \left(1 - \left(1 - B^2\right)^{\ell}\right)\le R_{\Phi}(e_\bot) \le r^i_2\left(1 - \left(1 - B^2\right)^{\ell}\right)^{-1}.
    \end{gather*}
    Combined with $\wh{R} = (r^i_1+r^i_2)/2$, we have
    \begin{align}\label{eq-rhat-rphi-relation}
        \wh{R} \left(1 - \left(1 - B^2\right)^{\ell}\right)\le R_{\Phi}(e_\bot) \le \wh{R}\left(1 - \left(1 - B^2\right)^{\ell}\right)^{-1}.
    \end{align}
    If $r^i_1 \geq r^i_2\left(1 - \left(1 - B^2\right)^{\ell}\right)$, 
    we have 
    \[r^i_2\left(1 - \left(1 - B^2\right)^{\ell}\right)\leq \frac{r^i_1+r^i_2}{2} \leq r^i_1 \left(1 - \left(1 - B^2\right)^{\ell}\right)^{-1}.\]
    Combined with $r^i_1 \leq R_\Phi(e_\bot) \leq r^i_2$ and $\wh{R} = (r^i_1+r^i_2)/2$, we also have \eqref{eq-rhat-rphi-relation}.
    Moreover, by \eqref{eq-def-ell} one can verify  
    \begin{equation}
    \begin{aligned}
    \left(1 - \left(1 - B^2\right)^{\ell}\right) \geq 1 - \varepsilon/2
    \end{aligned}
    \end{equation}
    Combined with \eqref{eq-rhat-rphi-relation},
    we have 
    $$
        (1 - \varepsilon) R_{\Phi}(e_\bot) \le \wh{R} \le (1 + \varepsilon) R_{\Phi}(e_\bot).
    $$
    
    Now we turn to the running time of the algorithm. 
    Recall that we always assume that the signature $f_v = [f_v(0),\cdots,f_v(d)]$ at $v$ satisfies $d = \deg_G(v)\leq \Delta$ for each vertex $v\in V(G)$.
    Combining \eqref{eq-def-bmin}, \eqref{eq:local-polynomial} with the the assumption,
    one can verify that $B$ and $\ell$ can be calculated within time $O(\abs{V(G)}\cdot 2^{\Delta})$.
    Recall that if $r^i_1 \geq r^i_2\left(1 - \left(1 - B^2\right)^{\ell}\right) = r^i_2(1 - \varepsilon/2)$, the binary search stops.
    Then one can verify that the binary search runs for at most $O(\log \frac{2}{\varepsilon}) =  O(\log \frac{1}{\varepsilon})$ rounds. 
    For each round of the binary search, by \Cref{lem:building-cost-of-LP} we have the time cost is $\poly\left(\Delta^{\Delta \ell}\right)$.
    Meanwhile, we have
    \begin{align*}
        &\symbolwidth \Delta^{\Delta \ell} \\
        \left(\text{by \eqref{eq-def-ell}}\right)\quad   &\le \exp\left(\Delta \log{\Delta} \left(1 + \frac{\log{\varepsilon} - \log 2}{\log\left(1 - B^2\right)}\right)\right) \\
        (\mbox{by $\log{(1 - x)} \leq  -x$ for $x\in (0,1)$}) \quad
        &\le \exp\left(\Delta \log{\Delta} \left(1 + (\log 2 - \log{\varepsilon})/B^{2}\right)\right) \\
        &= \varepsilon^{-\poly(\Delta, 1/B)}.
    \end{align*}
    Hence, the total cost of the algorithm is 
    \[O\left(2^{\Delta}\abs{V(G)} +\log \frac{1}{\varepsilon} \cdot \poly\left(\Delta^{\Delta \ell}\right)\right) = O\left(2^{\Delta}\abs{V(G)} +\log \frac{1}{\varepsilon} \cdot \varepsilon^{-\poly(\Delta, 1/B)}\right) = \abs{V(G)}\cdot\varepsilon^{-\poly(\Delta, 1/B)}.\]
    The theorem is proved.
\end{proof}

Now we can prove \Cref{thm:Holant-marginal-ratio-estimator}.

\begin{proof}[Proof of~\Cref{thm:Holant-marginal-ratio-estimator}]
    Given any instance $\Phi = (G = (V, E), \vecf)$ satisfying~\Cref{cond:Holant-condition}, for each $e = \set{u, v} \in E$, if $f_u(1) = 0$ or $f_v(1) = 0$,  by \eqref{def-holant-mu}, we have $\Pr[X \sim \mu_{\Phi}]{X(e) = 1} = 0$.
    Combining with \eqref{def-rphie}, we have $R_\Phi(e) = 0$. Thus, if $f_u(1) = 0$ or $f_v(1) = 0$, the algorithm $\+A$ can simply output $\wh{R} = 0$, satisfying both the error bound and the time complexity stated in the theorem.
    
    In the following, we assume $f_u(1) > 0$ and $f_v(1) > 0$.     
    Let $e_u = \set{u}$ and $e_v = \set{v}$ be two half-edges obtained by splitting the edge $e$. See \Cref{fig:splitting-edge} for an example.
    \begin{figure}[htbp]
    \centering
    \begin{tikzpicture}[scale=0.9]
        \node[circle, fill, inner sep=1pt, minimum size=6pt] (lu) at (0, 0) {};
        \node[circle, fill, inner sep=1pt, minimum size=6pt] (lv) at (3, 0) {};
        \node[circle, fill, inner sep=1pt, minimum size=6pt] (l1) at (-1, 1) {};
        \node[circle, fill, inner sep=1pt, minimum size=6pt] (l2) at (-1, -1) {};
        \node[circle, fill, inner sep=1pt, minimum size=6pt] (l3) at (1, 1) {};
        \node[circle, fill, inner sep=1pt, minimum size=6pt] (l4) at (1, -1) {};
        \node[circle, fill, inner sep=1pt, minimum size=6pt] (l5) at (4, 1) {};
        \node[circle, fill, inner sep=1pt, minimum size=6pt] (l6) at (4, -1) {};
        \node[circle, fill, inner sep=1pt, minimum size=6pt] (l7) at (2, 1) {};
        \node[circle, fill, inner sep=1pt, minimum size=6pt] (l8) at (2, -1) {};
    
        \draw (lu) -- (l1);
        \draw (lu) -- (l2);
        \draw (lu) -- (l3);
        \draw (lu) -- (l4);
        \draw (lu) -- (lv) node[midway, above] {$e$};
        \draw (lv) -- (l5);
        \draw (lv) -- (l6);
        \draw (lv) -- (l7);
        \draw (lv) -- (l8);

        \node at (0, -0.3) {$u$};
        \node at (3, -0.3) {$v$};

        \node at (5.5, 0) {$\stackrel{\mbox{split $e$}}{\Longrightarrow}$};
    
        \node[circle, fill, inner sep=1pt, minimum size=6pt] (ru1) at (8, 0) {};
        \node[circle, fill, inner sep=1pt, minimum size=6pt] (r1) at (7, 1) {};
        \node[circle, fill, inner sep=1pt, minimum size=6pt] (r2) at (7, -1) {};
        \node[circle, fill, inner sep=1pt, minimum size=6pt] (r3) at (9, 1) {};
        \node[circle, fill, inner sep=1pt, minimum size=6pt] (r4) at (9, -1) {};

        \draw (ru1) -- (r1);
        \draw (ru1) -- (r2);
        \draw (ru1) -- (r3);
        \draw (ru1) -- (r4);
        \draw[dashed] (ru1) -- (9, 0) node[midway, below] {$e_u$};

        \node at (8, -0.3) {$u$};
    
        \node[circle, fill, inner sep=1pt, minimum size=6pt] (rv1) at (11, 0) {};
        \node[circle, fill, inner sep=1pt, minimum size=6pt] (r5) at (10, 1) {};
        \node[circle, fill, inner sep=1pt, minimum size=6pt] (r6) at (10, -1) {};
        \node[circle, fill, inner sep=1pt, minimum size=6pt] (r7) at (12, 1) {};
        \node[circle, fill, inner sep=1pt, minimum size=6pt] (r8) at (12, -1) {};

        \draw (rv1) -- (r5);
        \draw (rv1) -- (r6);
        \draw (rv1) -- (r7);
        \draw (rv1) -- (r8);
        \draw[dashed] (rv1) -- (10, 0) node[midway, below] {$e_v$};

        \node at (11, -0.3) {$v$};
    \end{tikzpicture}
    \caption{An example of splitting the edge $e = \set{u, v}$. $e_u, e_v$ are the half-edges after splitting $e = \set{u, v}$.}
    \label{fig:splitting-edge}
\end{figure}
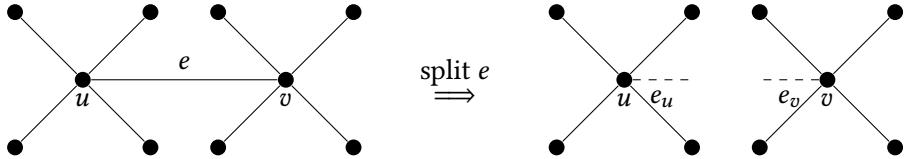
    Define three instances as follows:
    \begin{itemize}
    \item Let $G_0 = \left(V, E_0 = E \setminus \set{e} \cup \set{e_u, e_v}\right)$. Let $\Phi_0 = \left(G_0, \vecf\right)$.
    \item Let $\Phi_1 = \Phi_0^{e_u \gets 1}$. Denote its underlying graph by $G_1 = (V, E_1 = E \setminus \set{e} \cup \set{e_v})$ and its signatures by $\vecf' = \set{f_w'}_{w \in V}$.
    \item Let $\Phi_2 = \Phi_0^{e_v \gets 0}$. Denote its underlying graph by $G_2 = (V, E_2 = E \setminus \set{e} \cup \set{e_u})$.
    \end{itemize}
    Moreover, we have the following properties:
    \begin{itemize}
    \item All of $\Delta\left(G_0\right), \Delta\left(G_1\right)$ and $\Delta\left(G_2\right)$ are no more than $\Delta(G)$;
    \item By $f_u(1)>0$, we have $f_u'(0) > 0$. Combined with  $\Phi = (G,\vecf)$ satisfies~\Cref{cond:Holant-condition}, one can verify that $\left(\Phi_1, (e_v \gets 1),(e_v \gets 0),v\right)$ satisfies~\Cref{cond-instancepair}.
    Similarly, we have $\left(\Phi_2, (e_u\leftarrow 1),(e_u\leftarrow 0),u\right)$ satisfies~\Cref{cond-instancepair};
    \item Recall the definition of $r_{\max}$ in~\eqref{eq-def-rmax} and $B$ in~\eqref{eq-def-bmin}. For the instance $\Phi_0$, we have $r_{\max}(\Phi_0) = r_{\max}(\Phi)$ and $B(\Phi_0) = B(\Phi)$. By~\Cref{lem:monotonicity-under-pinning}, we have
    \begin{equation} \label{eq:monotonicity}
    \begin{gathered}
        r_{\max}(\Phi_1) \le r_{\max}(\Phi_0) = r_{\max}(\Phi), \quad B(\Phi_1) \ge B(\Phi_0) = B(\Phi), \\
        r_{\max}(\Phi_2) \le r_{\max}(\Phi_0) = r_{\max}(\Phi), \quad B(\Phi_2) \ge B(\Phi_0) = B(\Phi).
    \end{gathered}
    \end{equation}
    \end{itemize}

    Set $\varepsilon_1 = \varepsilon_2 = \varepsilon/3$.
    Combined these properties with \Cref{lem:LP-ratio-estimator}, we have that $\wh{R}_1$ can be obtained within time 
    \[
        \abs{V(G_1)} \cdot \varepsilon_1^{-\poly\left(\Delta(G_1),1/B\left(\Phi_1\right)\right)} = \abs{V} \cdot \varepsilon^{-\RunningTimeExponent}
    \]
    where 
    \begin{align}\label{eq-rhat1}
        (1 - \varepsilon_1)R_{\Phi_1}(e_v) \leq \wh{R}_1 \leq (1 + \varepsilon_1)R_{\Phi_1}(e_v), 
    \end{align}
    Similarly, $\wh{R}_2$ can be obtained within time 
    \[
        \abs{V(G_2)} \cdot \varepsilon_1^{-\poly\left(\Delta(G_2),1/B\left(\Phi_2\right)\right)} = \abs{V} \cdot \varepsilon^{-\RunningTimeExponent}
    \]
    where 
    \begin{align}\label{eq-rhat2}
        (1 - \varepsilon_2)R_{\Phi_2}(e_u) \leq \wh{R}_2 \leq (1 + \varepsilon_2)R_{\Phi_2}(e_u).
    \end{align}
    Let $\wh{R} = \wh{R}_1 \cdot \wh{R}_2$. 
    Note that
    \begin{align}\label{eq-rphi-rphi12}
        R_{\Phi}(e) &= \frac{\Pr[X \sim \mu_{\Phi}]{X(e) = 1}}{\Pr[X \sim \mu_{\Phi}]{X(e) = 0}}\notag  \\
        &= \frac{\Pr[X \sim \mu_{\Phi_0}]{X(e_u) = X(e_v) = 1 \mid X(e_u) = X(e_v)}}{\Pr[X \sim \mu_{\Phi_0}]{X(e_u) = X(e_v) = 0 \mid X(e_u) = X(e_v)}} \notag\\
        &= \frac{\Pr[X \sim \mu_{\Phi_0}]{X(e_u) = X(e_v) = 1}}{\Pr[X \sim \mu_{\Phi_0}]{X(e_u) = X(e_v) = 0}} \notag \\
        &= \frac{\Pr[X \sim \mu_{\Phi_0}]{X(e_u) = 1 \land X(e_v) = 1}}{\Pr[X \sim \mu_{\Phi_0}]{X(e_u) = 1 \land X(e_v) = 0}} \cdot \frac{\Pr[X \sim \mu_{\Phi_0}]{X(e_u) = 1 \land X(e_v) = 0}}{\Pr[X \sim \mu_{\Phi_0}]{X(e_u) = 0 \land X(e_v) = 0}} \\
        &= \frac{\Pr[X \sim \mu_{\Phi_0}]{X(e_v) = 1 \mid X(e_u) = 1}}{\Pr[X \sim \mu_{\Phi_0}]{X(e_v) = 0 \mid X(e_u) = 1}} \cdot \frac{\Pr[X \sim \mu_{\Phi_0}]{X(e_u) = 1 \mid X(e_v) = 0}}{\Pr[X \sim \mu_{\Phi_0}]{X(e_u) = 0 \mid X(e_v) = 0}}\notag \\
        &= \frac{\Pr[X \sim \mu_{\Phi_1}]{X(e_v) = 1}}{\Pr[X \sim \mu_{\Phi_1}]{X(e_v) = 0}} \cdot \frac{\Pr[X \sim \mu_{\Phi_2}]{X(e_u) = 1}}{\Pr[X \sim \mu_{\Phi_2}]{X(e_u) = 0}}\notag \\
        &= R_{\Phi_1}(e_v) \cdot R_{\Phi_2}(e_u).\notag
    \end{align}
    We emphasize that all the denominators in above inequalities are not $0$ because all of $f_u(0),f_v(0), f_u(1),f_v(1)$ are larger than $0$.
    Thus, we have
    \begin{align*}
        \wh{R} &= \wh{R}_1 \cdot \wh{R}_2\\
        \left(\text{by \eqref{eq-rhat1} and \eqref{eq-rhat1}}\right)\quad&\ge (1 - \varepsilon_1)\cdot(1 - \varepsilon_2)\cdot R_{\Phi^{(1)}}(e_v) \cdot R_{\Phi^{(2)}}(e_u) \\
        \left(\text{by $\varepsilon_1 = \varepsilon_2 =\varepsilon/3$}\right) \quad &=         (1 - \varepsilon/3)^2 \cdot R_{\Phi^{(1)}}(e_v) \cdot R_{\Phi^{(2)}}(e_u) \\
       \left(\text{by $\epsilon\in (0,1/4)$ and \eqref{eq-rphi-rphi12}}\right)\quad & \ge (1 - \varepsilon) R_{\Phi}(e).
    \end{align*}
    Similarly, we also have 
    $\wh{R} \le (1 + \varepsilon) R_{\Phi}(e)$. 
    In summary, we have $(1 - \varepsilon) R_{\Phi}(e) \le \wh{R} \le (1 + \varepsilon) R_{\Phi}(e)$. 
    
    Now we consider the time cost for calculating $\wh{R}$.
    One can verify that the time cost for constructing $\Phi^{(1)}$ and $\Phi^{(2)}$ is $O(\abs{V}\cdot \Delta(G))$.
    Recall that the time cost for calculating $\wh{R}_1$ and $\wh{R}_2$ is 
    $O\left(\abs{V} \cdot \varepsilon^{-\RunningTimeExponent}\right)$.
    In summary, the total time cost for calculating $\wh{R}$ is 
    $O\left(\abs{V} \cdot \varepsilon^{-\RunningTimeExponent}\right)$. Hence the theorem is proved.
\end{proof}

\section{Approximate Counting}\label{sec:counting}
In this section, we present our deterministic algorithm for approximating the partition functions of instances satisfying~\Cref{cond:Holant-condition}.
This algorithm is based on the marginal ratio estimator in~\Cref{thm:Holant-marginal-ratio-estimator}.

\begin{theorem} \label{thm:formal-counting-Holant}
    There is a deterministic algorithm such that given as input any $\varepsilon>0$ and any instance $\Phi = \left(G = (V, E), \vecf = \set{f_v}_{v \in V}\right)$ satisfying~\Cref{cond:Holant-condition}, it outputs a number $\wh{Z}$ such that
    $$
        (1 - \varepsilon) Z_\Phi \le \wh{Z} \le (1 + \varepsilon) Z_{\Phi}
    $$
    within time $O\left(\abs{V} \cdot (\abs{E} \cdot \varepsilon^{-1})^{\RunningTimeExponent}\right)$.
\end{theorem}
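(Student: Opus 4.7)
The plan is a standard self-reduction that expresses $Z_\Phi$ as a telescoping product of conditional single-edge marginal ratios, each of which is estimated by Theorem \ref{thm:Holant-marginal-ratio-estimator}. Fix an arbitrary ordering $e_1,\dots,e_m$ of $E$ and define $\Phi_0 = \Phi$ and $\Phi_i = \Phi_{i-1}^{e_i \gets 0}$. Since \Cref{cond:Holant-condition} requires $f_v(0)>0$ for every $v$, each pinning to $0$ leaves the positivity of the $0$-entry intact; moreover log-concavity of a nonnegative sequence is preserved under the prefix truncation $f_v \mapsto [f_v(0),\ldots,f_v(\deg-1)]$ prescribed by \Cref{def:pinning-Holant}. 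Hence every $\Phi_i$ again satisfies \Cref{cond:Holant-condition} and the marginal ratio $R_{\Phi_{i-1}}(e_i)$ is well defined. The identity
\[
    Z_{\Phi_{i-1}} = Z_{\Phi_{i-1}}^{e_i\gets 0} + Z_{\Phi_{i-1}}^{e_i\gets 1} = Z_{\Phi_i}\bigl(1 + R_{\Phi_{i-1}}(e_i)\bigr)
\]
telescopes, and after all edges are pinned the unique surviving assignment is the all-zero one, giving
\[
    Z_\Phi = \Bigl(\prod_{v\in V} f_v(0)\Bigr)\prod_{i=1}^m \bigl(1 + R_{\Phi_{i-1}}(e_i)\bigr).
\]

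I would then invoke \Cref{thm:Holant-marginal-ratio-estimator} on each pair $(\Phi_{i-1}, e_i)$ with accuracy parameter $\varepsilon' = \min(\varepsilon,1/2)/(3m)$, obtaining $\widehat R_i$ with $\widehat R_i / R_{\Phi_{i-1}}(e_i) \in [1-\varepsilon', 1+\varepsilon']$, and output
\[
    \widehat Z = \Bigl(\prod_{v\in V} f_v(0)\Bigr)\prod_{i=1}^m (1+\widehat R_i).
\]
Each factor $(1+\widehat R_i)/(1+R_{\Phi_{i-1}}(e_i))$ again lies in $[1-\varepsilon',1+\varepsilon']$, so $\widehat Z/Z_\Phi \in [(1-\varepsilon')^m,(1+\varepsilon')^m] \subseteq [1-\varepsilon,1+\varepsilon]$ by the choice of $\varepsilon'$ together with elementary inequalities $(1+x)^m \leq e^{mx}$ and $(1-x)^m \geq 1 - mx$.

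For the running time, the key point is that the estimator's exponent $\poly(\Delta, 1/B)$ must be controlled uniformly along the chain $\Phi_0,\Phi_1,\dots,\Phi_m$. This is exactly the content of \Cref{lem:monotonicity-under-pinning}, which gives $B(\Phi_i) \geq B(\Phi)$ and $\Delta(G(\Phi_i)) \leq \Delta(G)$ for every $i$. Consequently each of the $m$ invocations takes time $O(\abs{V}\cdot (\varepsilon')^{-\poly(\Delta,1/B)}) = O(\abs{V}\cdot(\abs{E}/\varepsilon)^{\poly(\Delta,1/B)})$, and summing over $i$ and absorbing the leading $\abs{E}$ factor into the polynomial exponent yields the claimed bound $O(\abs{V}\cdot(\abs{E}\cdot\varepsilon^{-1})^{\RunningTimeExponent})$. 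I expect no substantial obstacle: Theorem \ref{thm:Holant-marginal-ratio-estimator} does the heavy lifting, and the remaining work is bookkeeping around the self-reduction plus verifying the two structural preservation facts (closure of \Cref{cond:Holant-condition} under prefix pinning, and monotonicity of $\Delta,B$ via \Cref{lem:monotonicity-under-pinning}).
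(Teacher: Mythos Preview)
Your proposal is correct and follows essentially the same approach as the paper: the paper also orders the edges, defines the chain of instances by successively pinning each edge to $0$, telescopes $Z_\Phi = \prod_{v} f_v(0)\cdot\prod_{i}(1+R_{\Phi_{i-1}}(e_i))$, calls \Cref{thm:Holant-marginal-ratio-estimator} with accuracy $\Theta(\varepsilon/m)$ on each factor, and controls the exponent uniformly via \Cref{lem:monotonicity-under-pinning}. The only differences are cosmetic (the paper indexes from $\Phi_1$ rather than $\Phi_0$ and uses $\varepsilon/(2m)$ rather than your $\min(\varepsilon,1/2)/(3m)$).
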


\begin{proof}
    Without loss of generality, assume that $\varepsilon < 1/4$, $\abs{E} = m$ and $E = \{e_1, \ldots, e_m\}$. Let $\Phi_1 = \Phi$ and for each $2 \le i \le m + 1$, define $\Phi_i = \Phi_{i - 1}^{e_i \gets 0}$. Note that the underlying graph in $\Phi_{m + 1}$ contains no edges and thus by the definition of the partition function $Z_{\Phi_{m + 1}}$,
    \begin{align} \label{eq:empty-partition-function}
        Z_{\Phi_{m + 1}} = \prod_{v \in V} f_v(0).
    \end{align}
    By the assumption $\Phi$ satisfies~\Cref{cond:Holant-condition}, we also have 
    $\Phi_i$ satisfies~\Cref{cond:Holant-condition} for each $i \in [m+1]$. Since $\Phi_{i}$ is a pinning of $\Phi_{i - 1}$ for $2 \le i \le m + 1$, by~\Cref{lem:monotonicity-under-pinning}, $B(\Phi) = B(\Phi_1) \le B(\Phi_2) \le \ldots \le B(\Phi_{m + 1})$.
    Thus, we have 
    \begin{equation}
    \begin{aligned}\label{eq-z-decompose}
        Z_{\Phi} &= Z_{\Phi}^{e_1 \gets 0} + Z_{\Phi}^{e_1 \gets 1} \\
        \left(\text{by \eqref{def-rphie}}\right) \quad
        &= \left(1 + R_{\Phi}(e_1)\right) Z_{\Phi}^{e_1 \gets 0} \\
        \left(\text{by definitions of $\Phi_1,\Phi_2$}\right) \quad
        &= \left(1 + R_{\Phi_1}(e_1)\right) Z_{\Phi_2} \\
        &= \left(1 + R_{\Phi_1}(e_1)\right) \left(Z_{\Phi_2}^{e_2 \gets 0} + Z_{\Phi_2}^{e_2 \gets 1}\right) \\
        \left(\text{by \eqref{def-rphie}}\right) \quad
        &= \left(1 + R_{\Phi_1}(e_1)\right) \left(1 + R_{\Phi_2}(e_2)\right) Z_{\Phi_2}^{e_2 \gets 0} \\
        \left(\text{by definitions of $\Phi_2,\Phi_3$}\right) \quad
        &= \left(1 + R_{\Phi_1}(e_1)\right) \left(1 + R_{\Phi_2}(e_2)\right) Z_{\Phi_3} \\
        &= \cdots \\
        &= Z_{\Phi_{m + 1}}\prod_{i = 1}^{m} \left(1 + R_{\Phi_i}(e_i)\right) \\
        \left(\text{by~\eqref{eq:empty-partition-function}}\right) \quad
        &= \prod_{v \in V} f_v(0) \prod_{i = 1}^{m} \left(1 + R_{\Phi_i}(e_i)\right).
    \end{aligned} 
    \end{equation}
    For each $1 \le i \le m$, we run the algorithm $\+A$ in~\Cref{thm:Holant-marginal-ratio-estimator} with tolerance error $\varepsilon/(2m)$ on the input $\Phi_i, e_i$ to obtain an $\wh{R}_i$ where 
    \begin{align}\label{eq-condition-ri}
        (1-\varepsilon/(2m))R_{\Phi_i}(e_i) \leq \wh{R}_i \leq (1+\varepsilon/(2m))R_{\Phi_i}(e_i).
    \end{align}
    Define 
    \begin{align}\label{eq-def-z}
        \wh{Z} \triangleq \prod_{v \in V} f_v(0) \prod_{i = 1}^{m} (1 + \wh{R}_i).
    \end{align}
    Thus, we have
    \begin{align*}
        &\symbolwidth \frac{\wh{Z}}{Z_\Phi} \\
        \left(\text{by \eqref{eq-z-decompose} and \eqref{eq-def-z}}\right)\quad
        &= \prod_{i = 1}^{m} \left( \frac{1 + \wh{R}_i}{1 + R_{\Phi_i}(e_i)}\right)\\ 
        \left(\text{by \eqref{eq-condition-ri}}\right) \quad
        &\ge \prod_{i = 1}^{m}  \left( \frac{1+(1-\varepsilon/(2m))R_{\Phi_i}(e_i) }{1 + R_{\Phi_i}(e_i)}\right)\\
        &\ge \prod_{i = 1}^{m} \left(1- \frac{\varepsilon\cdot R_{\Phi_i}(e_i)}{2m(1 + R_{\Phi_i}(e_i))}\right)\\ 
        &\ge \prod_{i = 1}^{m} \left(1- \frac{\varepsilon}{2m}\right)\\
        \left(\text{by $\varepsilon\in (0,1/4)$}\right)\quad
        &\ge 1 - \varepsilon.
    \end{align*}
    Similarly, we also have 
    \begin{align*}
        &\symbolwidth \frac{\wh{Z}}{Z_\Phi} \\
        \left(\text{by \eqref{eq-z-decompose} and \eqref{eq-def-z}}\right)\quad 
        &= \prod_{i = 1}^{m} \left( \frac{1 + \wh{R}_i}{1 + R_{\Phi_i}(e_i)}\right)\\
        \left(\text{by \eqref{eq-condition-ri}}\right) \quad
        &\leq \prod_{i = 1}^{m} \left( \frac{1+(1+\varepsilon/(2m))R_{\Phi_i}(e_i) }{1 + R_{\Phi_i}(e_i)}\right) \\
        &\leq \prod_{i = 1}^{m} \left(1+ \frac{\varepsilon\cdot R_{\Phi_i}(e_i)}{2m(1 + R_{\Phi_i}(e_i))}\right)\\ 
        &\leq \prod_{i = 1}^{m} \left(1+ \frac{\varepsilon}{2m}\right) \\
        \left(\text{by $\varepsilon\in (0,1/4)$}\right) \quad
        &\le 1 + \varepsilon.
    \end{align*}

    Now we turn to the running time of the algorithm. Firstly we can obtain $\prod_{v \in V} f_v(0)$ in time $O(\abs{V})$. For every $i \in [m]$, by~\Cref{thm:Holant-marginal-ratio-estimator},
    we can obtain $\wh{R}_i$ in time
    $$
        O\left(\abs{V(G_i)} (m/\varepsilon)^{\poly(\Delta(G_i), 1/B(\Phi_i))}\right) = O\left(\abs{V} \cdot (m/\varepsilon)^{\RunningTimeExponent}\right),
    $$
    where $G_i$ denotes the underlying graph of $\Phi_i$ and we use the fact $B(\Phi) \leq B(\Phi_i)$ due to \Cref{lem:monotonicity-under-pinning}.
    Hence we can calculate $\wh{Z}$ in time
    $$
        O\left(\abs{V} + \abs{V} \cdot m \cdot (m/\varepsilon)^{\RunningTimeExponent}\right) = O\left(\abs{V} \cdot (m \cdot \varepsilon^{-1})^{\RunningTimeExponent}\right).
    $$
    The theorem is proved.
\end{proof}

The main result for counting $\vecb$-matchings is an immediate corollary of~\Cref{thm:formal-counting-Holant}.

\begin{theorem} \label{thm:formal-counting-b-matchings}
    There exists a deterministic algorithm such that given any graph $G = (V, E)$ with maximum degree $\Delta$, any positive integer $b$, any vector $\vecb = \set{b_v}_{v \in V}$ satisfying $1 \le b_v \le b$ for every $v \in V$ and any $\varepsilon \in (0, 1)$ as input, it outputs a number $\wh{Z}$ such that
    \begin{align*}
        (1 - \varepsilon)Z_{G, \vecb} \le \wh{Z} \le (1 + \varepsilon) Z_{G, \vecb}
    \end{align*}
    within time $O\left(\abs{V} \cdot (\abs{E} \cdot \varepsilon^{-1})^{\poly(\Delta^b)}\right)$ where $Z_{G, \vecb}$ is the number of $\vecb$-matchings on $G$.
\end{theorem}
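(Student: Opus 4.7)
The plan is to reduce counting $\vecb$-matchings to a Holant instance satisfying \Cref{cond:Holant-condition} and then invoke \Cref{thm:formal-counting-Holant} directly. First I would encode the $\vecb$-matching instance on $G=(V,E)$ as the Holant instance $\Phi=(G,\vecf)$ where, for each $v\in V$, the signature of arity $\deg_G(v)$ is given in symmetric form by
\[
f_v=[\underbrace{1,1,\ldots,1}_{b_v+1\ \text{ones}},0,0,\ldots,0].
\]
Under this encoding the Holant $Z_\Phi$ counts exactly the $\vecb$-matchings of $G$, so it suffices to approximate $Z_\Phi$.

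Next I would verify that $\Phi$ meets \Cref{cond:Holant-condition}. Clearly $f_v(0)=1>0$. For log-concavity, the inequality $f_v(k)^2\ge f_v(k-1)f_v(k+1)$ is trivial when all three values are $1$, holds as $1\ge 0$ at the boundary $k=b_v$, and holds as $0\ge 0$ beyond the boundary; the support $\{0,1,\ldots,b_v\}$ is contiguous, so the second log-concavity condition is also satisfied.

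The core quantitative step is bounding $B(\Phi)$. Since $f_v(1)/f_v(0)=1$ whenever $\deg_G(v)\ge 1$, we have $r_{\max}(\Phi)\le 1$. From the definition of the local polynomial,
\[
P_{f_v}(1)=\sum_{i=0}^{b_v}\binom{\deg_G(v)}{i}\le \sum_{i=0}^{b}\binom{\Delta}{i}\le (b+1)\Delta^{b},
\]
so that $B(\Phi)\ge 1/((b+1)\Delta^{b})$, which gives $1/B(\Phi)=O(\Delta^{b})$. Consequently $\poly(\Delta,1/B(\Phi))=\poly(\Delta^{b})$.

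Finally I would feed $\Phi$ into the algorithm of \Cref{thm:formal-counting-Holant} with tolerance $\varepsilon$. The guaranteed output $\wh{Z}$ satisfies $(1-\varepsilon)Z_\Phi\le\wh{Z}\le(1+\varepsilon)Z_\Phi$, which is the desired bound for $Z_{G,\vecb}$, and the running time is
\[
O\!\left(\abs{V}\cdot(\abs{E}\cdot\varepsilon^{-1})^{\poly(\Delta,1/B(\Phi))}\right)=O\!\left(\abs{V}\cdot(\abs{E}\cdot\varepsilon^{-1})^{\poly(\Delta^{b})}\right),
\]
matching the claim. There is essentially no obstacle here: the only nontrivial step is the quantitative estimate on $B(\Phi)$, which is an elementary binomial bound. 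Everything else is a routine verification that the $\vecb$-matching signatures fall into the class handled by the general theorem.
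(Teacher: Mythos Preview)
Your proposal is correct and follows essentially the same approach as the paper: encode $\vecb$-matchings via the indicator signatures $f_v=[\underbrace{1,\ldots,1}_{b_v+1},0,\ldots,0]$, verify \Cref{cond:Holant-condition}, bound $B(\Phi)\ge 1/\sum_{i=0}^{b}\binom{\Delta}{i}$ so that $1/B(\Phi)=\poly(\Delta^b)$, and invoke \Cref{thm:formal-counting-Holant}. Your write-up even adds the explicit log-concavity check and the bound $(b+1)\Delta^b$, which the paper leaves implicit.
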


\begin{proof}
    Given any graph $G = (V, E)$, any positive integer $b$ and any vector $\vecb = \set{b_v}_{v \in V}$ satisfying $1 \le b_v \le b$,
    for each $v \in V$, define a signature $f_v \triangleq [f_v(0),f_v(1),\cdots,f_v(\deg_G(v))]$ where $f_v(i) = \id{i \le b_v}$ for every $0 \le i \le \deg_G(v)$.
    Consider the Holant instance $\Phi_{G, \vecb} \triangleq (G, \vecf)$.
    One can verify that $\Phi_{G, \vecb}$ satisfies~\Cref{cond:Holant-condition} and $Z_{\Phi_{G, \vecb}} = Z_{G, \vecb}$. Moreover, by~\eqref{eq-def-bmin} and~\eqref{eq-def-rmax}, it holds that $r_{\max}(\Phi_{G, \vecb}) = 1$ and $B(\Phi_{G, \vecb}) \ge 1/\left(\sum_{i = 0}^{b} \binom{\Delta}{i}\right) = \Delta^{-\poly(b)}$. 
    Thus, by~\Cref{thm:formal-counting-Holant}, there is a deterministic algorithm such that given as input $\varepsilon$ and $\Phi_{G, \vecb}$, it outputs a number $\wh{Z}$ where $(1 - \varepsilon) Z_{G, \vecb} \le \wh{Z} \le (1 + \varepsilon) Z_{G, \vecb}$ 
    within time
    $$
        O\left(\abs{V} \cdot (\abs{E} \cdot \varepsilon^{-1})^{\poly(\Delta(G), 1/B(\Phi_{G, \vecb}))}\right) = O\left(\abs{V} \cdot (\abs{E} \cdot \varepsilon^{-1})^{\poly(\Delta^b)}\right).
    $$
    The theorem is proved.
\end{proof}

\bibliographystyle{alpha}
\bibliography{refs}

\appendix

\section{Counterexample} \label{sec:counter-example}

In this section, we give a counterexample to show that for general Holant instances, an arbitrary unpinned edge in $E_v^\sigma$ is \emph{not necessarily} amendable in the coupling process $\Couple(\Phi, \sigma, \tau, v)$.

Consider the Holant instance $\Phi = \left(G, \vecf\right)$ where the graph $G = (V = \set{v_\bot, v_1, v_2, \ldots, v_5}, E = \set{e_1, \ldots, e_5} \cup \set{e_\bot})$ is shown as~\Cref{fig:edge-selection-counterexample} and the signatures $\vecf = \set{f_v}_{v \in V}$ are defined as $f_{v_\bot} = [1, 1, 1, 0]$, $f_{v_2} = [1, 1]$ and $f_{v_1} = f_{v_3} = f_{v_4} = f_{v_5} = [1, 10, 0]$.


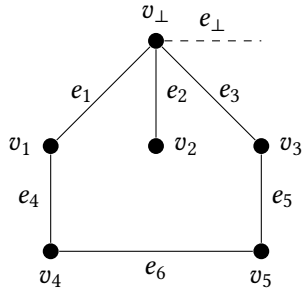
\begin{figure}[htbp]
    \centering
    \begin{tikzpicture}[scale=0.7]
        \node (v0) at (0, 3) [circle, fill, inner sep=1pt, minimum size=6pt, label=above:$v_\bot$] {};
        \node (v1) at (-2, 1) [circle, fill, inner sep=1pt, minimum size=6pt, label=left:$v_1$] {};
        \node (v2) at (0, 1) [circle, fill, inner sep=1pt, minimum size=6pt, label=right:$v_2$] {};
        \node (v3) at (2, 1) [circle, fill, inner sep=1pt, minimum size=6pt, label=right:$v_3$] {};
        \node (v4) at (-2, -1) [circle, fill, inner sep=1pt, minimum size=6pt, label=below:$v_4$] {};
        \node (v5) at (2, -1) [circle, fill, inner sep=1pt, minimum size=6pt, label=below:$v_5$] {};
        \draw[dashed] (v0) -- (2, 3) node[midway, above] {$e_{\bot}$};
        \draw (v0) -- (v1) node[midway, left] {$e_1$};
        \draw (v0) -- (v2) node[midway, right] {$e_2$};
        \draw (v0) -- (v3) node[midway, right] {$e_3$};
        \draw (v1) -- (v4) node[midway, left] {$e_4$};
        \draw (v3) -- (v5) node[midway, right] {$e_5$};
        \draw (v4) -- (v5) node[midway, below] {$e_6$};
    \end{tikzpicture}
    \caption{A counterexample against the arbitrary choice of edges.}
    \label{fig:edge-selection-counterexample}
\end{figure}

Define two partial assignments $\sigma, \tau$ on $\set{e_\bot}$ as $\sigma = (e_\bot \gets 1)$ and $\tau = (e_\bot \gets 0)$. One can easily verify that the tuple $(\Phi, \sigma, \tau, v_\bot)$ satisfies~\Cref{cond-instancepair} and for $e_2 = \set{v_\bot, e_2}$, it holds that $\Ham(\sigma, E_{v_\bot}) > \Ham(\tau, E_{v_\bot})$. However, by calculation,
$$
    \mu_{e_2}^{\sigma}(1) = \frac{10301}{24622} > \frac{14321}{38742} = \mu_{e_2}^{\tau}(1)
$$
meaning that $e_2$ is \emph{not} an amendable edge. This phenomenon indicates that the choice of the edge in Line~\ref{line:pick-dominating-edge-1} and Line~\ref{line:pick-dominating-edge-2} through the coupling process \emph{cannot} be arbitrary.


\section{Proofs for Properties of Holant Instances} \label{sec:proof-Holant-properties}
In this section, we complete the deferred proofs for properties of Holant instances. Recall that $\Phi = (G = (V, E = E_1 \cup E_2), \vecf = \set{f_v})$ is a Boolean domain symmetric log-concave Holant instance satisfying~\Cref{cond:Holant-condition} and $\mu = \mu_\Phi$ is its Gibbs distribution.

\subsection{Monotonicity under pinnings}

We prove the monotonicity of quantities $r_{\max}$ and $B$ under the pinnings.
\Monotonicity*
\begin{proof}
    First of all, since $Z_{\Phi}^{e \gets c} > 0$, $f_v^{e \gets c}(0) > 0$ for every $v \in V$ and thus quantities $r_{\max}(\Phi^{e \gets c})$ and $B(\Phi^{e \gets c})$ are well-defined. By the assumption $\Phi$ satisfies~\Cref{cond:Holant-condition}, it holds that for every $v \in V$, $f_v(0) > 0$ and $f_v$ is log-concave. For convenience, assume that $f_v(k) = 0$ for $k > \deg_G(v)$. By log-concavity, it holds that
    \begin{align} \label{eq:log-concave-inequality}
        \frac{f_v(i)}{f_v(0)} \ge \frac{f_v(\ell + i)}{f_v(\ell)}, \quad \forall i, \ell \in \mathbb{N}
    \end{align}
    with convention $0/0 = 0$. Recall the definition of $r_{\max}$ in~\eqref{eq-def-rmax}. It holds that
    \begin{align*}
        r_{\max}\left(\Phi^{e \gets c}\right) &= \max_{v \in V} \frac{f_v^{e \gets c}(1)}{f_v^{e \gets c}(0)} \\
        &= \max\set{\max_{v \in e} \frac{f_v^{e \gets c}(1)}{f_v^{e \gets c}(0)}, \max_{v \in V \setminus e}\frac{f_v^{e \gets c}(1)}{f_v^{e \gets c}(0)}} \\
        (\mbox{by definition of $\vecf^{e \gets c}$}) \quad
        &= \max\set{\max_{v \in e} \frac{f_v(1 + c)}{f_v(c)}, \max_{v \in V \setminus e}\frac{f_v(1)}{f_v(0)}} \\
        (\mbox{by~\eqref{eq:log-concave-inequality}}) \quad
        &\le \max\set{\max_{v \in e} \frac{f_v(1)}{f_v(0)}, \max_{v \in V \setminus e}\frac{f_v(1)}{f_v(0)}} = r_{\max}(\Phi).
    \end{align*}

    To show the inequality for $B$, by~\eqref{eq-def-bmin}, it suffices to show that for every $v \in V$, we have
    \begin{align} \label{eq:local-ratio-inequality}
        \frac{P_{f_v^{e \gets c}}(0)}{P_{f_v^{e \gets c}}(r_{\max}(\Phi^{e \gets c}))} \ge \frac{P_{f_v}(0)}{P_{f_v}(r_{\max}(\Phi))}.
    \end{align}
    By~\eqref{eq:local-polynomial}, $P_{f_v^{e \gets c}}(x)$ is increasing on $x \ge 0$. Recall that $0 \le r_{\max}(\Phi^{e \gets c}) \le r_{\max}(\Phi)$. To prove~\eqref{eq:local-ratio-inequality}, we only need to show
    \begin{align} \label{eq:intermediate-local-ratio-inequality}
        \frac{P_{f_v^{e \gets c}}(0)}{P_{f_v^{e \gets c}}(r_{\max}(\Phi))} \ge \frac{P_{f_v}(0)}{P_{f_v}(r_{\max}(\Phi))}.
    \end{align}
    For every $v \in V \setminus \set{u}$, by the definition of $\vecf^{e \gets c}$, we have $f_v^{e \gets c} = f_v$ and~\eqref{eq:intermediate-local-ratio-inequality} holds trivially. For $v \in e$, let $d = \deg_G(v)$ and thus $\deg_{G^{e \gets c}}(v) = d - 1$. Since $Z_{\Phi}^{e \gets c} > 0$, it holds that $f_v(c) = f_v^{e \gets c}(0) > 0$. By~\eqref{eq:local-polynomial}, we have
    \begin{align*}
        &\symbolwidth \frac{P_{f_v^{e \gets c}}(0)}{P_{f_v^{e \gets c}}(r_{\max}(\Phi))} \\
        (\mbox{by~\eqref{eq:local-polynomial}}) \quad
        &= \frac{\sum_{i = 0}^{d - 1} \binom{d - 1}{i} f_v^{e \gets c}(i) 0^i}{\sum_{i = 0}^{d - 1} \binom{d - 1}{i} f_v^{e \gets c}(i) r_{\max}(\Phi)^i} \\
        &= \frac{f_v^{e \gets c}(0)}{\sum_{i = 0}^{d - 1} \binom{d - 1}{i} f_v^{e \gets c}(i) r_{\max}(\Phi)^i} \\
        \left(\mbox{by definition of $f_v^{e \gets c}$}\right) \quad
        &= \frac{f_v(c)}{\sum_{i = 0}^{d - 1} \binom{d - 1}{i} f_v(i + c) r_{\max}(\Phi)^i} \\
        &= \frac{1}{\sum_{i = 0}^{d - 1} \binom{d - 1}{i} (f_v(i + c) / f_v(c)) r_{\max}(\Phi)^i} \\
        \left(\mbox{by $\binom{d}{i} \ge \binom{d - 1}{i}$ and~\eqref{eq:log-concave-inequality}}\right) \quad
        &\ge \frac{1}{\sum_{i = 0}^{d - 1} \binom{d}{i} (f_v(i) / f_v(0)) r_{\max}(\Phi)^i} \\
        \left(\mbox{by $f_v(d) \ge 0$}\right) \quad
        &\ge \frac{f_v(0)}{\sum_{i = 0}^{d} \binom{d}{i} f_v(i) r_{\max}(\Phi)^i} = \frac{P_{f_v}(0)}{P_{f_v}(r_{\max}(\Phi))}.
    \end{align*}

    Combining all of the above arguments, we conclude the lemma.
\end{proof}

\subsection{Feasibility of partial assignments}

We recall the theorem for the feasibility of partial assignments.
\PartialFeasibility*

To show the feasibility of a partial assignment $\sigma$, we first make the following claim.
\begin{claim} \label{claim:partial-assignment-feasibility}
    A partial assignment $\sigma$ is feasible if and only if the assignment $\sigma'$ on $E$ is feasible where $\sigma'$ is defined as
    \begin{align} \label{eq:0-assignment}
        \sigma'(e) = \begin{cases}
            \sigma(e) & \sigma \in \Lambda(\sigma) \\
            0 & \mbox{otherwise}
        \end{cases}\;.
    \end{align}
\end{claim}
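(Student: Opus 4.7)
The plan is to prove the claim by treating the two directions of the biconditional separately. The if direction is almost immediate from definitions, while the only if direction relies crucially on the no-gaps structural property of log-concave signatures from~\Cref{cond:Holant-condition}.

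For the if direction, suppose $\sigma'$ is feasible, i.e., $\mu(\sigma') > 0$. Since $\sigma'$ restricted to $\Lambda(\sigma)$ equals $\sigma$ by construction, we have $\sigma' \in \sigma$ in the notation of~\Cref{sec:prelim}, so $\sigma$ is feasible by definition.

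For the only if direction, I would let $\tau$ be a feasible (full) assignment with $\tau \in \sigma$, which exists by the feasibility of $\sigma$, and show that $\mu(\sigma') > 0$ by verifying $f_v(\Ham(\sigma', E_v)) > 0$ for every vertex $v \in V$. Fix such a $v$. Because $\sigma'$ agrees with $\sigma$ (hence with $\tau$) on $E_v \cap \Lambda(\sigma)$ and is $0$ on $E_v \setminus \Lambda(\sigma)$, we have
\begin{align*}
    0 \;\le\; \Ham(\sigma', E_v) \;=\; \Ham(\sigma, E_v \cap \Lambda(\sigma)) \;\le\; \Ham(\tau, E_v).
\end{align*}
By~\Cref{cond:Holant-condition}, $f_v(0) > 0$, and by the feasibility of $\tau$, $f_v(\Ham(\tau, E_v)) > 0$. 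Applying the no-gaps clause in the definition of log-concave signatures with $k_1 = 0$ and $k_2 = \Ham(\tau, E_v)$ yields $f_v(j) > 0$ for every $0 \le j \le \Ham(\tau, E_v)$, and in particular $f_v(\Ham(\sigma', E_v)) > 0$. Taking the product over $v \in V$ gives $\mu(\sigma') > 0$, as desired.

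The argument is a direct invocation of the no-gaps property for log-concave signatures, so I do not expect a serious obstacle. The only subtlety is to confirm that setting unpinned edges to $0$ produces, at every vertex, a Hamming weight sandwiched between $0$ and that of the witness $\tau$; this is immediate from the construction of $\sigma'$ in~\eqref{eq:0-assignment}.
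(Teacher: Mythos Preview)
Your proposal is correct and follows essentially the same argument as the paper: both directions match, with the only-if direction taking a feasible witness $\tau\in\sigma$, observing $0\le \abs{\sigma'(E_v)}\le \abs{\tau(E_v)}$, and invoking $f_v(0)>0$ together with the no-gaps property of log-concave signatures to conclude $f_v(\abs{\sigma'(E_v)})>0$ at every vertex.
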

\begin{proof}
    When $\sigma'$ is feasible, obviously $\sigma$ is feasible since $\sigma' \in \sigma$. To see the other side, when $\sigma$ is feasible, there exists $\tau : E \to \set{0, 1}$, $\tau \in \sigma$ such that $\mu(\tau) > 0$. Hence we have $\prod_{v \in V} f_v\left(\abs{\tau(E_v)}\right) > 0$ meaning that $f_v\left(\abs{\tau(E_v)}\right) > 0$ for every $v \in V$. Since $f_v$ is log-concave, $f_v(0) > 0$ and $\abs{\sigma'(E_v)} \le \abs{\tau(E_v)}$, it holds that $f_v\left(\abs{\sigma'(E_v)}\right) > 0$. Therefore we obtain that $\mu(\sigma') > 0$ by the definition of $\mu(\sigma')$.
\end{proof}

\begin{proof}[Proof of~\Cref{lem:partial-assignment-feasibility}]
    By~\Cref{claim:partial-assignment-feasibility}, we check the feasibility of $\sigma'$ defined as~\eqref{eq:0-assignment}. In other words, we check whether $f\left(\abs{\sigma'(E_v)}\right) > 0$ for every $v \in V$. By definition, for every $v \in V$, $\abs{\sigma'(E_v)} = \Ham(\sigma, v)$ and thus $f\left(\abs{\sigma'(E_v)}\right) = f\left(\Ham(\sigma, v)\right)$. When $\Lambda(\sigma) \cap E_v = \emptyset$, we need to do nothing from the assumption $f_v(0) > 0$. Let $V(\sigma) \defeq \set{v \in V \cmid \Lambda(\sigma) \cap E_v \neq \emptyset}$. It is not hard to see that we can find $V(\sigma)$ and compute $\Ham(\sigma, v)$ for every $v \in V(\sigma)$ in time $O(\abs{\Lambda(\sigma)})$ by enumerating all edges in $\Lambda(\sigma)$. Hence we can check the feasibility of a partial assignment $\sigma$ in time $O(\abs{\Lambda(\sigma)})$. 
\end{proof}

\subsection{Marginal ratio bounds}

In this part, we prove the upper bound for marginal ratios.

\MarginalRatioBound*
\begin{proof}
    Let $v$ be the unique vertex incident to $e$. Since $f_v$ is log-concave and $f_v(0) > 0$, it holds that $f_v(i + 1) \le r_{\max}(\Phi) f_v(i)$. Hence we have, for every $\sigma \in \set{0, 1}^E$ with $\sigma(e) = 1$, $\prod_{u \in V} f_u\left(\abs{\sigma(E_u)}\right) \le r_{\max}(\Phi) \prod_{u \in V} f_u\left(\abs{\sigma'(E_u)}\right)$ where $\sigma'(e') = \sigma(e')$ for every $e' \in E \setminus \set{e}$ and $\sigma'(e) = 0$. Then by direct calculation,
    \begin{align*}
        R_{\Phi}(e) &= \frac{Z_{\Phi}^{e \gets 1}}{Z_{\Phi}^{e \gets 0}} \\
        &= \frac{\sum_{\sigma \in \set{0, 1}^E : \sigma(e) = 1} \prod_{u \in V} f_u\left(\abs{\sigma(E_u)}\right)}{\sum_{\sigma \in \set{0, 1}^E : \sigma(e) = 0} \prod_{u \in V} f_u\left(\abs{\sigma(E_u)}\right)} \\
        &\le \frac{r_{\max}(\Phi) \sum_{\sigma \in \set{0, 1}^E : \sigma(e) = 0} \prod_{u \in V} f_u\left(\abs{\sigma(E_u)}\right)}{\sum_{\sigma \in \set{0, 1}^E : \sigma(e) = 0} \prod_{u \in V} f_u\left(\abs{\sigma(E_u)}\right)} \le r_{\max}(\Phi).
    \end{align*}
    Then we conclude the upper bound.
\end{proof}

\section{Omitted Proofs}\label{sec:omitted Proofs}

\subsection{Omitted proofs in the coupling process and tree} \label{sec:omitted-proofs-coupling}

\PropertyDefrpc*
\begin{proof}
To prove this lemma, it is sufficient to prove that for each $(\sigma, \tau, \seqS,v,t)$,
\begin{align}\label{eq-def-trp-perperty-sigma-delete-vt}
\Pr[\!{cp}]{(\sigma, \tau, \seqS,v,t)}\leq \mu^{\sigma_{\bot}}_{\seqS}(\sigma).
\end{align}
Similarly, one can also prove that $\Pr[\!{cp}]{(\sigma, \tau, \seqS,v,L)}\leq \mu^{\tau_{\bot}}_{\seqS}(\tau)$.
Thus, \eqref{eq-def-trp-perperty-sigma} is proved.
Meanwhile, for each $e\in E^{\sigma}_v$,
by \Cref{def-notation-trp} and \eqref{eq-def-trp-perperty-sigma-delete-vt}, 
we have
\begin{align*}
\Pr[\!{cp}]{(\sigma, \tau, \seqS,v,L,e)} \leq  \Pr[\!{cp}]{(\sigma, \tau, \seqS,v,L)} \leq \mu^{\sigma_{\bot}}_{\seqS}(\sigma).
\end{align*}
Similarly, one can also prove that $\Pr[\!{cp}]{(\sigma, \tau, \seqS,v,L,e)}\leq \mu^{\tau_{\bot}}_{\seqS}(\tau)$.
Thus, \eqref{eq-def-trp-perperty-sigma-e} is also proved and the lemma is immediate.

In the following, we prove \eqref{eq-def-trp-perperty-sigma-delete-vt} by induction on the length of $\seqS$.
The induction basis is when $\seqS = \varnothing$.
In this case, by \eqref{eq-def-pro-stsvl} we have 
\begin{align*}
 \Pr[\!{cp}]{(\sigma, \tau, \seqS,v,L)} = \Pr{\left(T\geq 0\right)\land \left((\sigma, \tau, \seqS, v, L) = (\sigma_\bot, \tau_\bot, \varnothing, v_\bot, 0)\right)}\leq 1 = \mu^{\sigma_{\bot}}_{\seqS}(\sigma).
\end{align*}
The base case is proved. For the induction step,  assume $\abs{\seqS} = t>0$, $\seqS = \seqS'\circ e$, $\sigma = \sigma'\land(e\leftarrow a)$, $\tau = \tau'\land(e\leftarrow b)$ for some $e\in E(G),a,b\in \{0,1\}$.
By \eqref{eq-def-pro-stsvl} we have 
\begin{align*}
 \Pr[\!{cp}]{(\sigma, \tau, \seqS, v, L)} 
 = \Pr{\left(T\geq t\right)\land \left((\sigma, \tau, \seqS, v, L) = (\sigma_t, \tau_t, \seqS_t, v_t, L_t)\right)}.
\end{align*}
In addition, by \Cref{def:truncated-random-process},
we have the event $\left(T\geq t\right)\land \left((\sigma, \tau, \seqS, v, L) = (\sigma_t, \tau_t, \seqS_t, v_t, L_t)\right)$ happens only if 
\begin{itemize}
\item the event $\+E\triangleq \left(T\geq t-1\right)\land \left((\sigma', \tau', \seqS', v(\sigma',\tau'), L(\sigma',\tau')) = (\sigma_{t-1}, \tau_{t-1}, \seqS_{t-1}, v_{t-1}, L_{t-1})\right)$ happens;
\item the chosen edge in the state $(\sigma_{t-1}, \tau_{t-1}, \seqS_{t-1}, v_{t-1}, L_{t-1})$ is $e$ and the sample $(\sigma_e,\tau_e)$ from an optimal coupling of $(\mu_e^{\sigma_{t-1}},\mu_e^{\tau_{t-1}})$ is $(a,b)$.
\end{itemize}
Thus, we have
\begin{align*}
 &\symbolwidth \Pr[\!{cp}]{(\sigma, \tau, \seqS, v, L)}\\ 
 &= \Pr{\left(T\geq t\right)\land \left((\sigma, \tau, \seqS, v, L) = (\sigma_t, \tau_t, \seqS_t, v_t, L_t)\right)}\\
&\leq 
\Pr{\+E}\cdot\Pr{(\sigma_e,\tau_e) =(a,b)\mid \+E}
\\&\leq \Pr{\left(T\geq t-1\right)\land \left((\sigma', \tau', \seqS', v(\sigma',\tau'), L(\sigma',\tau')) = (\sigma_{t-1}, \tau_{t-1}, \seqS_{t-1}, v_{t-1}, L_{t-1})\right)}\cdot\Pr{(\sigma_e,\tau_e) =(a,b)\mid \+E}
\\&\leq \mu^{\sigma_{\bot}}_{\seqS'}(\sigma')\cdot\Pr{(\sigma_e,\tau_e) =(a,b)\mid \+E},
\end{align*}
where the last inequality is by the induction hypothesis.
Moreover, we have 
\begin{equation*}
\begin{aligned}
\Pr{(\sigma_e,\tau_e) =(a,b)\mid \+E}
\leq \Pr{\sigma_e =a\mid \+E}
= \mu_e^{\sigma_{t-1}}\left(a \mid \+E\right)
= \mu_e^{\sigma'}(a),
\end{aligned}
\end{equation*}
where the first equality is by $(\sigma_e,\tau_e)$ is a coupling of $(\mu_e^{\sigma_{t-1}},\mu_e^{\tau_{t-1}})$ and the second equality is by $\sigma_{t-1} = \sigma'$ if $\+E$ happens.
Therefore, we have 
\begin{align*}
\Pr[\!{cp}]{(\sigma, \tau, \seqS, v, L)}\leq \mu^{\sigma_{\bot}}_{\seqS'}(\sigma')\cdot \mu_e^{\sigma'}(a) = \mu^{\sigma_{\bot}}_{\seqS}(\sigma).
\end{align*}
This completes the induction step.
Then \eqref{eq-def-trp-perperty-sigma-delete-vt} is proved and the lemma is immediate.
\end{proof}

\PropertyTruncateTree*

\begin{proof}
    \underline{Proof of (1).} By \Cref{def:truncated-coupling-tree},  each infeasible node is a leaf in $\+T$.
    Formally, $V(\+T) \setminus \+V \subseteq \+L$.
    Thus,  $V(\+T) \setminus \+L \subseteq \+V$.
    Therefore, $V(\+T) \setminus \+L \subseteq \+V\setminus \+L$.
    Combined with $\+V\subseteq V(\+T)$, we have $V(\+T) \setminus \+L = \+V\setminus \+L$.
    \vspace{0.2cm}
    
    \underline{Proof of (2).} 
    Let $(\sigma_0,\tau_0,\seqS_0,v_0,L_0),\cdots,(\sigma_t,\tau_t,\seqS_t,v_t,L_t)$ be a path from the root to any leaf in $\+T$, where $(\sigma_0,\tau_0,\seqS_0,v_0,L_0)$ is the root $(\sigma_\bot, \tau_\bot, \varnothing, v_\bot, 0)$ and $(\sigma_t,\tau_t,\seqS_t,v_t,L_t)$ is the leaf.
    At first, we prove the bound on the depth of $\+T$.
    According to ~\Cref{def:truncated-coupling-tree}, for each $0\leq i <t$, one can verify that either $L_{i+1}= L_{i} + 1$ or the following holds:
    \[L_{i+1} = L_{i}, v_{i+1} = v_{i}, \Lambda(\sigma_{i+1}) = \Lambda(\sigma_{i})\cup \{e_{i+1}\}, e_{i}\in \Lambda(\sigma_{i}) \text{ for some } e_{i}\in E_{v_i},e_{i+1}\in E^{\sigma_i}_{v_i}.\]
    Thus, for any $j>i$ where $L_j = L_i$, by induction
    one can verify that  
    $\Lambda(\sigma_{j}) = \Lambda(\sigma_{i})\cup S$ where $\abs{S} = j-i$ and $S\subseteq E^{\sigma_i}_{v_i}$. 
    In addition, by $E^{\sigma_i}_{v_i}\subseteq E_{v_i} \setminus \{e_i\}$, we have $\abs{E^{\sigma_i}_{v_i}}\leq \abs{E_{v_i}\setminus \{e_i\}} = \Delta - 1$.
    Thus, we have $j - i= \abs{S} \leq \Delta - 1$.
    Therefore, $j\leq i+\Delta - 1$.
    Thus, for each $0\leq i <t$ and each $i+\Delta \leq k\leq t$, we have $L_{k}>L_{i}$.
    Therefore, $L_{t}\geq 0 + \lfloor t/\Delta\rfloor$.
    In addition, by ~\Cref{def:truncated-coupling-tree} we have $L_t\leq \ell$.
    Thus, we have $t\leq \Delta\ell$.
    Therefore, the depth of $\+T$ is no more than $\Delta\ell$.

    Now we prove the conclusion $\abs{\Lambda(\sigma)} = \abs{\Lambda(\tau)} \leq \Delta \ell + 1$ for each node $(\sigma,\tau,\seqS,v,L)\in \+T$.
    By $\abs{\Lambda(\sigma_{0})} = 1$ and $\abs{\Lambda(\sigma_{i+1})} = \abs{\Lambda(\sigma_{i})} + 1$ for each 
    $0\leq i <t \leq \Delta\ell$, we have $\abs{\Lambda(\sigma_{i+1})} \leq \Delta\ell + 1$.
    Combined with \Cref{condition-sigma-tau},
    we also have $\abs{\Lambda(\tau_{i+1})} = \abs{\Lambda(\sigma_{i+1})} \leq \Delta\ell + 1$. 
    The conclusion is proved.

    In the next, we prove the bound on the degree of $\+T$.
    For each $0\leq i <t$, assume \emph{w.l.o.g.} ${\!{Ham}\left(\sigma_i,{E_{v_i}}\right)}<{\!{Ham}\left(\tau_i,{E_{v_i}}\right)}$.
    According to ~\Cref{def:truncated-coupling-tree},     
    one can verify that there exists some $e\in E^{\sigma_i}_{v_i}$ such that
    \[\seqS_{i+1} = \seqS_{i}\circ e, (\sigma_{i+1},\tau_{i+1})\in \{(\sigma_i\land(e\leftarrow 0),\tau_i\land(e\leftarrow 0)),(\sigma_i\land(e\leftarrow 1),\tau_i\land(e\leftarrow 0)),(\sigma_i\land(e\leftarrow 1),\tau_i\land(e\leftarrow 1))\}.\]
    Moreover, by \Cref{condition-sigma-tau} we have $v_{i+1} = v(\sigma_{i+1},\tau_{i+1})$ and
    $L_{i+1} = L(\sigma_{i+1},\tau_{i+1})$.
    Thus, by $\abs{E^{\sigma_i}_{v_i}}\leq \abs{E_{v_i}}\leq \Delta$, we have $(\sigma_{i+1},\tau_{i+1},\seqS_{i+1},v_{i+1},L_{i+1})$ has at most $3\Delta$ possibilities for each fixed  $(\sigma_{i},\tau_{i},\seqS_{i},v_{i},L_{i})$.
    Therefore, the degree of $\+T$ is no more than $3\Delta$.
    
    Finally, $\abs{V(\+T)}\leq \left(3\Delta\right)^{\Delta \ell + 1}$ is immediate by the depth of $\+T$ is at most $\Delta \ell$ and the degree is at most $3\Delta$.
 
    \vspace{0.2cm}
    \underline{Proof of (3).} For each node $(\sigma, \tau, \seqS,v,L) \in \+L_{\!{good}}\cap \+V$, 
    define a mapping $g:\left\{x\mid x\in \sigma\right\}\rightarrow \{0,1\}^E$ as follows.
    For every $x \in \sigma$, 
    \begin{align}\label{eq-def-g}
        (g(x))(e) \triangleq \begin{cases}
            \tau(e) & e \in \Lambda(\tau) \\
             x(e)& \mbox{otherwise}
        \end{cases}\;.
    \end{align}
    We claim that $g(\cdot)$ is a bijection between $\left\{x\mid x\in\sigma\right\}$ and $\left\{y\mid x\in\tau\right\}$.
    Because for each $x\in \sigma$, by \eqref{eq-def-g} we have $g(x)\in \tau$.
    In addition, by \Cref{condition-sigma-tau} we have $\Lambda(\sigma) = \Lambda(\tau)$. 
    Thus, one can also verify that for each $y\in \tau$, there exists a unique $x\in \sigma$ such that $g(x) = y$.  

    By \Cref{condition-sigma-tau}, we have $\abs{\sigma(E_u)} = \Ham(\sigma, E_u) = \Ham(\tau, E_u) = \abs{\tau(E_u)}$ for each $u \in V \setminus \set{v}$.
     Thus, we have
     \begin{equation*}
      \begin{aligned}\label{eq-xemu-eq-gxemu}
   & \abs{x(E_u)} \\
\left(\text{by $x\in \sigma$}\right) \quad   = &\abs{\sigma(E_u)} + \sum_{e\in E_u\setminus \Lambda(\sigma)}x(e) \\
\left(\text{by $\abs{\sigma(E_u)} = \abs{\tau(E_u)}, \Lambda(\sigma) = \Lambda(\tau)$}\right) \quad  = & \abs{\tau(E_u)} + \sum_{e\in E_u\setminus \Lambda(\tau)}x(e) \\
\left(\text{by \eqref{eq-def-g}}\right) \quad  = & \abs{(g(x))(E_u)}.
    \end{aligned}
    \end{equation*}
    Therefore, we have 
    \begin{align}\label{eq-mapping-g}
            \sum_{x\in \sigma} \prod_{u \in V\setminus \{v\}} f_u\left(\abs{x ({E_u})}\right)            =\sum_{x\in \sigma} \prod_{u \in V\setminus \{v\}} f_u\left(\abs{(g(x))({E_u})}\right)
            =\sum_{y\in \tau} \prod_{u \in V\setminus \{v\}} f_u\left(\abs{y({E_u})}\right),
        \end{align}
    where the last equality is by that $g(\cdot)$ is a bijection between $\left\{x\mid x\in \sigma\right\}$ and $\left\{y\mid y\in \tau\right\}$.
    
    Meanwhile, by $(\sigma, \tau, \seqS,v,L) \in \+L_{\!{good}}\cap \+V$, 
    we have $L<\ell$ and $(\sigma, \tau, \seqS,v,L)$ is a feasible leaf in $\+T$.
    Combing with \Cref{def:truncated-coupling-tree},
    we have $E_v^{\sigma} = \emptyset$. 
    Therefore, $x(E_v) = \sigma(E_v)$ for each $x\in \sigma$.
    Thus, by \eqref{def-holant-mu} we have
    \begin{equation}\label{eq-z-mu-sigma}
    \begin{aligned}
        Z\cdot \mu(\sigma)&= \sum_{x\in \sigma} \prod_{u \in V} f_u\left(\abs{x ({E_u})}\right)=\sum_{x\in \sigma} f_v\left(\abs{x ({E_v})}\right)\prod_{u \in V\setminus v} f_u\left(\abs{x ({E_u})}\right)\\
        &=f_v\left(\abs{\sigma ({E_v})}\right)\sum_{x\in \sigma} \prod_{u \in V\setminus \{v\}} f_u\left(\abs{x ({E_u})}\right).
    \end{aligned}
    \end{equation}
    Similarly, by $E_v^{\sigma} = \emptyset$ and $\Lambda(\sigma) = \Lambda(\tau)$, we have $E_v^{\tau} = \emptyset$. 
    Therefore, $y(E_v) = \tau(E_v)$ for each $y\in \tau$.
    Thus, by \eqref{def-holant-mu} we have 
    \begin{equation}\label{eq-z-mu-tau}
      \begin{aligned}
        Z\cdot\mu(\tau)&= \sum_{y\in \tau} \prod_{u \in V} f_u\left(\abs{y ({E_u})}\right)=\sum_{y\in \tau} f_v\left(\abs{y ({E_v})}\right)\prod_{u \in V\setminus v} f_u\left(\abs{y ({E_u})}\right)\\
        &=f_v\left(\abs{\tau ({E_v})}\right)\sum_{y\in \tau} \prod_{u \in V\setminus \{v\}} f_u\left(\abs{y ({E_u})}\right).
    \end{aligned}
    \end{equation}
    Moreover, recall that $(\sigma,\tau,\seqS,v,L)$ is feasible.
    We have $\mu(\sigma)>0$. Combined with \eqref{eq-mapping-g}, \eqref{eq-z-mu-sigma} and \eqref{eq-z-mu-tau}, we have
    $$
        \frac{\mu(\tau)}{\mu(\sigma)} =  \frac{f_v\left(\abs{\tau(E_v)}\right)}{f_v\left(\abs{\sigma({E_v})}\right)}.
    $$
\end{proof}

\LinearConstraints*
\begin{proof}
\underline{Proof of (1).} 
    We prove $p^{\sigma}_{\sigma, \tau, \seqS}\in [0,1]$ by considering two separate cases:
    \begin{itemize}
    \item $(\sigma, \tau, \seqS)\in \+V$. By \eqref{eqn-marginal-all} and \eqref{eq-def-trp-perperty-sigma}, we have 
    \begin{align*}
        p^{\sigma}_{\sigma, \tau, \seqS} = \frac{\Pr[\!{cp}]{(\sigma,\tau, \seqS)}}{\mu^{\sigma_{\bot}}_{\seqS}(\sigma)}\leq 1.
    \end{align*}
    Moreover, by $\Pr[\!{cp}]{(\sigma,\tau, \seqS)}\geq 0$ and $\mu^{\sigma_{\bot}}_{\seqS}(\sigma)>0$,
    we also have $p^{\sigma}_{\sigma, \tau, \seqS}\geq 0$.
    Therefore, we have $p^{\sigma}_{\sigma, \tau, \seqS} \in [0,1]$.  Similarly, we also have $p^{\tau}_{\sigma, \tau, \seqS} \in [0,1]$. 
    \item $(\sigma, \tau, \seqS)\in V(\+T)\setminus \+V$.  we have $p^{\sigma}_{\sigma, \tau, \seqS} =  p^{\tau}_{\sigma, \tau, \seqS} = 0$.
    \end{itemize}
    In summary, we always have $p^{\sigma}_{\sigma, \tau, \seqS} \in [0,1]$. Similarly, one can also prove that 
    $p^{\tau}_{\sigma, \tau, \seqS},p^{\sigma}_{\sigma, \tau, \seqS,e},p^{\tau}_{\sigma, \tau, \seqS,e}\in [0,1]$.
    Moreover, note that
    $\Pr[\!{cp}]{(\sigma_\bot, \tau_\bot, \varnothing)} = 1$.
    Combining with $\mu^{\sigma_{\bot}}_{\varnothing}(\sigma_\bot)$ = 1,
    we have 
    \[p^{\sigma_\bot}_{\sigma_\bot, \tau_\bot, \varnothing} = \frac{\Pr[\!{cp}]{(\sigma_\bot, \tau_\bot, \varnothing)}}{\mu^{\sigma_{\bot}}_{\varnothing}(\sigma_\bot)} = \frac{1}{1} = 1 .\]
    Similarly, we also have $p^{\tau_\bot}_{\sigma_\bot, \tau_\bot, \varnothing} = 1$.

    
    \underline{Proof of (2).} 
    It is sufficient to prove 
    \begin{align}\label{eqn-inter-sum1-first}
    p^{\sigma}_{\sigma,\tau,\seqS} = \sum_{e \in E_v^{\sigma}} p^{\sigma}_{\sigma, \tau, \seqS, e}.
    \end{align}
    Similarly, one can also prove 
    \[
    p^{\tau}_{\sigma,\tau,\seqS}=\sum_{e \in  E_v^{\sigma}} p^{\tau}_{\sigma,\tau, \seqS, e}.
    \]
    Then \eqref{eqn-inter-sum1} is immediate.
    In the following, we prove \eqref{eqn-inter-sum1-first}.
    For each $(\sigma, \tau, \seqS)$ in $\+V\setminus \+L$ and $e \in E_v^{\sigma}$ where $v = v(\sigma,\tau)$, we claim that 
    \begin{align}\label{eq-pr-stsvl-sum-stsvle}
    \Pr[\!{cp}]{(\sigma, \tau, \seqS)} = \sum_{e\in E_v^{\sigma}}\Pr[\!{cp}]{(\sigma, \tau, \seqS,e)}.
    \end{align}  
    Combining with \eqref{eqn-marginal-all} and \eqref{eqn-marginal-inner}, \eqref{eqn-inter-sum1-first} is immediate. 
    At last, we prove \eqref{eq-pr-stsvl-sum-stsvle}, which completes the proof of \eqref{eqn-inter-sum1-first}.
     We prove \eqref{eq-pr-stsvl-sum-stsvle} by considering two separate cases.
    \begin{itemize}
    \item $\Pr[\!{cp}]{(\sigma, \tau, \seqS)} = 0$.
    By \eqref{eq-def-pro-stsvl} and \eqref{eq-def-pro-stsvle} we have
    \[\forall e\in E_v^{\sigma},\quad \Pr[\!{cp}]{(\sigma, \tau, \seqS, e)}\leq \Pr[\!{cp}]{(\sigma, \tau, \seqS)} = 0 .\]
    Therefore,
    \begin{align*}
    \Pr[\!{cp}]{(\sigma, \tau, \seqS)} = 0 = \sum_{e\in E_v^{\sigma}}\Pr[\!{cp}]{(\sigma, \tau, \seqS,e)}.
    \end{align*}  
    Thus, \eqref{eq-pr-stsvl-sum-stsvle} is immediate.
    \item $\Pr[\!{cp}]{(\sigma, \tau, \seqS)} > 0$. Assume \emph{w.l.o.g.} that ${\!{Ham}\left(\sigma, {E_{v}}\right)} < {\!{Ham}\left(\tau, {E_{v}}\right)}$.
    By \Cref{def:truncated-random-process},
    if $e$ is the first edge in $E_{v}^{\sigma}$ with 
            $\mu^{\sigma}_e(1) \geq \mu^{\tau}_e(1)$,
    then 
    \begin{align*}
\Pr[\!{cp}]{(\sigma, \tau, \seqS,e)} &= \Pr{\left(T > t\right)\land \left((\sigma, \tau, \seqS) = (\sigma_t, \tau_t, \seqS_t)\right)\land \left(\seqS_{t+1} =\seqS\circ e\right) }\\
&=\Pr{\left(T > t\right)\land \left((\sigma, \tau, \seqS) = (\sigma_t, \tau_t, \seqS_t)\right)}\\
&=\Pr[\!{cp}]{(\sigma, \tau, \seqS)}.
\end{align*}
    Otherwise, $\Pr[\!{cp}]{(\sigma, \tau, \seqS,e)} = 0$.
    Thus, \eqref{eq-pr-stsvl-sum-stsvle} is immediate.
    \end{itemize}

    \underline{Proof of (3).} It is sufficient to prove \eqref{eqn-inner-child-sum1}. Then \eqref{eqn-inner-child-sum2}, \eqref{eqn-inner-child-sum3} and \eqref{eqn-inner-child-sum4} can be proved similarly.
   In the following, we prove \eqref{eqn-inner-child-sum1}.

   Assume 
   ${\!{Ham}\left(\sigma, {E_v}\right)} < {\!{Ham}\left(\tau,{E_v}\right)}$ and $\abs{\seqS} =t$.
   Let $\sigma^0 = \sigma \land (e\gets 0)$, 
   $\sigma^1 = \sigma \land (e\gets 1)$, $\tau^0 = \tau \land (e\gets 0)$, 
   $\tau^1 = \tau \land (e\gets 1)$, $\seqS'=\seqS\circ e$.
   By \Cref{def:truncated-random-process}, under the condition that $(\sigma_t,\tau_t,\seqS_t) = (\sigma,\tau,\seqS)$ and the chosen edge at this state is $e$, we have $\mu^{\sigma_t}_e(1) \geq \mu^{\tau_t}_e(1)$.
   Thus, for each $(\sigma_e,\tau_e)$ sampling from an optimal coupling of $(\mu_e^{\sigma_t},\mu_e^{\tau_t})$,
   we have 
   \[\Pr{\sigma_e = \tau_e = 0} = \mu^{\sigma_t}_e(0) = \mu^{\sigma}_e(0),\quad \Pr{(\sigma_e = \tau_e = 1)\lor(\sigma_e = 1,\tau_e = 0)} = \mu^{\sigma_t}_e(1) = \mu^{\sigma}_e(1).\]
   Formally,
   \[\Pr{(\sigma_{t+1},\tau_{t+1}) = (\sigma^0,  \tau^0)\mid \left(T > t\right)\land \left((\sigma, \tau, \seqS) = (\sigma_t, \tau_t, \seqS_t)\right)\land \left(\seqS_{t+1} =\seqS\circ e\right) } = \mu^{\sigma}_e(0),\]
   \[\Pr{(\sigma_{t+1},\tau_{t+1}) \text{ is } (\sigma^1,  \tau^1)  \text{ or } (\sigma^1,  \tau^0) \mid \left(T > t\right)\land \left((\sigma, \tau, \seqS) = (\sigma_t, \tau_t, \seqS_t)\right)\land \left(\seqS_{t+1} =\seqS\circ e\right) } = \mu^{\sigma}_e(1).\]
   Therefore, we have 
   \begin{equation}\label{eq-expansion-tplus1-te}
   \begin{aligned}
   &\symbolwidth \Pr{\left(T> t\right)\land \left((\sigma^0, \tau^0, \seqS') = (\sigma_{t+1}, \tau_{t+1}, \seqS_{t+1})\right)} \\
   &=\Pr{\left(T > t\right)\land \left((\sigma, \tau, \seqS) = (\sigma_t, \tau_t, \seqS_t)\right)\land \left(\seqS_{t+1} =\seqS\circ e\right) }\cdot \mu^{\sigma}_e(0),
   \end{aligned}
   \end{equation}
    \begin{equation}\label{eq-expansion-tplus1-te-one}
   \begin{aligned}
   &\symbolwidth \Pr{\left(T> t\right)\land \left( (\sigma_{t+1}, \tau_{t+1}, \seqS_{t+1}) \text{ is } (\sigma^1, \tau^1, \seqS') \text{ or } (\sigma^1, \tau^0, \seqS'\right)} \\
   &=\Pr{\left(T > t\right)\land \left((\sigma, \tau, \seqS) = (\sigma_t, \tau_t, \seqS_t)\right)\land \left(\seqS_{t+1} =\seqS\circ e\right) }\cdot \mu^{\sigma}_e(1),
   \end{aligned}
   \end{equation}

   At first, we prove 
   \[p^{\sigma}_{\sigma,\tau, \seqS,e} = p^{\sigma \land (e\gets 0)}_{\sigma \land (e\gets 0),\tau\land (e\gets 0), \seqS \circ e}.\]
   Note that 
   \begin{equation}\label{eq-pr-stsprime-pr-stse}
    \begin{aligned}
   &\symbolwidth \Pr[\!{cp}]{(\sigma^0, \tau^0, \seqS')}\\
   (\text{by \eqref{eq-def-pro-stsvl}})\quad&= \Pr{\left(T\geq t+1\right)\land \left((\sigma^0, \tau^0, \seqS') = (\sigma_{t+1}, \tau_{t+1}, \seqS_{t+1})\right)}\\
   (\text{by \eqref{eq-expansion-tplus1-te}})\quad&= \Pr{\left(T > t\right)\land \left((\sigma, \tau, \seqS) = (\sigma_t, \tau_t, \seqS_t)\right)\land \left(\seqS_{t+1} =\seqS\circ e\right) }\cdot \mu^{\sigma}_e(0)\\
   (\text{by \eqref{eq-def-pro-stsvle}})\quad   &=\Pr[\!{cp}]{(\sigma, \tau, \seqS,e)}\cdot \mu^{\sigma}_e(0).
   \end{aligned}
   \end{equation}
   Moreover, we also have 
   \begin{align}\label{eq-xsimmu-sigmaprime-sigma}
   \mu^{\sigma_{\bot}}_{\seqS'}(\sigma^0) =\mu^{\sigma_{\bot}}_{\seqS}(\sigma) \cdot \mu^{\sigma}_e(0).
   \end{align}
   Thus, we have 
   \begin{align*}
    &\symbolwidth p^{\sigma}_{\sigma,\tau, \seqS,e} \\
(\text{by \eqref{eqn-marginal-inner}})\quad    &= \Pr[\!{cp}]{(\sigma, \tau, \seqS,e)}/\mu^{\sigma_{\bot}}_{\seqS}(\sigma)\\
 (\text{by \eqref{eq-pr-stsprime-pr-stse}})\quad&= \Pr[\!{cp}]{(\sigma^0, \tau^0, \seqS')}/\left(\mu^{\sigma_{\bot}}_{\seqS}(\sigma^0)\cdot \mu^{\sigma}_e(0)\right) \\
 (\text{by \eqref{eq-xsimmu-sigmaprime-sigma}})\quad&= \Pr[\!{cp}]{(\sigma^0, \tau^0, \seqS')}/\mu^{\sigma_{\bot}}_{\seqS'}(\sigma^0) 
 \\(\text{by \eqref{eqn-marginal-all}})\quad&= p^{\sigma^0}_{\sigma^0,\tau^0, \seqS'}\\
 &= p^{\sigma \land (e\gets 0)}_{\sigma \land (e\gets 0),\tau\land (e\gets 0), \seqS \circ e}.
   \end{align*}

    In the next, we prove 
    \[
    p^{\sigma}_{\sigma, \tau, \seqS, e}=p^{\sigma\land (e\gets 1)}_{\sigma\land (e\gets 1),\tau\land (e\gets 0), \seqS\circ e} + p^{\sigma\land (e\gets 1)}_{\sigma\land (e\gets 1),\tau\land (e\gets 1),\seqS \circ e}.\]
    By \eqref{eq-def-pro-stsvl}, we have
    \begin{equation}\label{eq-pr-stsprime-pr-stse-11}
    \begin{aligned}
    \Pr[\!{cp}]{(\sigma^1, \tau^1, \seqS')}= \Pr{\left(T\geq t+1\right)\land \left((\sigma^1, \tau^1, \seqS') = (\sigma_{t+1}, \tau_{t+1}, \seqS_{t+1})\right)}
   \end{aligned}
   \end{equation}
   Similarly, we also have 
     \begin{equation}\label{eq-pr-stsprime-pr-stse-10}
    \begin{aligned}
   \Pr[\!{cp}]{(\sigma^1, \tau^0, \seqS')}
   = \Pr{\left(T\geq t+1\right)\land \left((\sigma^1, \tau^0, \seqS') = (\sigma_{t+1}, \tau_{t+1}, \seqS_{t+1})\right)}
   \end{aligned}
   \end{equation}
   Thus, we have
     \begin{equation}\label{eq-pr-stsprime-pr-stse-11-10}
    \begin{aligned}
   &\symbolwidth \Pr[\!{cp}]{(\sigma^1, \tau^1, \seqS')} + \Pr[\!{cp}]{(\sigma^1, \tau^0, \seqS')}\\
   (\text{by \eqref{eq-pr-stsprime-pr-stse-11}, \eqref{eq-pr-stsprime-pr-stse-10} and \eqref{eq-expansion-tplus1-te-one}})\quad&= \Pr{\left(T > t\right)\land \left((\sigma, \tau, \seqS) = (\sigma_t, \tau_t, \seqS_t)\right)\land \left(\seqS_{t+1} =\seqS\circ e\right) }\cdot \mu^{\sigma}_e(1)\\
   (\text{by \eqref{eq-def-pro-stsvle}})\quad   &=\Pr[\!{cp}]{(\sigma, \tau, \seqS,e)}\cdot \mu^{\sigma}_e(1).
   \end{aligned}
   \end{equation}
   Moreover, we also have 
   \begin{align}\label{eq-xsimmu-sigmaprime-sigma-1}
   \mu^{\sigma_{\bot}}_{\seqS'}(\sigma^1)
   = \mu^{\sigma_{\bot}}_{\seqS}(\sigma) \cdot \mu^{\sigma}_e(1).
   \end{align}
   Thus, we have 
   \begin{align*}
    &\symbolwidth p^{\sigma}_{\sigma,\tau, \seqS,e} \\
(\text{by \eqref{eqn-marginal-inner}})\quad    &= \Pr[\!{cp}]{(\sigma, \tau, \seqS,e)}/\mu^{\sigma_{\bot}}_{\seqS}(\sigma)\\
 (\text{by \eqref{eq-pr-stsprime-pr-stse-11-10}})\quad&= \left(\Pr[\!{cp}]{(\sigma^1, \tau^1, \seqS')}+\Pr[\!{cp}]{(\sigma^1, \tau^0, \seqS')}\right)/\left(\mu^{\sigma_{\bot}}_{\seqS}(\sigma) \cdot \mu^{\sigma}_e(1)
 \right)\\
 (\text{by \eqref{eq-xsimmu-sigmaprime-sigma-1}})\quad&= \left(\Pr[\!{cp}]{(\sigma^1, \tau^1, \seqS')}+\Pr[\!{cp}]{(\sigma^1, \tau^0, \seqS')}\right)/\mu^{\sigma_{\bot}}_{\seqS'}(\sigma^1)
 \\(\text{by \eqref{eqn-marginal-all}})\quad&= p^{\sigma^1}_{\sigma^1,\tau^1, \seqS'}+p^{\sigma^1}_{\sigma^1,\tau^0, \seqS'}\\
 &=p^{\sigma\land (e\gets 1)}_{\sigma\land (e\gets 1),\tau\land (e\gets 0), \seqS\circ e} + p^{\sigma\land (e\gets 1)}_{\sigma\land (e\gets 1),\tau\land (e\gets 1),\seqS \circ e}.
   \end{align*}
   
 \underline{Proof of (4).} By \eqref{eqn-marginal-all}, we have 
\begin{align*}
    p^{\tau}_{\sigma, \tau, \seqS} \cdot \frac{\mu_{e_{\bot}}(1)}{\mu_{e_{\bot}}(0)}\cdot \frac{ \mu(\tau)}{ \mu(\sigma)} = \frac{\Pr[\!{cp}]{(\sigma, \tau, \seqS)}\cdot\mu_{e_{\bot}}(1)\cdot\mu(\tau)}{\mu^{\tau_{\bot}}_{\seqS}(\tau)\cdot\mu_{e_{\bot}}(0) \cdot\mu(\sigma)}= \frac{\Pr[\!{cp}]{(\sigma, \tau, \seqS)}\cdot\mu_{e_{\bot}(1)}}{\mu(\sigma)}= \frac{\Pr[\!{cp}]{(\sigma,\tau, \seqS)}}{\mu^{\sigma_{\bot}}_{\seqS}(\sigma)}= p^{\sigma}_{\sigma, \tau, \seqS}.
\end{align*}
\end{proof}

\CouplingError*
\begin{proof}
    Given any $(\sigma, \tau, \seqS) \in \+V\setminus \+L$, let $v = v(\sigma,\tau)$. 
    By \Cref{def:truncated-coupling-tree} and  $(\sigma, \tau, \seqS) \not \in \+L$, we have $(\sigma,\tau,\seqS)$ is feasible and $E^{\sigma}_v \neq \emptyset$.
    Assume $\emph{w.l.o.g.}$ ${\!{Ham}\left(\sigma, {E_{v}}\right)} < {\!{Ham}\left(\tau, {E_{v}}\right)}$.
    Let $t$ denote $\abs{\seqS}$ and $r$ denote $\abs{E^{\sigma}_v}$. 
    Define a sequence of edges $e_1,e_2,\cdots,e_r\in E^{\sigma}_v$ recursively as follows.
    For each $0\leq i<r$, let $e_{i+1}$ be the first edge in $E_{v}^{\sigma^{i}}$ with $\mu^{\sigma^{i}}_{e_{i+1}}(1) \geq \mu^{\tau^{i}}_{e_{i+1}}(1)$ where 
    \[\sigma^i \triangleq \sigma\land(e_1\leftarrow 0)\land \cdots\land (e_{i}\leftarrow 0), \quad \tau^i \triangleq \tau\land(e_1\leftarrow 0)\land \cdots\land (e_{i}\leftarrow 0).\]
    Specifically, we have $\sigma^0 = \sigma$ and $\tau^0 = \tau$.
    We remark that there must be an edge $e\in E_{v}^{\sigma^{i}}$ with $\mu^{\sigma^{i}}_{e}(1) \geq \mu^{\tau^{i}}_{e}(1)$, by 
    $\!{Ham}\left(\sigma^i, {E_v}\right) = \!{Ham}\left(\sigma, {E_v}\right) < \!{Ham}\left(\tau, {E_v}\right) = \!{Ham}\left(\tau^i, {E_v}\right)$,
    $\abs{E_{v}^{\sigma^{i}}} = \abs{E_{v}^{\sigma}} - i = r - i>0$ and the third item of \Cref{prop:coupling-correctness}. 
    Thus, $e_{i+1}$ is well-defined.
    For each $0\leq i<r$, Define 
    \[\seqS^i \triangleq \seqS\circ e_1\circ\cdots\circ e_{i-1}.\]
    Specifically, we have $\seqS^0 = \seqS$.
    
    We claim that 
    \begin{align}\label{eq-sigmar-sigma}
     \Pr[\!{cp}]{(\sigma^r, \tau^r, \seqS^r)} 
    \geq \Pr[\!{cp}]{(\sigma, \tau, \seqS)}\cdot B.
        \end{align}
    Meanwhile, by $(\sigma,\tau,\seqS)$ is feasible and $(\sigma,\tau,\seqS)\in V(\+T)$, we have $(\sigma,\tau,\seqS)\in \+V$. 
    By $(\sigma,\tau,\seqS)$ is feasible, $\sigma^r = \sigma\land  (E^{\sigma}_v \gets \boldsymbol{0}),\tau^r = \tau\land  (E^{\sigma}_v \gets \boldsymbol{0})$ and  \Cref{lem:marginal-bound}, one can also verify that 
    $(\sigma^r, \tau^r, \seqS^r)$ is also feasible.
    Moreover, by the condition of this lemma, we have 
    $(\sigma,\tau,\seqS)\in V(\+T)$ and $L(\sigma, \tau)<\ell$.
    Combined with \Cref{def:truncated-coupling-tree},
    one can also verify that $(\sigma^r, \tau^r, \seqS^r)\in V(\+T)$
    by induction.
    Therefore, we also have $(\sigma^r, \tau^r, \seqS^r)\in \+V$.
    Thus, we have 
    \begin{align*}
        &\quad p^{\sigma^r}_{\sigma^r, \tau^r, \seqS^r}\\
    (\text{by $(\sigma^r, \tau^r, \seqS^r)\in \+V$ and \eqref{eqn-marginal-all}}) \quad   &= \Pr[\!{cp}]{(\sigma^r,\tau^r, \seqS^r)}/ \mu^{\sigma_{\bot}}_{\seqS^r}(\sigma^r)\\
        &\geq \Pr[\!{cp}]{(\sigma^r,\tau^r, \seqS^r)}/ \mu^{\sigma_{\bot}}_{\seqS}(\sigma)\\
    (\text{by \eqref{eq-sigmar-sigma}})\quad    &\geq B\cdot \Pr[\!{cp}]{(\sigma,\tau, \seqS)}/ \mu^{\sigma_{\bot}}_{\seqS}(\sigma)\\
    (\text{by $(\sigma, \tau, \seqS)\in \+V$ and \eqref{eqn-marginal-all}}) \quad   &= B\cdot p^{\sigma}_{\sigma, \tau, \seqS}
    \end{align*}
    In addition, by $(\sigma^r, \tau^r, \seqS^r)\in V(\+T)$ and $\sigma^r = \sigma\land  (E^{\sigma}_v \gets \boldsymbol{0}),\tau^r = \tau\land  (E^{\sigma}_v \gets \boldsymbol{0})$, we have 
    $(\sigma^r, \tau^r, \seqS^r)\in \+D$ by \Cref{def-notation-v-tct}.
    Thus, we have $\+D\neq \emptyset$ and 
    \[\sum_{(\sigma', \tau', \seqS') \in \+D} p^{\sigma'}_{\sigma', \tau', \seqS'} \geq p^{\sigma^r}_{\sigma^r, \tau^r, \seqS^r} \geq  B \cdot p^{\sigma}_{\sigma, \tau, \seqS}.\]
    Similarly, one can also prove 
    \[\sum_{(\sigma',\tau', \seqS') \in \+D} p^{\tau'}_{\sigma', \tau', \seqS} \geq  B \cdot p^{\tau}_{\sigma,\tau, \seqS}.\]     
    Therefore, the lemma is immediate.

    In the following, we prove \eqref{eq-sigmar-sigma}, which completes the proof of the lemma.    
    Recall the $\ell$-truncated 
    random process $P^{\!{cp}} \triangleq P^{\!{cp}}_\ell(\Phi, \sigma_\bot, \tau_\bot, v_\bot) = \set{(\sigma_t, \tau_t, \seqS_t, v_t, L_t)}_{0\leq t \leq T}$.
    Let $\+E_i$ denote the event $(T\geq t+i)\land \left((\sigma_{t+i},\tau_{i+i}, \seqS_{t+i})= (\sigma^i,\tau^i,\seqS^i)\right)$.
    Assume $\+E_i$ happens.
    We have 
    $L(\sigma_{t+i},\tau_{t+i}) = L(\sigma^i,\tau^i) = L(\sigma,\tau)<\ell$.
    Meanwhile, recalling that $\abs{E_{v}^{\sigma^{i}}}>0$,
    we have $E_{v}^{\sigma_{t+i}} = E_{v}^{\sigma^{i}} \neq \emptyset$.
    Combining $L(\sigma_{t+i},\tau_{t+i})<\ell, E_{v}^{\sigma_{t+i}}  \neq \emptyset$ with  \Cref{def:truncated-random-process}, 
    we have 
    \begin{itemize}
    \item[$\circ$] $T>t+i$,
    \item[$\circ$] $e_{i+1}$ is the selected edge at the state $(\sigma_{t+i},\tau_{t+i},\seqS_{t+i})= (\sigma^i,\tau^i,\seqS^i)$,
    \item[$\circ$] $(\sigma_{t+i+1}(e_{i+1}),\tau_{t+i+1}(e_{i+1}))$ is sampled from an optimal coupling of $\left(\mu^{\sigma_{t+i}}_{e_{i+1}},\mu^{\tau_{t+i}}_{e_{i+1}}\right) =\left(\mu^{\sigma^i}_{e_{i+1}},\mu^{\tau^i}_{e_{i+1}}\right)$. 
    \end{itemize}    
    Thus, under the condition that $\+E_i$ happens,
    $\+E_{i+1}$ happens only if $\sigma_{t+i+1}(e_{i+1})=\tau_{t+i+1}(e_{i+1}) = 0$. Formally, 
    \begin{align*}
    \Pr{\+E_{i+1}\mid \+E_{i}} 
    =& \Pr{(\sigma_{t+i+1}(e_{i+1})=\tau_{t+i+1}(e_{i+1}) = 0)\mid \+E_{i}} = \min\left\{\mu^{\sigma^{i}}_{e_{i+1}}(0), \mu^{\sigma^{i}}_{e_{i+1}}(0)\right\} = \mu^{\sigma^{i}}_{e_{i+1}}(0),
    \end{align*}
    where the last inequality is by $\mu^{\sigma^{i}}_{e_{i+1}}(1) \geq \mu^{\tau^{i}}_{e_{i+1}}(1)$.
    Thus, we have 
    \begin{align}\label{eq-pr-er}
         \Pr{\+E_{r}} = \Pr{\+E_{0}} \cdot\prod_{i=0}^{r-1}\Pr{\+E_{i+1}\mid \+E_{i}}
        =& \Pr{\+E_{0}} \cdot \prod_{i=0}^{r-1}\mu^{\sigma^{i}}_{e_{i+1}}(0)= \Pr{\+E_{0}} \cdot  \mu_{E_v^\sigma}^{\sigma}(\zero)
    \geq \Pr{\+E_{0}}\cdot B,
        \end{align}
where the last inequality is by \Cref{lem:marginal-bound}.
In addition, by \Cref{def-notation-trp} we have 
    \begin{align}\label{eq-prcp-e0}
    \Pr[\!{cp}]{(\sigma, \tau, \seqS)} =\Pr{\left(T\geq t\right)\land \left((\sigma, \tau, \seqS) = (\sigma_t, \tau_t, \seqS_t)\right)} = \Pr{\+E_{0}},
    \end{align}
    \begin{align}\label{eq-prcp-er}
    \Pr[\!{cp}]{(\sigma^r, \tau^r, \seqS^r)} = \Pr{\left(T\geq t+r\right)\land \left((\sigma_r, \tau_r, \seqS_r) = (\sigma_{t+r}, \tau_{t+r}, \seqS_{t+r})\right)} = \Pr{\+E_{r}}.
    \end{align}
Therefore, \eqref{eq-sigmar-sigma} is immediate by \eqref{eq-pr-er}, \eqref{eq-prcp-e0} and \eqref{eq-prcp-er}.
\end{proof}

\subsection{Omitted proofs for properties of LP} \label{sec:omitted-proofs-LP}
\BuildingCost*
\begin{proof}
    To prove this lemma, it is sufficient to show that the LP can be constructed in time $\poly\left(\Delta^{\Delta \ell}\right)$. 
    Thus, the total number of variables and constraints in the LP is no more than $\poly\left(\Delta^{\Delta \ell}\right)$.
    Therefore, the feasibility of the LP can be checked in time $\poly\left(\Delta^{\Delta \ell}\right)$, because LP has polynomial-time algorithms.

    In the following, we show the LP can be constructed in time $\poly\left(\Delta^{\Delta \ell}\right)$. 
    By \Cref{lem:partial-assignment-feasibility},
    the feasibility of each node $(\sigma,\tau,\seqS)\in V(\+T)$ can be checked within time $O(\abs{\Lambda(\sigma)}+\abs{\Lambda(\tau)})$.
    Combined with Item~\eqref{item:CT-property-size} in~\Cref{prop:property-of-truncated-tree}, we have 
    $O(\abs{\Lambda(\sigma)}+\abs{\Lambda(\tau)})\leq O(\Delta\ell)$.
    Thus, the feasibility of each node in $\+T$ can be checked within time $O(\Delta\ell)$.
    In addition, by Item~\eqref{item:CT-property-size} of~\Cref{prop:property-of-truncated-tree}, 
    we have $\abs{V(\+T)}\leq \left(3\Delta\right)^{\Delta \ell + 1}$.
    Thus, by \Cref{def:truncated-coupling-tree}, $\+T$ can be constructed within cost $O\left(\Delta \ell \cdot \BuildTime\right) = \BuildTime$,
    where the term $\Delta \ell$ is due to the cost of checking the feasibility of a node, and the term
    $\BuildTime$ is due to the size of $\+T$.
    In the process of constructing $\+T$,
    one can also obtain the set $V(\+T),\+V,\+L$ and $\+L_{\good}$.
    

    Note that the number of variables in the LP is no more than $4\abs{V(\+T)}$, because each pair of variables $\widehat{p}_{\sigma, \tau, \seqS}^{\sigma}$ and $\widehat{p}_{\sigma, \tau, \seqS}^{\tau}$ are corresponding to a node $(\sigma,\tau,\seqS)\in V(\+T)$,
    and each pair of variables $\widehat{p}_{\sigma,\tau,\seqS,e}^{\sigma}$ and $\widehat{p}_{\sigma,\tau,\seqS,e}^{\tau}$ where $e\in E_{v(\sigma,\tau)}^{\sigma}$
    are corresponding to a node $(\sigma\land(e\leftarrow 0),\tau\land(e\leftarrow 0),\seqS\circ e)\in V(\+T)$.
    Combined with $\abs{V(\+T)}\leq \left(3\Delta\right)^{\Delta \ell + 1}$, we have
    Constraint~\ref{item-first-LP} can be constructed in time $\BuildTime$.
    Similarly, one can also verify that Constraints~\ref{item-second-LP}, \ref{item-third-LP} and \ref{item-sixth-LP} can be constructed in time $\BuildTime$.
    Now we turn to Constraint~\ref{item-forth-LP}.
    For any $(\sigma, \tau, \seqS) \in \+L_{\!{good}} \cap \+V$, by Item~\eqref{item:CT-property-ratio} in~\Cref{prop:property-of-truncated-tree}, 
    we have $\mu(\tau)/\mu(\sigma) = f_v\left(\abs{\tau(E_v)}\right)/f_v\left(\abs{\sigma(E_v)}\right)$.
    Thus, one can verify that $\mu(\tau)/\mu(\sigma)$ can be calculated in time $O(\abs{E_v}) = O(\Delta)$.
    Thus Constraint~\ref{item-forth-LP} can be constructed in time 
    $O(\Delta)\cdot \abs{V(\+T)} = \poly\left(\Delta^{\Delta \ell}\right)$.
    At last, we consider Constraint~\ref{item-fifth-LP}.
    For each node $(\sigma,\tau,\seqS)\in \+V\setminus\+L$, we claim that the set $\+D(\sigma,\tau,\seqS)$ can be constructed in time $O(\Delta\ell\cdot V(\+T)) = \poly\left(\Delta^{\Delta \ell}\right)$. 
    Thus, Constraints~\ref{item-fifth-LP} can be constructed in time $\poly\left(\Delta^{\Delta \ell}\right)\cdot \poly\left(\Delta^{\Delta \ell}\right) = \poly\left(\Delta^{\Delta \ell}\right)$.
    In the following, we prove the claim that $\+D(\sigma,\tau,\seqS)$ can be constructed in time $O(\Delta\ell\cdot V(\+T))$.
    Note that $\+D(\sigma,\tau,\seqS)\subseteq V(\+T)$.
    In addition, for each $(\sigma',\tau',\seqS')\in V(\+T)$,
    by \Cref{def-notation-v-tct} and Item~\eqref{item:CT-property-size} in~\Cref{prop:property-of-truncated-tree}, one can check whether $(\sigma',\tau',\seqS')\in \+D(\sigma,\tau,\seqS)$ within time $O(\Lambda(\sigma) + \Lambda(\tau) + \Lambda(\sigma') + \Lambda(\tau') + 2\Delta) = O(\Delta\ell)$.
    Therefore, the time cost for constructing $\+D(\sigma,\tau,\seqS)$ is $O(\Delta\ell\cdot V(\+T))$.
    In summary, the LP can be constructed in time $\poly\left(\Delta^{\Delta \ell}\right)$. The lemma is proved.
\end{proof}

\LPFeasibility*
\begin{proof}
    Set the variables of the LP as follows:
    \begin{itemize}
        \item For each $(\sigma, \tau, \seqS) \in V(\+T)$, let $\widehat{p}_{\sigma, \tau, \seqS}^{\sigma} = p_{\sigma, \tau, \seqS}^{\sigma}$ and $\widehat{p}_{\sigma,\tau, \seqS}^{\tau} = p_{\sigma,\tau, \seqS}^{\tau}$.
        \item For each $(\sigma,\tau,\seqS)$ in $\+V\setminus \+L$ and  $e\in E_v^{\sigma}$ where $v = v(\sigma,\tau)$, let  $\widehat{p}_{\sigma,\tau,\seqS,e}^{\sigma} = p_{\sigma,\tau,\seqS,e}^{\sigma}$, and $\widehat{p}_{\sigma,\tau,\seqS,e}^{\tau} = p_{\sigma,\tau,\seqS,e}^{\tau}$.
    \end{itemize}
    By \Cref{prop:coupling-linear-constraint}, Items \ref{item-first-LP}-\ref{item-forth-LP} of the constraints are satisfied.
    By \Cref{lem:coupling-error}, Item \ref{item-fifth-LP} of the constraints is also satisfied.
    In addition, by \Cref{def-key-quantity}, we have $ p^{\sigma}_{\sigma, \tau, \seqS}= p^{\tau}_{\sigma, \tau, \seqS} = 0$ for each node $(\sigma, \tau, \seqS)\in V(\+T)\setminus \+V$.
    Thus, Item \ref{item-sixth-LP} of the constraints is also satisfied.
    In summary, all the constraints of the LP are satisfied.
    The lemma is proved.
\end{proof}

\RatioSum*

\begin{proof}
    To prove this lemma, it is sufficient to prove \eqref{eqn-sum-leftside}. Similarly, one can also prove \eqref{eqn-sum-rightside}.
    In the following, we prove \eqref{eqn-sum-leftside}.
    Recall that $\+L$ is the set of leaf nodes in $V(\+T)$, where $\+T$ is the $\ell$-truncated extended coupling tree. 
    Let $\+T^{i}$ be the subtree of $\+T$ obtained by deleting all nodes at a depth greater than $i$
    and $\+L^{i}$ be the leaf nodes of $\+T^{i}$.
    We will prove 
    \begin{align}\label{eqn-subtree-l-tct}
        \forall x\in \sigma_\bot,0\leq i\leq \ell, \quad \sum_{(\sigma,\tau, \seqS)\in \+L^i: \ x\in \sigma} \widehat{p}^{\sigma}_{\sigma,\tau,\seqS}=1,
    \end{align}
    Then \eqref{eqn-sum-leftside} is immediate.    
    We prove \eqref{eqn-subtree-l-tct} by induction on $i$.
    The induction basis is when $i = 0$.
    In this case, by \Cref{def:truncated-coupling-tree} we have $\+T^0$ is a tree with a unique node $(\sigma_\bot, \tau_\bot, \varnothing)$.
    In addition, by \Cref{def:induced-LP}, we have 
    $\widehat{p}^{\sigma_\bot}_{\sigma_\bot,\tau_\bot, \varnothing} = 1$.
    Therefore, 
     \begin{align*}
        \forall x\in \sigma_\bot, \quad \sum_{(\sigma,\tau, \seqS)\in \+L^0: \ x\in \sigma} \widehat{p}^{\sigma}_{\sigma,\tau,\seqS}=\widehat{p}^{\sigma_\bot}_{\sigma_\bot,\tau_\bot, \varnothing} = 1.
    \end{align*}
    The base case is proved. For the induction step, we fix an $x\in \sigma_\bot$.    
    For each node $(\sigma,\tau,\seqS)\in \+L^i$,
    let $\+C \triangleq \+C((\sigma,\tau,\seqS))$ be the children of $(\sigma,\tau,\seqS)$ in $\+T^{i+1}$. 
    By $(\sigma,\tau,\seqS) \in \+L^{i}$, we have 
    $\+C\subseteq \+L^{i+1}$.
    Define 
    \begin{align}\label{def-cplus}
    \+C^{+} \triangleq \+C^{+}((\sigma,\tau,\seqS)) = \left(\+C\cup \{(\sigma,\tau,\seqS)\}\right)\cap \+L^{i+1}.
    \end{align}
    We claim that 
    \begin{align}\label{eq-extension-sum}
    \forall (\sigma,\tau,\seqS)\in \+L^i,\quad \widehat{p}^{\sigma}_{\sigma,\tau,\seqS}\cdot \id{x \in \sigma} = \sum_{(\sigma',\tau', \seqS')\in \+C^{+}} \widehat{p}^{\sigma'}_{\sigma',\tau',\seqS'}\cdot \id{\ x\in \sigma'}.
    \end{align}
    Thus, we have 
    \begin{align*}
        \sum_{(\sigma,\tau, \seqS)\in \+L^{i+1}} \widehat{p}^{\sigma}_{\sigma,\tau,\seqS}\cdot \id{\ x\in \sigma}&=\sum_{u \in \+L^{i}}\sum_{(\sigma',\tau', \seqS')\in \+C^{+}(u)} \widehat{p}^{\sigma'}_{\sigma',\tau',\seqS'}\cdot \id{ x\in \sigma'}\\
        &=\sum_{(\sigma,\tau, \seqS)\in \+L^{i}}\widehat{p}^{\sigma}_{\sigma,\tau,\seqS}\cdot \id{x \in \sigma}\\
        &=\sum_{(\sigma,\tau, \seqS)\in \+L^i:\  x\in \sigma} \widehat{p}^{\sigma}_{\sigma,\tau,\seqS}\\
        &=1,
    \end{align*}
    where the second equality is by \eqref{eq-extension-sum} and the last equality is by the induction hypothesis.
    This completes the induction step and \eqref{eqn-subtree-l-tct} is immediate.

    In the following, we prove \eqref{eq-extension-sum}, which completes the proofs of \eqref{eqn-subtree-l-tct} and the lemma. Fix a $(\sigma,\tau,\seqS)\in \+L^i$.  
    Let $v=v(\sigma,\tau)$ and $L=L(\sigma,\tau)$.
    If $x\not \in \sigma$, we have $x\not\in \sigma'$ for each $(\sigma',\tau',\seqS')\in \+C$.
    Thus, 
    \begin{align*}
    \sum_{(\sigma',\tau', \seqS')\in \+C^{+} } \widehat{p}^{\sigma'}_{\sigma',\tau',\seqS'}\cdot \id{x\in \sigma'}=\sum_{(\sigma',\tau', \seqS')\in \+C} \widehat{p}^{\sigma'}_{\sigma',\tau',\seqS'}\cdot\id{ x\in \sigma'} =0 = \widehat{p}^{\sigma}_{\sigma,\tau,\seqS}\cdot \id{x \in \sigma}.
    \end{align*}
    Then \eqref{eq-extension-sum} is proved.
    Otherwise, $x \in \sigma$.     
    By \Cref{def:truncated-coupling-tree},
    there are only two possibilities.
    \begin{itemize}
    \item If $L\geq \ell$ or $E_v^{\sigma}=\emptyset$ or $(\sigma, \tau, \seqS)$ is infeasible, then $(\sigma,\tau,\seqS)$ is a leaf node in $\+T$.
    Therefore, we have $(\sigma,\tau,\seqS)\in \+L^{i+1}$ and $\+C = \emptyset$.
    Thus, $\+C^+ = (\sigma, \tau, \seqS)$.
    We have 
    \begin{align*}
    \sum_{(\sigma',\tau', \seqS')\in \+C^{+}} \widehat{p}^{\sigma'}_{\sigma',\tau',\seqS'}\cdot \id{x\in \sigma'} = \widehat{p}^{\sigma}_{\sigma,\tau,\seqS} = \widehat{p}^{\sigma}_{\sigma,\tau,\seqS}\cdot \id{x \in \sigma}.
    \end{align*}
    Then \eqref{eq-extension-sum} is proved.    
    \item Otherwise, $(\sigma,\tau,\seqS)$ is not a leaf node in $\+T$. Combining with $(\sigma,\tau,\seqS) \in \+L^{i}$, we have 
    $(\sigma,\tau,\seqS) \not\in \+L^{i+1}$.
    Thus, by \eqref{def-cplus} we have $(\sigma,\tau,\seqS) \not\in \+C^{+}$.
    We claim that 
    \begin{align}\label{eq-extension-sum-edge}
     \forall e = \set{u,v} \in E_v^{\sigma}, \quad \widehat{p}^{\sigma}_{\sigma,\tau,\seqS,e}\cdot \id{x \in \sigma} = \sum_{(\sigma',\tau', \seqS\circ e)\in \+C } \widehat{p}^{\sigma'}_{\sigma',\tau',\seqS\circ e}\cdot \id{x\in \sigma'}.
    \end{align}
    In addition, by $(\sigma,\tau,\seqS)$ is not a leaf node in $\+T$ and \Cref{prop:property-of-truncated-tree}, we have 
    $(\sigma,\tau,\seqS)\in \+V\setminus \+L$.
    Moreover, by \Cref{def:truncated-coupling-tree}, we have each $(\sigma',\tau', \seqS')\in \+C$ satisfies $\seqS' = \seqS \circ e$ for some $e \in E_v^{\sigma}$. Formally,
    \begin{align}\label{eq-c-ce}
    \+C = \left\{(\sigma',\tau', \seqS\circ e)\mid (\sigma',\tau', \seqS\circ e)\in \+C,e\in E_v^{\sigma}\right\}.
    \end{align}
    Thus, we have 
    \begin{align*}
       &\symbolwidth \widehat{p}^{\sigma}_{\sigma, \tau, \seqS} \cdot \id{x \in \sigma} \\
       (\text{by $(\sigma,\tau,\seqS)\in \+V\setminus \+L$ and \eqref{eqn-hat-inter-sum1}})\quad &= \sum_{e \in  E_v^{\sigma}} \left(\widehat{p}^{\sigma}_{\sigma, \tau, \seqS, e}\cdot \id{x \in \sigma}\right)\\ 
       (\text{by \eqref{eq-extension-sum-edge}})\quad &= \sum_{e \in  E_v^{\sigma}}\sum_{(\sigma',\tau', \seqS\circ e)\in \+C} \widehat{p}^{\sigma'}_{\sigma',\tau',\seqS\circ e}\cdot\id{ x\in \sigma'}\\
       (\text{by \eqref{eq-c-ce}})\quad  &= \sum_{(\sigma',\tau', \seqS')\in \+C} \widehat{p}^{\sigma'}_{\sigma',\tau',\seqS'}\cdot\id{ x\in \sigma'}\\
(\text{by $(\sigma,\tau,\seqS) \not\in \+C^{+}$})\quad &= \sum_{(\sigma',\tau', \seqS')\in \+C^{+}} \widehat{p}^{\sigma'}_{\sigma',\tau',\seqS'}\cdot\id{ x\in \sigma'}.
    \end{align*}
    Then \eqref{eq-extension-sum} is immediate.
    
    In the following, we prove \eqref{eq-extension-sum-edge}, which completes the proof of \eqref{eq-extension-sum}.
    Recall that $x\in \sigma$, $(\sigma,\tau,\seqS)$ is not a leaf node in $\+T$, $(\sigma,\tau,\seqS) \in \+L^{i}$, $v=v(\sigma,\tau)$ and $L=L(\sigma,\tau)$.
    Fix an edge $ e = \set{u,v} \in E_v^{\sigma}$.
    Let $\sigma^{a} = \sigma \land (e \gets a)$ and $\tau^{a} = \tau \land (e \gets a)$ for each $a\in \{0,1\}$.    
    We prove \eqref{eq-extension-sum-edge} by considering two separate cases. 
        \begin{enumerate}[(i)]
            \item ${\!{Ham}\left(\sigma,{E_v}\right)}<{\!{Ham}\left(\tau,{E_v}\right)}$. In this case, by \Cref{def:truncated-coupling-tree}, $(\sigma,\tau,\seqS)$ has three children related to $e$ in $\+L^{i+1}$, i.e., $(\sigma^0, \tau^0, \seqS \circ e)$, $(\sigma^1, \tau^0, \seqS \circ e)$, and $(\sigma^1, \tau^1, \seqS \circ e)$ . Assume \emph{w.l.o.g.} $x(e) = 0$.
            We have 
            \[\left\{(\sigma',\tau', \seqS\circ e)\mid (\sigma',\tau', \seqS\circ e)\in \+C,x\in \sigma'\right\} = \{ (\sigma^0, \tau^0, \seqS \circ e)\}.\]          
            Combining with \eqref{eqn-hat-inner-child-sum1}, we have 
             \begin{align*}
            \widehat{p}^{\sigma}_{\sigma,\tau,\seqS,e}\cdot \id{x \in \sigma} = \widehat{p}^{\sigma}_{\sigma,\tau,\seqS,e}=\widehat{p}^{\sigma^0}_{\sigma^0,\tau^0, \seqS \circ e} = \sum_{(\sigma',\tau', \seqS\circ e)\in \+C} \widehat{p}^{\sigma'}_{\sigma',\tau',\seqS\circ e}\cdot \id{ x\in \sigma'}. 
            \end{align*}
            Then \eqref{eq-extension-sum-edge} is proved. \item ${\!{Ham}\left(\sigma,{E_v}\right)}\geq{\!{Ham}\left(\tau,{E_v}\right)}$. In this case, by \Cref{def:truncated-coupling-tree},  $(\sigma,\tau,\seqS)$ has three children related to $e$ in $\+L^{i+1}$, i.e., $(\sigma^0, \tau^0, \seqS \circ e)$, $(\sigma^0, \tau^1, \seqS \circ e)$, and $(\sigma^1, \tau^1, \seqS\circ e)$. Assume \emph{w.l.o.g.} $x(e) = 0$.
            We have 
            \[\left\{(\sigma',\tau', \seqS\circ e)\mid (\sigma',\tau', \seqS\circ e)\in \+C,x\in \sigma'\right\} = \{ (\sigma^0, \tau^0, \seqS \circ e),(\sigma^0, \tau^1, \seqS \circ e)\}.\]          
            Combining with \eqref{eqn-hat-inner-child-sum3}, we have 
             \begin{align*}
            \widehat{p}^{\sigma}_{\sigma,\tau,\seqS,e}\cdot \id{x \in \sigma} = \widehat{p}^{\sigma}_{\sigma,\tau,\seqS,e}=\widehat{p}^{\sigma^0}_{\sigma^0,\tau^0, \seqS \circ e} +\widehat{p}^{\sigma^0}_{\sigma^0,\tau^1, \seqS \circ e} = \sum_{(\sigma',\tau', \seqS\circ e)\in \+C} \widehat{p}^{\sigma'}_{\sigma',\tau',\seqS\circ e}\cdot \id{ x\in \sigma'}.  
            \end{align*}
            Then \eqref{eq-extension-sum-edge} is proved.
        \end{enumerate} 
    \end{itemize}
\end{proof}

\LPTruncatedError*
\begin{proof}
To prove this lemma, it is sufficient to prove \eqref{eqn-error1}.
Similarly, one can also prove \eqref{eqn-error2}.
In the following, we prove $\eqref{eqn-error1}$.
Recall the $\ell$-truncated extended coupling tree $\+T_{\ell}$ in \Cref{def:truncated-coupling-tree}, 
and the $\+L,\+L_{\!{good}},\+L_{\!{bad}},\+D(\cdot)$ related to $\+T_{\ell}$ in \Cref{def-notation-v-tct}.
In the following proof, we will use $\+L(\ell),\+L_{\!{good}}(\ell),\+L_{\!{bad}}(\ell),\+D_{\ell}(\cdot)$ to denote $\+L,\+L_{\!{good}},\+L_{\!{bad}},\+D(\cdot)$, such that
these concepts related to different $\ell$ can be distinguished.
To prove $\eqref{eqn-error1}$,
it is equivalent to prove
\begin{align}\label{eqn-sum-leftside-error-induction}
    \forall \ell\geq 0, \quad \sum_{(\sigma,\tau, \seqS)\in \+L_{\!{bad}}(\ell)} \widehat{p}^{\sigma}_{\sigma,\tau,\seqS} \cdot \mu^{\sigma_{\bot}}_{\seqS}(\sigma)\leq (1 - B^2)^{\ell}.
\end{align}
In the following, we prove \eqref{eqn-sum-leftside-error-induction} by induction on $\ell$.
The induction basis is when $\ell = 0$.
In this case, by \Cref{def:truncated-coupling-tree} we have $\+T_{0}$ is a tree with a unique node $(\sigma_\bot, \tau_\bot, \varnothing)$.
In addition, by \Cref{def:induced-LP}, we have 
$\widehat{p}^{\sigma_\bot}_{\sigma_\bot,\tau_\bot, \varnothing} = 1$.
Therefore, 
 \begin{align*}
    \sum_{(\sigma,\tau, \seqS)\in \+L_{\!{bad}}(\ell)} \widehat{p}^{\sigma}_{\sigma,\tau,\seqS}\cdot\mu^{\sigma_{\bot}}_{\seqS}(\sigma)=\widehat{p}^{\sigma_\bot}_{\sigma_\bot,\tau_\bot, \varnothing} = 1 = (1- B^2)^{0}.
\end{align*}
The base case is proved. For the induction step, 
let $\+S$ denote $\+L_{\!{bad}}(\ell)\setminus \+L(\ell+1)$.
We have 
\[\+S \subseteq V(\+T_{\ell+1})\setminus \+L(\ell+1)= \+V_{\ell+1}\setminus \+L(\ell+1),\]
where the last equality is by \Cref{prop:property-of-truncated-tree}.
Thus, by \Cref{def-notation-v-tct}, 
$\+D_{\ell+1}(\sigma,\tau,\seqS)$ is well-defined for each $(\sigma,\tau,\seqS)\in \+S$.
we claim that 
\begin{align}\label{eq-d-subset-1}
\+L(\ell)\setminus \+L_{\!{good}}(\ell+1) \subseteq \+S,
\end{align}
\begin{align}\label{eq-d-subset-2}
\forall (\sigma,\tau,\seqS)\in \+S, \quad \+D_{\ell+1}(\sigma,\tau,\seqS)\subseteq \+L_{\!{good}}(\ell+1)\setminus \+L(\ell),
\end{align}
\begin{align}\label{eq-d-subset-3}
\forall u=(\sigma,\tau,\seqS)\in \+S, \quad \widehat{p}^{\sigma}_{\sigma,\tau,\seqS}\mu^{\sigma_{\bot}}_{\seqS}(\sigma) - \left(\sum_{(\sigma',\tau',\seqS')\in \+D_{\ell+1}(u)} \widehat{p}^{\sigma'}_{\sigma',\tau',\seqS'}\cdot \mu^{\sigma_{\bot}}_{\seqS'}(\sigma') \right)\leq \left(1 - B^2\right)\cdot\widehat{p}^{\sigma}_{\sigma,\tau,\seqS}\mu^{\sigma_{\bot}}_{\seqS}(\sigma).
\end{align}
Thus, by \eqref{eq-d-subset-1},  \eqref{eq-d-subset-2}, \eqref{eq-d-subset-3} and the induction hypothesis (I.H.), we have
\begin{equation}\label{eq-d-diff}
\begin{aligned}
 &\quad \sum_{(\sigma,\tau, \seqS)\in \+L(\ell)} \widehat{p}^{\sigma}_{\sigma,\tau,\seqS}\cdot \mu^{\sigma_{\bot}}_{\seqS}(\sigma) - \sum_{(\sigma,\tau, \seqS)\in \+L_{\!{good}}(\ell+1)}\widehat{p}^{\sigma}_{\sigma,\tau,\seqS}\cdot \mu^{\sigma_{\bot}}_{\seqS}(\sigma)\\
\left(\text{by \eqref{eq-d-subset-1} and \eqref{eq-d-subset-2}}\right)\quad &\leq \sum_{(\sigma,\tau,\seqS)\in \+S}\widehat{p}^{\sigma}_{\sigma,\tau,\seqS}\cdot \mu^{\sigma_{\bot}}_{\seqS}(\sigma) - \sum_{u\in \+S}\ \sum_{(\sigma',\tau',\seqS')\in \+D_{\ell+1}(u)} \widehat{p}^{\sigma'}_{\sigma',\tau',\seqS'}\cdot \mu^{\sigma_{\bot}}_{\seqS'}(\sigma') \\
\left(\text{by \eqref{eq-d-subset-3}}\right)\quad &\leq
\left( 1 -B^2\right)\cdot\sum_{(\sigma,\tau,\seqS)\in \+S}\widehat{p}^{\sigma}_{\sigma,\tau,\seqS}\cdot \mu^{\sigma_{\bot}}_{\seqS}(\sigma)\\
\left(\text{by $\+S\subseteq \+L_{\!{bad}}(\ell)$}\right)
\quad &\leq
\left( 1 -B^2\right)\cdot\sum_{(\sigma,\tau,\seqS)\in \+L_{\!{bad}}(\ell)}\widehat{p}^{\sigma}_{\sigma,\tau,\seqS}\cdot \mu^{\sigma_{\bot}}_{\seqS}(\sigma)\\
\left(\text{by I.H.}\right)\quad &\leq \left(1-B^2\right)^{\ell+1}.
\end{aligned}
\end{equation}
Moreover, by \Cref{lem:ratio-identity}, we have
\begin{equation}\label{eq-lemma-ratio-identity-transform}
\begin{aligned}
    \sum_{(\sigma,\tau, \seqS)\in \+L(\ell)} \widehat{p}^{\sigma}_{\sigma,\tau,\seqS}\cdot \mu^{\sigma_{\bot}}_{\seqS}(\sigma) &= \sum_{(\sigma,\tau, \seqS)\in \+L(\ell)}\widehat{p}^{\sigma}_{\sigma,\tau,\seqS}\cdot \mu(\sigma)/\mu_{e_{\bot}}(1) = 1 \\
    &= \sum_{(\sigma,\tau, \seqS)\in \+L(\ell+1)}\widehat{p}^{\sigma}_{\sigma,\tau,\seqS}\cdot \mu(\sigma)/\mu_{e_{\bot}}(1) =  \sum_{(\sigma,\tau, \seqS)\in \+L(\ell+1)} \widehat{p}^{\sigma}_{\sigma,\tau,\seqS}\cdot \mu^{\sigma_{\bot}}_{\seqS}(\sigma)
\end{aligned}
\end{equation}
Therefore, we have
\begin{align*}
    \sum_{(\sigma,\tau, \seqS)\in \+L_{\!{bad}}(\ell+1)} \widehat{p}^{\sigma}_{\sigma,\tau,\seqS}\cdot \mu^{\sigma_{\bot}}_{\seqS}(\sigma)
    =&\sum_{(\sigma,\tau, \seqS)\in \+L(\ell+1)} \widehat{p}^{\sigma}_{\sigma,\tau,\seqS}\cdot \mu^{\sigma_{\bot}}_{\seqS}(\sigma) - \sum_{(\sigma,\tau, \seqS)\in \+L_{\!{good}}(\ell+1)}\widehat{p}^{\sigma}_{\sigma,\tau,\seqS}\cdot \mu^{\sigma_{\bot}}_{\seqS}(\sigma) \\
\left(\text{by \eqref{eq-lemma-ratio-identity-transform}}\right) \quad      =&\sum_{(\sigma,\tau, \seqS)\in \+L(\ell)} \widehat{p}^{\sigma}_{\sigma,\tau,\seqS}\cdot \mu^{\sigma_{\bot}}_{\seqS}(\sigma) - \sum_{(\sigma,\tau, \seqS)\in \+L_{\!{good}}(\ell+1)}\widehat{p}^{\sigma}_{\sigma,\tau,\seqS}\cdot \mu^{\sigma_{\bot}}_{\seqS}(\sigma)\\
\left(\text{by \eqref{eq-d-diff}}\right) \quad  \leq &\left(1-B^2\right)^{\ell+1}.
\end{align*}
This completes the induction step and \eqref{eqn-sum-leftside-error-induction} is immediate.

In the following, we prove \eqref{eq-d-subset-1},  \eqref{eq-d-subset-2} and \eqref{eq-d-subset-3}, which completes the proofs of \eqref{eqn-sum-leftside-error-induction} and the lemma. 
At first, we prove \eqref{eq-d-subset-1}.
Given any $(\sigma,\tau,\seqS) \in \+L_{\!{good}}(\ell)$, 
let $v = v(\sigma,\tau)$ and $L = L(\sigma,\tau)$.
By \Cref{def-notation-v-tct}, we have $L<\ell$.
Combining $(\sigma,\tau,\seqS)$ is a leaf node in $\+T_{\ell}$, $L<\ell$ with \Cref{def:truncated-coupling-tree},
we have either $E_v^{\sigma}=\emptyset$ or $(\sigma, \tau, \seqS)$ is infeasible.
Thus, $(\sigma, \tau, \seqS)$ is also a leaf node in $\+T_{\ell+1}$.
Formally, $(\sigma,\tau,\seqS)\in \+L(\ell+1)$.
Combined with $L<\ell$, 
we have 
$(\sigma, \tau, \seqS)\in \+L_{\!{good}}(\ell+1).$
Thus, we have 
$\+L_{\!{good}}(\ell)\subseteq \+L_{\!{good}}(\ell+1)$.
Therefore, 
$\+L(\ell)\setminus \+L_{\!{good}}(\ell+1)\subseteq \+L_{\!{bad}}(\ell)$.
In addition, given any $(\sigma,\tau,\seqS)\in \+L(\ell)\setminus \+L_{\!{good}}(\ell+1)$, we have $L(\sigma,\tau)\leq \ell < \ell+1$ by \Cref{def-notation-v-tct}.
Combined with $(\sigma,\tau,\seqS)\not \in \+L_{\!{good}}(\ell+1)$,
we have $(\sigma,\tau,\seqS)\not\in \+L(\ell+1)$.
Therefore, we have
$(\+L(\ell)\setminus \+L_{\!{good}}(\ell+1))\cap \+L(\ell+1) = \emptyset$.
In summary, we have $\+L(\ell)\setminus \+L_{\!{good}}(\ell+1) \subseteq \+L_{\!{bad}}(\ell)\setminus \+L(\ell+1)$.
Then, \eqref{eq-d-subset-1} is proved.

In the next, we prove \eqref{eq-d-subset-2}.
Given any $(\sigma,\tau,\seqS)\in \+S$, let $\+D_{\ell+1}$ denote 
$\+D_{\ell+1}(\sigma,\tau,\seqS)$ and $v$ denote $v(\sigma,\tau)$.
By $(\sigma,\tau,\seqS)\in \+S \subseteq \+L_{\!{bad}}(\ell)$,
we have $L(\sigma,\tau) = \ell$. Recall that $\+D_{\ell+1}\neq \emptyset$ by \Cref{lem:coupling-error}.
Fix an arbitrary $(\sigma',\tau',\seqS')\in \+D_{\ell+1}$.
By \Cref{condition-sigma-tau} and \Cref{def-notation-v-tct},
we have 
\[v' = v(\sigma',\tau')= v(\sigma \land (E^{\sigma}_v \gets \boldsymbol{0}),\tau \land (E^{\sigma}_v \gets \boldsymbol{0}))=v(\sigma,\tau)=v,\]
\[L' = \abs{\{e\in \Lambda(\sigma')\mid (e\neq e_{\bot})\land (\sigma'(e)\neq \tau'(e))\}} = \abs{\{e\in \Lambda(\sigma)\mid (e\neq e_{\bot})\land (\sigma(e)\neq \tau(e))\}} = L(\sigma,\tau) = \ell.\]
Thus, we have
$E^{\sigma'}_{v'} =E^{\sigma'}_{v} = \emptyset$.
Combined with \Cref{def:truncated-coupling-tree},
we have  $(\sigma',\tau',\seqS')$ is a leaf node in $\+T_{\ell+1}$.
Formally, $(\sigma',\tau',\seqS')\in \+L(\ell+1)$.
Combined with $L'=\ell<\ell+1$, we have 
$(\sigma',\tau',\seqS') \in \+L_{\!{good}}(\ell+1)$.
Recall that $(\sigma',\tau',\seqS')$ is an arbitrary node in $\+D_{\ell+1}$.
We have $\+D_{\ell+1}\subseteq \+L_{\!{good}}(\ell+1)$.
Moreover, by $(\sigma,\tau,\seqS)\in 
\+S = \+L_{\!{bad}}(\ell)\setminus \+L(\ell+1)$,
we have $(\sigma,\tau,\seqS)$ is not a leaf in $\+T_{\ell+1}$.
Combined with \Cref{def:truncated-coupling-tree}, we have 
$E^{\sigma}_{v} \neq \emptyset$.
Recalling that $E^{\sigma'}_{v} = \emptyset$,
we have $\sigma'\neq \sigma$.
Thus, $(\sigma',\tau',\seqS') \neq (\sigma,\tau,\seqS)$.
Combining with \Cref{def:truncated-coupling-tree} and $(\sigma,\tau,\seqS)\in \+L_{\!{bad}}(\ell)$,
we have $(\sigma',\tau',\seqS')\not \in \+T_{\ell}$.
Therefore, $(\sigma',\tau',\seqS')\not\in  \+L(\ell)$.
Thus, we have $\+D_{\ell+1}\cap \+L(\ell) = \emptyset$.
In summary, we have
$\+D_{\ell+1}\subseteq \+L_{\!{good}}(\ell+1)\setminus \+L(\ell)$.
Then, \eqref{eq-d-subset-2} is proved.

At last, we prove \eqref{eq-d-subset-3}.
Fix arbitrary $(\sigma,\tau,\seqS) \in \+S$ and  $(\sigma',\tau',\seqS')\in \+D_{\ell+1}$. 
Recall that $\+S \subseteq  \+V_{\ell+1}\setminus \+L(\ell+1)$.
Thus, we have $(\sigma,\tau,\seqS)$ is in $\+V_{\ell+1}$ and feasible.
Combined with \Cref{lem:marginal-bound},
we have 
$\mu^{\sigma_{\bot}}_{\seqS'}(\sigma') = \mu^{\sigma_{\bot}}_{\seqS}(\sigma)\cdot \mu_{E_v^\sigma}^{\sigma}(\zero) \ge B\cdot\mu^{\sigma_{\bot}}_{\seqS}(\sigma)$.
Therefore, we have 
\begin{align*}
\sum_{(\sigma',\tau',\seqS')\in \+D_{\ell+1}} \widehat{p}^{\sigma'}_{\sigma',\tau',\seqS'}\cdot \mu^{\sigma_{\bot}}_{\seqS'}(\sigma') 
\geq \sum_{(\sigma',\tau',\seqS')\in \+D_{\ell+1}} B\cdot\widehat{p}^{\sigma'}_{\sigma',\tau',\seqS'}\cdot \mu^{\sigma_{\bot}}_{\seqS}(\sigma) \geq B^2 \cdot \widehat{p}^{\tau}_{\sigma,\tau,\seqS}\mu^{\sigma_{\bot}}_{\seqS}(\sigma),
\end{align*}
where the last inequality is by $(\sigma,\tau,\seqS) \in \+S\subseteq  \+V_{\ell+1}\setminus \+L(\ell+1)$ and \eqref{eqn-hat-error-bound}. Thus, \eqref{eq-d-subset-3} is proved.
\end{proof}

\end{document}